\newcommand*\rfrac[2]{{}^{#1}\!/_{#2}}
\theoremstyle{plain}
\newtheorem{theorem}{Theorem}[section]
\newtheorem{lemma}[theorem]{Lemma}
\newtheorem{proposition}[theorem]{Proposition}
\newtheorem{corollary}[theorem]{Corollary}
\theoremstyle{definition}
\newtheorem{definition}[theorem]{Definition}
\newtheorem{remark}[theorem]{Remark}
\newcommand{\CC}{{\mathbb C}}
\newcommand{\RR}{{\mathbb R}}
\begin{document}

\title{Power spectrum and form factor in random diagonal matrices and integrable billiards}

\author{\hspace{-2cm}Roman Riser$^{1,2}$ and Eugene Kanzieper$^1$\footnote[1]{Corresponding author.}}

\address{\hspace{-2cm}$^1$~Department of Mathematics, Holon Institute of Technology, Holon 5810201, Israel\\
         \hspace{-2cm}$^2$~Department of Mathematics and Research Center for Theoretical Physics\\
         \hspace{-1.8cm} and Astrophysics, University of Haifa, Haifa 3498838, Israel
        }
\noindent\newline\newline
\begin{abstract}
Triggered by a controversy surrounding a universal behaviour of the power spectrum in quantum systems exhibiting regular classical dynamics,
we focus on a model of random diagonal matrices (RDM), often associated with the Poisson spectral universality class, and examine how the power spectrum and the form factor
get affected by two-sided truncations of RDM spectra. Having developed a nonperturbative description of both statistics, we perform their detailed asymptotic analysis to demonstrate explicitly how a traditional assumption (lying at the heart of the controversy) -- that the power spectrum is merely determined by the spectral form factor -- breaks down for truncated spectra. This observation has important consequences as we further argue that bounded quantum systems with integrable classical dynamics are described by heavily truncated rather than complete RDM spectra. High-precision numerical simulations of semicircular and irrational rectangular billiards lend independent support to these conclusions.
\end{abstract}
\newpage
\tableofcontents

\newpage
\section{Introduction}\label{Intro}

Since the second half of the XX century, a variety of statistical indicators have been devised to study fluctuations in spectra of bounded quantum systems, at both short- and long-range energy scales. For high-lying energy levels, these indicators, or spectral fluctuation measures, were found to exhibit an {\it identical} behavior in {\it different} quantum systems, thereby revealing the phenomenon of {\it spectral universality}~\cite{B-1987}. To observe it, an influence of system-specific mean level density has to be eliminated from sequences of measured energy levels $\{ \lambda_1,\dots,\lambda_N\}$. This is achieved by means of the unfolding procedure~\cite{M-2004} represented by the map
\begin{eqnarray} \label{unfolding_map}
    \varepsilon_\ell = \int_{-\infty}^{\lambda_\ell} d\lambda \varrho_N(\lambda),\quad \ell=1,\dots,N,
\end{eqnarray}
where $\varrho_N(\lambda)$ is the mean level density of the original (e.g.,~measured) spectrum. It is the unfolded energy levels $\{ \varepsilon_1,\dots,\varepsilon_N\}$ that may obey universal statistical laws as $N \rightarrow \infty$.

There is a broad consensus, based on a vast amount of experimental, numerical and theoretical evidence~\cite{GMGW-1998,S-1999,H-2001}, that a universal statistical behavior of quantum systems correlates with the nature -- chaotic or regular -- of their underlying classical dynamics. This gives rise to two paradigmatic universality classes in quantum chaology.

The {\it Wigner-Dyson} universality class accommodates generic quantum systems which are {\it fully chaotic} in the classical limit. According to the Bohigas-Giannoni-Schmit (BGS) conjecture \cite{BGS-1984}, spectral fluctuations of highly excited energy levels in such systems -- exhibiting long-range correlations
and local repulsion -- are governed by global symmetries rather than by system specialties and are accurately described by the random matrix theory~\cite{M-2004,PF-book}. Mostly built on numerical indications~\cite{MK-1979,CVG-1980,B-1981,BGS-1984}, the BGS conjecture has later been advocated within the field-theoretic~\cite{AASA-1996} and semiclassical~\cite{RS-2002} approaches. On the contrary, generic quantum systems whose classical dynamics is {\it completely integrable} belong to a different -- {\it Poisson} -- universality class as was first conjectured by Berry and Tabor \cite{BT-1977}. Spectral fluctuations therein are radically different from those in Wigner-Dyson spectra, with energy levels being completely uncorrelated.

Since the most commonly used spectral fluctuation measures~\cite{M-2004,PF-book} -- be it the distribution of level spacings~\cite{JMMS-1980} or that of the ratio of two consecutive level spacings~\cite{ABGR-2013}, the number variance~\cite{B-1988} or spectral rigidity~\cite{B-1985} -- probe spectral correlations on {\it either} local {\it or} global scales, a combination of {\it both} short- and long-range statistical indicators is often required to get a reliable information on the degree or type of chaoticity of a quantum system in question. This remark is even more pertinent, if not crucial, for analysis of quantum systems with mixed classical dynamics~\cite{GRRFSVR-2005} and for an accurate interpretation of experimental data in which the completeness of energy spectra is rather a rare situation~\cite{FKMMRR-2006,LBYBS-2018}.

In this context, we would like to draw reader's attention to an alternative measure of spectral fluctuations -- the {\it power spectrum} of eigenlevel sequences -- which is capable of probing the correlations between {\it both} nearby and distant eigenlevels. Introduced long ago by Odlyzko in his famous study~\cite{Od-1987} of the spacing distribution between nontrivial zeros of the Riemann zeta function, and re-invented fifteen years later by Rela\~{n}o and collaborators~\cite{RGMRF-2002}, it considers a sequence of discrete energy levels in a quantum system as a discrete time series, where indices of {\it ordered} eigenlevels play the role of discrete time.

\begin{definition}\label{def-01}
  Let $\{\varepsilon_1 \le \dots \le \varepsilon_N\}$ be a sequence of ordered unfolded eigenlevels, $N \in {\mathbb N}$, with the mean level spacing $\Delta$ and let
  $\langle \delta\varepsilon_\ell \delta\varepsilon_m \rangle$ be the covariance matrix of level displacements $\delta\varepsilon_\ell = \varepsilon_\ell - \langle \varepsilon_\ell\rangle$ from their mean $\langle \varepsilon_\ell\rangle=\ell \Delta$. A Fourier transform of the covariance matrix
\begin{eqnarray}\label{ps-def}
    S_N(\omega) = \frac{1}{N\Delta^2} \sum_{\ell=1}^N \sum_{m=1}^N \langle \delta\varepsilon_\ell \delta\varepsilon_m \rangle\, e^{i\omega (\ell-m)}, \quad \omega \in {\mathbb R}
\end{eqnarray}
is called the power spectrum of a sequence. Here, the angular brackets stand for an average over an ensemble of eigenlevel sequences.
\hfill $\blacksquare$
\end{definition}

\begin{remark}
Since the power spectrum is $2\pi$-periodic, real and even function in $\omega$,
\begin{eqnarray}\fl \qquad
S_N(\omega+2\pi) = S_N(\omega), \quad S_N^*(\omega) = S_N(\omega), \quad S_N(-\omega) = S_N(\omega),
\end{eqnarray}
it is sufficient to consider it in the interval $0 \le \omega \le \omega_{\rm Ny}$, where $\omega_{\rm Ny} = \pi$ is the Nyquist frequency.
\hfill $\blacksquare$
\end{remark}

\begin{remark}
It readily follows from Definition~\ref{def-01} that, by tuning the frequency $\omega$ in the power spectrum, one may attend to spectral correlations between either adjacent or distant eigenlevels. Indeed, at `large' frequencies, $\omega={\mathcal O}(N^0)$ yet below the Nyquist frequency, the distant eigenlevels barely contribute to the power spectrum; characterized by large values of $|\ell-m|$, they produce strongly oscillating terms in Eq.~(\ref{ps-def}) which effectively cancel each other. As the result, $S_N(\omega)$ is mainly shaped by correlations between the nearby levels. At low frequencies, $\omega \ll 1$, these oscillations are by far less pronounced thus making a contribution of distant eigenlevels increasingly important.
\hfill $\blacksquare$
\end{remark}

Although the power spectrum, as an alternative spectral statistics, was suggested quite some time ago~\cite{Od-1987,RGMRF-2002}, its {\it nonperturbative} theory~\cite{ROK-2017,ROK-2020} was formulated only recently. (For an earlier, heuristic approach exploiting a form factor approximation, see Ref.~\cite{FGMMRR-2004}.) In particular, in Ref.~\cite{ROK-2017}, the power spectrum of energy level fluctuations in {\it fully chaotic} quantum structures with broken time-reversal symmetry was determined by employing a random-matrix-theory approach. A universal, parameter-free prediction for the power spectrum expressed in terms of a fifth Painlev\'e transcendent was the main result of that study. The ideas underlying a random-matrix-theory calculation of Ref.~\cite{ROK-2017} were further developed in Ref.~\cite{ROK-2020}, where the power spectrum was studied in a more general setting which is applicable to a much broader class of quantum systems whose eigenspectra -- {\it not necessarily of the random-matrix-theory type} -- possess stationary level spacings.

In this note, we utilize a nonperturbative approach of Ref.~\cite{ROK-2020} to determine the power spectrum for generic quantum systems with {\it integrable} classical dynamics~\footnote{Recent report~\cite{PPK-2020} makes our previous~\cite{ROK-2020} and present study potentially relevant to the problem of universal correlations in spectra of quantum {\it many-body} localized systems. Notice that the main analytical result of Ref.~\cite{PPK-2020}, see their Eqs.~(8) and (9), is a particular case of the spectral form factor derived in Ref.~\cite{ROK-2020} for a model of eigenlevel sequences with identically distributed, uncorrelated level spacings, see Eqs.~(1.21) and (1.22) therein. Indeed, setting the characteristic function $\Psi_s(\tau)$ of level spacings of Ref.~\cite{ROK-2020} to $\Psi_s(\tau) = (1-2i\pi\tau)^{-1}$ describing exponentially fluctuating spacings, one obtains the result of Ref.~\cite{PPK-2020}.}. We argue that an adequate description of such systems, modelled by random diagonal matrices~\cite{G-1996}, must take into account the effect of {\it eigenspectra truncations} which are unavoidable in any realistic experimental and numerical setup.

The paper is organized as follows. In Section~\ref{PSSF}, we introduce the notions of random spectra with stationary~\cite{ROK-2020} and infinitely-stationary level spacings, and meticulously examine how truncations of such spectra affect the power spectrum and the form factor of a quantum system. Our analysis reveals that a much sought-after information about {\it individual} fluctuations of eigenlevels in truncated sequences gets hindered by their {\it collective} fluctuations. This observation leads us to formulate meaningful definitions of the power spectrum (Definition~\ref{def-01-truncated}) and of the form factor (Definition~\ref{def-01-K-tau-truncated}) for truncated eigenlevel sequences. Theorem~\ref{PS-stationary-truncated} for the power spectrum of truncated spectra is the main result of Section~\ref{PSSF}. Our treatment of both statistics here is not restricted to a particular spectral model or universality class; yet, it will later be used to address the issue of validity of the form factor approximation in the context of quantum systems with integrable classical dynamics.

In Section~\ref{rdm-section}, a general framework of Section~\ref{PSSF} is applied to a model of random diagonal matrices (RDM) often used to mimic spectral fluctuations in quantum systems exhibiting regular classical dynamics, see e.g. Ref.~\cite{G-1996}.

\begin{figure}
\includegraphics[width=\textwidth]{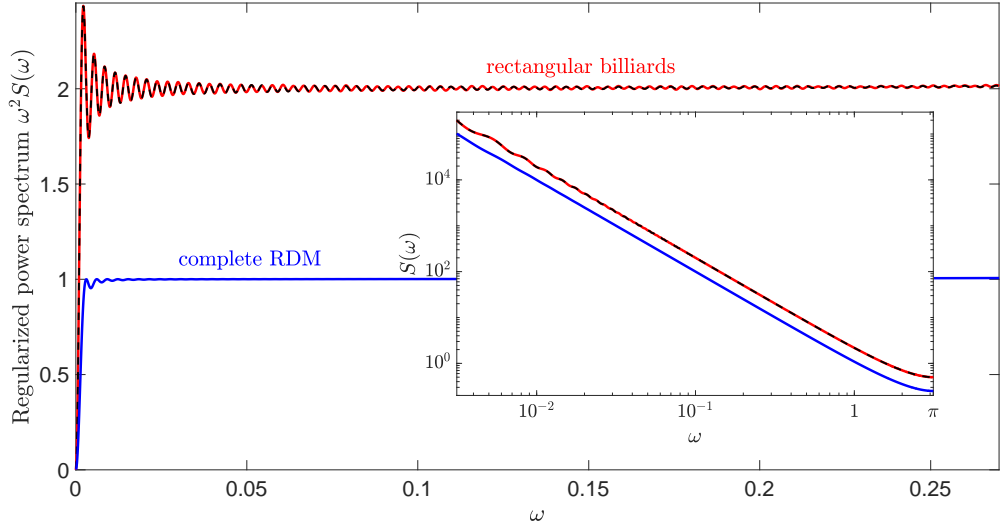}
\caption{Comparison of the regularized power spectrum for quantum irrational rectangular billiards (red solid line, numerical simulation, Section~\ref{IRBN}) with that for complete (blue solid line) and heavily truncated (black dashed line) spectrum of random diagonal matrices (theory, Section~\ref{SFF-sec}). {\it The power spectrum in rectangular billiards is seen to follow predictions for heavily truncated random diagonal matrices.} Oscillatory behavior of two upper curves is a fingerprint of inevitable finite-size effects originating from a fixed length $L$ of `measured' spectral sequences; in all cases, sequences with $L=2048$ were used. Inset: a log-log plot for the power spectra shown up to the Nyquist frequency $\omega_{\rm Ny}=\pi$; the same color legend is used.
}
\label{Plot_1}
\end{figure}

In Section~\ref{RDM-unf}, we introduce a procedure of spectral unfolding in the finite--$N$ random diagonal matrices, show that their ordered unfolded eigenlevels are beta-distributed (Proposition~\ref{prop-JPDF-ordered}), and prove infinite-stationarity of level spacings on the unfolded scale (Proposition~\ref{prop-inf}). These results hold true for RDM of arbitrary size $N$.

In Section~\ref{PS-RDM}, we evaluate the power spectrum of complete and truncated eigenlevel sequences drawn from an unfolded RDM spectrum. Both exact (Theorem~\ref{SNL-theorem}, finite--$N$) and asymptotic (Corollaries~\ref{PS-inf-L-omega} and \ref{PS-SL-trunc}, $N\rightarrow \infty$) expressions for the power spectrum constitute the main results of Section~\ref{PS-RDM}. Their implications are further discussed in the context of a quantitative description of the universal spectral fluctuations in generic quantum systems with regular classical geodesics. In paticular, we argue that heavily truncated -- rather than complete -- spectra of random diagonal matrices should be employed to adequately determine the power spectrum in realistic quantum systems. This leads us to re-confirm our previous result~\cite{ROK-2017,ROK-2020}
\begin{eqnarray}\label{main-01}
    S_{\rm int}(\omega) = \frac{1}{2} \times \frac{1}{\sin^2(\omega/2)}, \quad \omega_0 < \omega \le \pi
\end{eqnarray}
for the power spectrum of generic `integrable' quantum systems, away from the origin $\omega=0$ (here $\omega_0>0$ is an arbitrary small real number). Notice the factor $\rfrac{1}{2}$ which differs from the factor $\rfrac{1}{4}$ erroneously claimed in the earlier study~\cite{FGMMRR-2004} which used a so-called form factor approximation. We stress that the parameter free prediction Eq.~(\ref{main-01}) is a universal limit of a more general expression for the power spectrum [Eq.~(\ref{SL-omega}) of Corollary~\ref{PS-inf-L-omega}] which accounts for finite-size effects arising due to a finite length $L$ of `measured' spectral sequences. For readers' convenience, these conclusions are visualized in Fig.~\ref{Plot_1}, where the power spectrum of quantum irrational rectangular billiards is seen to follow the `$\rfrac{1}{2}$' (rather than
`$\rfrac{1}{4}$') law claimed in Eq.~(\ref{main-01}).

In Section~\ref{SFF-sec}, we focus on a nonperturbative analysis of the spectral form factor for complete and truncated eigenlevel sequences sampled from an unfolded, finite--$N$ RDM spectrum. The exact solution for the spectral form factor (Theorem~\ref{KNL-theorem}) is further analysed asymptotically (Theorem~\ref{KNL-infinite}) to pave the way for an in-depth discussion of a status of the form factor approximation. Theorem~\ref{KNL-theorem} and Theorem~\ref{KNL-infinite} are the main results of Section~\ref{SFF-sec}.

Such a discussion is held in Section~\ref{FFA-status}, where we compare a large--$N$ behavior of the spectral form factor and the power spectrum in various time and frequency scaling limits, in an attempt to scrutinize a heuristic relation Eq.~(\ref{FFA-heuristics}) claimed~\cite{FGMMRR-2004} to link the power spectrum to the spectral form factor. While we show that this relation does hold true for {\it complete} spectra of {\it infinite-dimensional random diagonal matrices} up to $\omega={\mathcal O}(N^0)$ as soon as $\omega \ll 1$, it becomes invalid for {\it truncated} spectra (of the length $L<N$). In the latter case, the validity of the form factor approximation is restricted to very low frequencies $\omega={\mathcal O}(L^{-1})$; it breaks down for $\omega={\mathcal O}(L^{-1/2})$ and certainly gives no access to the bulk of the power spectrum.

In Section~\ref{Num-Billiards}, we present a thorough numerical study of the power spectrum and the spectral form factor for quantum (i) semicircular and (ii) irrational rectangular billiards. The numerics lends unequivocal support to our theory which associates generic quantum systems possessing integrable classical dynamics with heavily truncated (rather than complete) spectra of random diagonal matrices.

Appendix~\ref{A-1} contains proofs of various asymptotic formulae for the Kummer confluent hypergeometric function ${}_1 F_1$ of imaginary argument, required for the large--$N$ analysis of the RDM spectral form factor, Section~\ref{SFF-sec}. To the best of our knowledge, explicit asymptotic results obtained in Appendix~\ref{A-1} are unavailable in the existing literature on special functions and thus can be of independent interest.

\section{Power spectrum and spectral form factor for truncated spectra with stationary spacings} \label{PSSF}
\fancyhead{} \fancyhead[RE,LO]{\textsl{\thesection \quad Power spectrum and spectral form factor for truncated spectra}}
\fancyhead[LE,RO]{\thepage}
\subsection{Random spectra with stationary and infinitely-stationary spacings}
To set up the scene, we first collect the most useful definitions and statements regarding random spectra with {\it stationary} and {\it infinitely-stationary} level spacings~\cite{ROK-2020}.

\begin{definition}[Stationarity]\label{def-stationary} Consider an ordered sequence of eigenlevels $\{0 \le \varepsilon_1 \le \dots \le \varepsilon_N\}$ with $N \in \mathbb{N}$. Let $\{s_1, \dots, s_N \}$ be the sequence of spacings between consecutive eigenlevels such that $s_\ell = \varepsilon_\ell - \varepsilon_{\ell-1}$ with $\ell=1,\dots,N$ and $\varepsilon_0=0$. The sequence of level spacings is said to be {\it stationary} if (i) the average spacing
\begin{eqnarray}\label{skd}
\langle s_\ell \rangle = \Delta
\end{eqnarray}
is independent of $\ell=1,\dots,N$ and (ii) the covariance matrix of {\it spacings} is of the Toeplitz type:
\begin{eqnarray}\label{toep}
    {\rm cov}(s_\ell, s_m) =  I_{|\ell-m|} - \Delta^2
\end{eqnarray}
for all $\ell,m=1,\dots,N$. Here, $I_n$ is a function defined for non-negative integers $n$.
\hfill $\blacksquare$
\end{definition}
\noindent
A necessary and sufficient condition for eigenlevel sequences to possess stationary level spacings, proven in Section 3 of Ref.~\cite{ROK-2020}, is formulated in Lemma \ref{Lemma-stat}.

\begin{lemma} \label{Lemma-stat} For $N \in {\mathbb N}$, let $\{0\le \varepsilon_1 \le \dots \le \varepsilon_N\}$ be an ordered sequence of unfolded eigenlevels such that $\langle
\varepsilon_1\rangle =\Delta$. An associated sequence of spacings between consecutive eigenlevels is \textbf{\textit{stationary}} if and only if
\begin{eqnarray}\label{L-1}
    \langle (\varepsilon_\ell - \varepsilon_m)^q\rangle = \langle \varepsilon_{\ell-m}^q \rangle
\end{eqnarray}
for $\ell>m$ and both $q=1$ and $q=2$.
\end{lemma}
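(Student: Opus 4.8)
The plan is to handle the two moment conditions in (\ref{L-1}) separately and to convert statements about level displacements into statements about spacings via the telescoping identities $\varepsilon_\ell-\varepsilon_m=\sum_{k=m+1}^\ell s_k$ and $\varepsilon_{\ell-m}=\sum_{k=1}^{\ell-m} s_k$, both of which rely on $\varepsilon_0=0$.

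First I would dispose of $q=1$. By linearity the two sides of (\ref{L-1}) become $\sum_{k=m+1}^\ell\langle s_k\rangle$ and $\sum_{k=1}^{\ell-m}\langle s_k\rangle$; specializing to $\ell=m+1$ shows that the $q=1$ identity forces $\langle s_k\rangle$ to be independent of $k$, and the normalization $\langle\varepsilon_1\rangle=\langle s_1\rangle=\Delta$ pins its common value to $\Delta$, which is exactly condition (i), Eq.~(\ref{skd}); conversely (\ref{skd}) makes both sides $(\ell-m)\Delta$. Henceforth I assume (\ref{skd}), so that ${\rm cov}(s_\ell,s_m)=\langle s_\ell s_m\rangle-\Delta^2$ and condition (ii), Eq.~(\ref{toep}), is equivalent to the assertion that the second-moment matrix $\langle s_\ell s_m\rangle$ is Toeplitz.

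For $q=2$ the ``if'' part is immediate: squaring $\varepsilon_\ell-\varepsilon_m=\sum_{k=m+1}^\ell s_k$ and averaging gives $\langle(\varepsilon_\ell-\varepsilon_m)^2\rangle=\sum_{j,k=m+1}^\ell\langle s_j s_k\rangle$, and once $\langle s_j s_k\rangle=I_{|j-k|}$ this double sum depends only on $\ell-m$ and equals $\langle\varepsilon_{\ell-m}^2\rangle$. The substantive step is the ``only if'' part, for which I would reconstruct the spacing covariances from the displacement data by a discrete second-difference argument. Writing $\langle s_\ell s_m\rangle=\langle(\varepsilon_\ell-\varepsilon_{\ell-1})(\varepsilon_m-\varepsilon_{m-1})\rangle$ identifies $\langle s_\ell s_m\rangle$ as the mixed second finite difference, in the indices $\ell$ and $m$, of the bilinear form $\langle\varepsilon_\ell\varepsilon_m\rangle$. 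Inserting the polarization identity $\langle\varepsilon_\ell\varepsilon_m\rangle=\frac12(\langle\varepsilon_\ell^2\rangle+\langle\varepsilon_m^2\rangle-\langle(\varepsilon_\ell-\varepsilon_m)^2\rangle)$ and using the hypothesis (\ref{L-1}) with $q=2$, namely $\langle(\varepsilon_\ell-\varepsilon_m)^2\rangle=g(|\ell-m|)$ with $g(n):=\langle\varepsilon_n^2\rangle$, one sees that the two ``pure'' terms $\frac12\langle\varepsilon_\ell^2\rangle$ and $\frac12\langle\varepsilon_m^2\rangle$ are annihilated by the mixed second difference, leaving $\langle s_\ell s_m\rangle$ equal to $-\frac12$ times the mixed second difference of $g(|\ell-m|)$, which is visibly a function of $\ell-m$ alone; by symmetry it is a function of $|\ell-m|$, and reading off $I_n$ from it yields precisely (\ref{toep}).

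The only place that calls for care is the bookkeeping in this last computation near the diagonal, where $|\ell-m|$ is not linear: one has to treat $|\ell-m|\ge1$ and $\ell=m$ separately — the latter, via $\langle\varepsilon_\ell\varepsilon_{\ell-1}\rangle=\frac12(g(\ell)+g(\ell-1)-g(1))$, giving that every diagonal entry equals $g(1)=\langle s_1^2\rangle$ — and to keep track of the admissible index ranges so that only $\varepsilon$'s with nonnegative labels occur. All remaining manipulations are linearity of expectation and reindexing of finite sums.
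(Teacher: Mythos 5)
Your proof is correct. Note, however, that the paper does not prove Lemma~\ref{Lemma-stat} at all: it is quoted with a pointer to Section~3 of Ref.~\cite{ROK-2020}, so there is no in-paper argument to compare against; your write-up is a valid self-contained replacement. The structure is the natural one: the $q=1$ case follows by telescoping together with the normalization $\langle\varepsilon_1\rangle=\langle s_1\rangle=\Delta$; for $q=2$ the direction ``stationarity $\Rightarrow$ Eq.~(\ref{L-1})'' is the shift-invariance of $\sum_{j,k=m+1}^{\ell}\langle s_j s_k\rangle$ with $\langle s_j s_k\rangle=I_{|j-k|}$, and the converse is correctly obtained by writing $\langle s_\ell s_m\rangle$ as the mixed second finite difference of $\langle\varepsilon_\ell\varepsilon_m\rangle$, polarizing, and noting that the pure terms are annihilated, leaving
\begin{equation}
\langle s_\ell s_m\rangle=\tfrac{1}{2}\bigl(g(|\ell-m|+1)+g(|\ell-m|-1)-2\,g(|\ell-m|)\bigr),\qquad |\ell-m|\ge 1,
\end{equation}
with $g(n)=\langle\varepsilon_n^2\rangle$, $g(0)=0$, and $\langle s_\ell^2\rangle=g(1)$ on the diagonal, which is precisely the Toeplitz structure of Eq.~(\ref{toep}) after subtracting $\Delta^2$. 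Your remarks on the diagonal case and on the convention $\varepsilon_0=0$ (which also covers the boundary index $m=1$, since $\langle\varepsilon_\ell\varepsilon_0\rangle=0$ is reproduced by the polarization identity) indeed close the only delicate points.
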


\begin{remark}
Examples of {\it finite}--$N$ eigenlevel sequences with stationary spacings include (i) random spectra with independent, identically distributed (i.i.d.)~level spacings~\footnote[5]{As the $\ell$-th ordered eigenlevel is a sum of $\ell$ i.i.d. random variables, $\varepsilon_\ell = \sum_{j=1}^\ell s_j$, both the l.h.s. and r.h.s. in Eq.~(\ref{L-1}) represent the $q$-th moment of a sum of $(\ell-m)$ i.i.d. random variables, $q \in {\mathbb N}$. (Infinite) stationarity follows immediately.}; (ii) unfolded spectra in ensembles of random diagonal matrices (see Section~\ref{rdm-section}); (iii) spectra of `tuned' circular ensembles of random matrices, see Section 4 in Ref.~\cite{ROK-2020}. Interestingly, spectral fluctuations in all aforementioned models appear to satisfy the correlation constraint Eq.~(\ref{L-1}) also for $q>2$. This leads us to define the notion of random spectra with infinitly-stationary level spacings.
\hfill $\blacksquare$
\end{remark}

\begin{definition}[Infinite stationarity]\label{def-inf-stationary} An ordered sequence of eigenlevels $\{0 \le \varepsilon_1 \le \dots \le \varepsilon_N\}$ with $N \in \mathbb{N}$ is said
to possess an \textbf{\textit{infinitly-stationary}} sequence of level spacings if the relation
\begin{eqnarray}\label{L-1-inf}
    \langle (\varepsilon_\ell - \varepsilon_m)^q\rangle = \langle \varepsilon_{\ell-m}^q \rangle
\end{eqnarray}
holds for $\ell>m$ and {\it all} $q \in {\mathbb N}$.
\hfill $\blacksquare$
\end{definition}

\subsection{Power spectrum for truncated eigenlevel sequences with stationary spacings} \label{PS-trunc-ss}

For {\it complete} (untruncated) sequences of eigenlevels with {\it stationary} level spacings, the power spectrum can be represented as a differential operator acting on a generating function of eigenlevel variances, see Theorem 2.3 in Ref.~\cite{ROK-2020}:

\begin{theorem}\label{PS-stationary-main}
  Let $N \in \mathbb{N}$ and $0\le \omega \le \pi$.  The power spectrum for an eigenlevel sequence $\{0\le \varepsilon_1 \le \dots \le \varepsilon_N\}$ with stationary spacings equals
\begin{eqnarray}\label{smd-sum}
    S_N(\omega) = \frac{1}{N \Delta^2} {\rm Re} \left(  z \frac{\partial}{\partial z} - N - \frac{1-z^{-N}}{1-z}\right)
        \sum_{\ell=1}^N {\rm var}[\varepsilon_\ell]\, z^\ell,
\end{eqnarray}
where $z=e^{i \omega}$, $\Delta$ is the mean level spacing, and
\begin{eqnarray} \label{var-N}
{\rm var}[\varepsilon_\ell] = \langle \delta\varepsilon_\ell^2 \rangle
\end{eqnarray}
is the variance of $\ell$-th ordered eigenlevel.
\end{theorem}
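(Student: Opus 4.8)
The plan is to use stationarity to collapse the full covariance matrix of eigenlevels onto the diagonal sequence of their variances, and then to rearrange the double sum in Definition~\ref{def-01} into the claimed differential-operator form.

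The first step is purely probabilistic. Writing $v_\ell := {\rm var}[\varepsilon_\ell]$ for $\ell=1,\dots,N$ and $v_0:=0$ (consistent with $\varepsilon_0=0$), I would take the $q=2$ relation of Lemma~\ref{Lemma-stat}, namely $\langle(\varepsilon_\ell-\varepsilon_m)^2\rangle = \langle\varepsilon_{|\ell-m|}^2\rangle$, expand the left-hand side, and subtract the square of the mean (which is $\ell m\Delta^2$ by the $q=1$ relation together with $\langle\varepsilon_\ell\rangle=\ell\Delta$). This produces the key identity
\begin{eqnarray}
\langle \delta\varepsilon_\ell\,\delta\varepsilon_m\rangle &=& \frac{1}{2}\left( v_\ell + v_m - v_{|\ell-m|}\right), \qquad \ell,m = 1,\dots,N.
\end{eqnarray}
This is the heart of the matter: although $S_N(\omega)$ a priori involves the whole covariance matrix, stationarity reduces it to the one-dimensional data $\{v_\ell\}_{\ell=0}^{N}$.

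The second step is an algebraic rearrangement. I would substitute this identity into Eq.~(\ref{ps-def}) and split the double sum into three pieces according to the terms $v_\ell$, $v_m$, $-v_{|\ell-m|}$. With $z=e^{i\omega}$, set $V(z):=\sum_{\ell=1}^{N} v_\ell z^\ell$ and $\Sigma(z):=\sum_{m=1}^{N} z^m$. Using $\overline{z}=z^{-1}$ on $|z|=1$ and the reality of the $v$'s, the $v_\ell$- and $v_m$-pieces factorize as $\frac{1}{2}V(z)\overline{\Sigma(z)}$ and its complex conjugate, hence contribute $2\,{\rm Re}\big[\frac{1}{2}V(z)\overline{\Sigma(z)}\big]$. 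In the $-v_{|\ell-m|}$-piece I would collect equal values of $k=\ell-m$, using that each $k$ with $|k|\le N-1$ occurs with multiplicity $N-|k|$ and that both $v_{|k|}$ and the multiplicity depend on $|k|$ only; pairing $\pm k$ (the $k=0$ term dropping out since $v_0=0$) turns this piece into $-\sum_{k=1}^{N-1}(N-k)\,v_k\cos(k\omega)$, which is manifestly real.

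The last step is to recognize these contributions inside the operator. The elementary identity $-(1-z^{-N})/(1-z)=\sum_{i=1}^{N}z^{-i}=\overline{\Sigma(z)}$ on $|z|=1$ shows that the multiplication part of the operator applied to $V(z)$ is exactly $V(z)\overline{\Sigma(z)}$; and $\big(z\,\partial_z-N\big)V(z)=\sum_{\ell=1}^{N}(\ell-N)v_\ell z^\ell$ has real part $-\sum_{\ell=1}^{N-1}(N-\ell)v_\ell\cos(\ell\omega)$ (the $\ell=N$ term vanishing), which is precisely the cosine sum found above. Since $S_N(\omega)$ is real by construction (it is the Fourier transform of a symmetric covariance), a single ${\rm Re}$ may be pulled out front, assembling the three pieces into Eq.~(\ref{smd-sum}). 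I expect no essential obstacle; the only delicate points are the multiplicity count $N-|k|$ in the third sum and keeping straight which subsums are already real versus which require the real-part projection — routine but error-prone bookkeeping.
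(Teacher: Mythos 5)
Your proposal is correct: the covariance identity $\langle\delta\varepsilon_\ell\delta\varepsilon_m\rangle=\tfrac12(v_\ell+v_m-v_{|\ell-m|})$ follows from Lemma~\ref{Lemma-stat} exactly as you argue, and your three-way split, the multiplicity count $N-|k|$, and the identification $-(1-z^{-N})/(1-z)=\sum_{i=1}^N z^{-i}$ assemble (with consistent bookkeeping of the factor $\tfrac12$) into Eq.~(\ref{smd-sum}). This is essentially the same route as the paper's own computation — it is the $M=0$ case of the manipulations in Eqs.~(\ref{delta-2}), (\ref{1st_sum_res}), (\ref{2nd_sum_res}) and (\ref{SML-new}), and of the proof in Ref.~\cite{ROK-2020} to which the theorem is attributed.
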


\begin{remark}
Theorem~\ref{PS-stationary-main}, as it stands, cannot directly be applied to describe the power spectrum in a realistic quantum system with unbounded spectrum for two reasons. First, in an experiment one normally measures a {\it truncated spectrum} represented by a subsequence~\footnote[6]{The eigenlevels are assumed to be unfolded.} $\{ \varepsilon_{M+1} \le \dots \le \varepsilon_{M+L}\}$ of, say, $L$ eigenlevels obtained from a complete spectrum $\{0 \le \varepsilon_{1} \le \dots \le \varepsilon_{N}\}$ of the length $N$ by skipping both $M$ low-lying eigenlevels $\{\varepsilon_1 \le \dots \le \varepsilon_{M}\}$ and $N-M-L$ high-lying eigenlevels $\{ \varepsilon_{M+L+1} \le \dots \le \varepsilon_{N}\}$; here, $M+L \le N$. Second, obviously, the $N\rightarrow \infty$ limit should be implemented. While the latter problem is merely technical, the former one raises the question of how the original Definition~\ref{def-01} of the power spectrum should be modified to account for {\it truncated} eigenlevel sequences in a meaningful way.\newline
${}$\hfill $\blacksquare$
\end{remark}
\noindent
{\it Power spectrum of truncated spectrum and collective fluctuations.}---To characterize the power spectrum of an ordered subsequence $\{ \varepsilon_{M+1} \le \dots \le \varepsilon_{M+L}\}$ of the complete spectrum $\{ 0 \le \varepsilon_1 \le \dots \le \varepsilon_{N}\}$, one might think of a simple-minded extension of Definition~\ref{def-01} in the form
\begin{eqnarray}\label{ps-def-trial}
    \tilde{S}_{N; M, L}(\omega) = \frac{1}{L\Delta^2} \sum_{\ell=M+1}^{M+L} \sum_{m=M+1}^{M+L} \langle \delta\varepsilon_\ell \delta\varepsilon_m \rangle\, e^{i\omega (\ell-m)}.
\end{eqnarray}
Unfortunately, there is a serious drawback to this proposal. In order to reveal it, we assume stationarity of eigenlevel spacings [Eq.~(\ref{L-1}) at $q=2$] which brings the relation
\begin{eqnarray} \label{delta-2}
    \langle \delta\varepsilon_\ell  \delta\varepsilon_m\rangle = \frac{1}{2} \left(
        \langle \delta\varepsilon_\ell^2 \rangle + \langle \delta\varepsilon_m^2 \rangle - \langle \delta\varepsilon_{|\ell-m|}^2 \rangle
        \right),
\end{eqnarray}
where $\delta\varepsilon_\ell = \varepsilon_\ell -\ell \Delta$. Substituting Eq.~(\ref{delta-2}) into the trial definition Eq.~(\ref{ps-def-trial}), we
reduce the latter to
\begin{eqnarray}\label{PS-x-01} \fl \qquad
    \tilde{S}_{N; M, L}(\omega) =  \frac{1}{2 L\Delta^2} \Bigg\{ 2 {\rm Re} \sum_{\ell,m=M+1}^{M+L}
        \langle \delta \varepsilon_\ell^2 \rangle \, z^{\ell-m}
        -
        \sum_{\ell,m=M+1}^{M+L}
        \langle \delta \varepsilon_{|\ell-m|}^2 \rangle \, z^{\ell-m}
        \Bigg\}.
\end{eqnarray}
Both terms in Eq.~(\ref{PS-x-01}) can be simplified by observing that
\begin{eqnarray} \label{1st_sum_res}
    \sum_{\ell,m=M+1}^{M+L}
        \langle \delta \varepsilon_\ell^2 \rangle \, z^{\ell-m} =
        \frac{1-z^{-L}}{z-1} \sum_{\ell=M+1}^{M+L} \langle \delta \varepsilon_\ell^2\rangle z^{\ell-M}
\end{eqnarray}
and
\begin{eqnarray}\label{2nd_sum_res}
    \sum_{\ell,m=M+1}^{M+L}
        \langle \delta \varepsilon_{|\ell-m|}^2 \rangle \, z^{\ell-m} = 2{\rm Re}\, \sum_{\ell=1}^{L}
        (L-\ell) \langle \delta \varepsilon_{\ell}^2 \rangle \, z^\ell.
\end{eqnarray}
Combining Eqs.~(\ref{PS-x-01}), (\ref{1st_sum_res}) and (\ref{2nd_sum_res}), we manage to rewrite Eq.~(\ref{ps-def-trial}) in a more suggestive form~\footnote[7]{Notice that the variance ${\rm var}[\varepsilon_\ell]$ in Eq.~(\ref{SML-new}) and below may depend on the length $N$ of the complete spectrum.}:
\begin{eqnarray} \label{SML-new} \fl
    \quad  \tilde{S}_{N; M,L}(\omega) = \frac{1}{L \Delta^2} {\rm Re} \left(
        z\frac{\partial}{\partial z} - L - \frac{1-z^{-L}}{1-z}
    \right) \sum_{\ell=1}^{L} {\rm var}[\varepsilon_\ell] z^\ell + \delta \tilde{S}_{N;M,L}(\omega),
\end{eqnarray}
where
\begin{eqnarray} \label{d-SML-new}\fl
    \qquad
    \delta\tilde{S}_{N;M,L}(\omega)
    = \frac{1}{L \Delta^2} {\rm Re} \left(
        \frac{1-z^{-L}}{1-z} \sum_{\ell=1}^{L} \left(
            {\rm var}[\varepsilon_\ell] - {\rm var}[\varepsilon_{\ell+M}]
        \right)\, z^\ell
    \right).
\end{eqnarray}
Comparing the first term in Eq.~(\ref{SML-new}) with the one in Eq.~(\ref{smd-sum}), one realizes that it does {\it not} depend on the number $M$ of dropped low-lying eigenlevels, effectively describing the power spectrum of the first $L$ eigenlevels as if $M=0$. The second term in Eq.~(\ref{SML-new}), specified explicitly in Eq.~(\ref{d-SML-new}), {\it does} depend on $M$. Nullifying at $M=0$, it represents a {\it parasitic signal}, which may dramatically mask a contribution of the first term in Eq.~(\ref{SML-new}).

This is best seen in the model of eigenlevel sequences with identically distributed uncorrelated spacings~\cite{ROK-2020}, characterized by the variance ${\rm var}[\varepsilon_\ell]= \sigma^2\ell$ linearly depending on the number of $\ell$-th ordered eigenlevel (here $\sigma^2$ is the variance of level spacings). Setting the mean level spacing to unity, $\Delta=1$, we observe that the second term in Eq.~(\ref{SML-new}), see Eq.~(\ref{d-SML-new}), reduces to
\begin{eqnarray} \label{SML-new-2}
        \delta\tilde{S}_{N;M,L}(\omega) =  \sigma^2 \frac{M}{L} \, \frac{\sin^2(\omega L/2)}{\sin^2(\omega/2)}.
\end{eqnarray}
For $M \gg L \gg 1$, it produces a giant, wildly oscillating contribution -- a parasitic signal -- whose amplitude may far exceed that of the first term.

An $M$-dependent oscillatory term $\delta\tilde{S}_{N;M,L}(\omega)$ in Eq.~(\ref{SML-new}) originates from fluctuations of the first $M$ low-lying eigenlevels $\{ 0 \le \varepsilon_1 \le \dots \le \varepsilon_{M}\}$ which, albeit skipped, {\it do} contribute to {\it collective} fluctuations of ordered eigenlevels in the subsequence $\{ \varepsilon_{M+1} \le \dots \le \varepsilon_{M+L}\}$ of our interest.
\newline\noindent\newline
{\it Eliminating collective fluctuations.}---To get rid of collective fluctuations, one has to modify the trial definition Eq.~(\ref{ps-def-trial}) as follows:

\begin{definition}\label{def-01-truncated}
  Let $\{0 \le \varepsilon_1 \le \dots \le \varepsilon_N\}$ be a sequence of ordered unfolded eigenlevels, $N \in {\mathbb N}$, with the mean level spacing $\Delta$ and let
  $\{ \varepsilon_{M+1} \le \dots \le \varepsilon_{M+L}\}$ be a spectral subsequence of the length $L \in {\mathbb N}$ obtained from the complete sequence by omitting both $M\in {\mathbb N} \cup \{0\}$ low-lying eigenlevels $\{\varepsilon_1 \le \dots \le \varepsilon_{M}\}$ and $N-M-L$ high-lying eigenlevels $\{ \varepsilon_{M+L+1} \le \dots \le \varepsilon_{N}\}$, such that $M+L\le N$. The power spectrum of the subsequence $\{ \varepsilon_{M+1} \le \dots \le \varepsilon_{M+L}\}$ equals
\begin{eqnarray}\label{ps-def-truncated} \fl
    S_{N; M, L}(\omega) = \frac{1}{L\Delta^2} \sum_{\ell=M+1}^{M+L} \sum_{m=M+1}^{M+L} \langle (\delta\varepsilon_\ell - \delta\varepsilon_M)
                        (\delta\varepsilon_m -  \delta\varepsilon_M) \rangle\, e^{i\omega (\ell-m)}, \quad \omega \in {\mathbb R},
\end{eqnarray}
where $\delta\varepsilon_\ell = \varepsilon_\ell - \langle \varepsilon_\ell\rangle$ is a displacement of the $\ell$-th ordered eigenlevel $\varepsilon_\ell$ from its mean $\langle \varepsilon_\ell\rangle$; for $M=0$ (no truncation from below), we set $\delta\varepsilon_0=0$.
\hfill $\blacksquare$
\end{definition}

\begin{remark}\label{rem-collective}
  Definition~\ref{def-01-truncated} is essentially the power spectrum of a `collectively tuned' subsequence $\{ \varepsilon_{1}^\prime, \dots, \varepsilon_{L}^\prime\}$, where
  $\varepsilon_j^\prime = \varepsilon_{M+j}-\varepsilon_{M}$. Notice that stationarity of the complete spectrum implies stationarity of a `collectively tuned' subsequence, in the sense of Definition~\ref{def-stationary}.
\hfill $\blacksquare$
\end{remark}

\begin{theorem}\label{PS-stationary-truncated}
  The power spectrum of an ordered subsequence $\{ \varepsilon_{M+1} \le \dots \le \varepsilon_{M+L}\}$ of the complete spectrum $\{ 0 \le \varepsilon_1 \le \dots \le \varepsilon_{N}\}$ with stationary spacings equals
\begin{eqnarray}\label{smd-sum-x}
    S_{N;M,L}(\omega) = \frac{1}{L \Delta^2} {\rm Re} \left(  z \frac{\partial}{\partial z} - L - \frac{1-z^{-L}}{1-z}\right)
        \sum_{\ell=1}^L {\rm var}[\varepsilon_\ell]\, z^\ell,
\end{eqnarray}
where $z=e^{i \omega}$, $\Delta$ is the mean level spacing, and
\begin{eqnarray}
{\rm var}[\varepsilon_\ell] = \langle \delta\varepsilon_\ell^2 \rangle
\end{eqnarray}
is the variance of $\ell$-th ordered eigenlevel.
\end{theorem}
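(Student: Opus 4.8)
The plan is to reduce Theorem~\ref{PS-stationary-truncated} to the already established Theorem~\ref{PS-stationary-main} by means of the ``collective tuning'' observed in Remark~\ref{rem-collective}. Introduce the tuned subsequence $\{0 \le \varepsilon_1^\prime \le \dots \le \varepsilon_L^\prime\}$ defined by $\varepsilon_j^\prime = \varepsilon_{M+j} - \varepsilon_M$ with $\varepsilon_0^\prime = 0$. First I would verify that the double sum entering Definition~\ref{def-01-truncated} coincides with the power spectrum $S_L(\omega)$ of this tuned sequence in the sense of the original Definition~\ref{def-01}. Indeed, stationarity of the complete spectrum forces $\langle \varepsilon_\ell \rangle = \ell \Delta$, so $\delta \varepsilon_{M+j} - \delta \varepsilon_M = \varepsilon_{M+j} - \varepsilon_M - j\Delta = \varepsilon_j^\prime - \langle \varepsilon_j^\prime \rangle =: \delta\varepsilon_j^\prime$; hence Eq.~(\ref{ps-def-truncated}) becomes $S_{N;M,L}(\omega) = (L\Delta^2)^{-1}\sum_{j,k=1}^L \langle \delta\varepsilon_j^\prime \delta\varepsilon_k^\prime\rangle\, e^{i\omega(j-k)}$, which is precisely Definition~\ref{def-01} for $\{\varepsilon_1^\prime,\dots,\varepsilon_L^\prime\}$ with mean spacing $\Delta$.

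Second, I would check that the tuned sequence itself has stationary spacings in the sense of Definition~\ref{def-stationary}. This is immediate because its spacings $s_j^\prime = \varepsilon_j^\prime - \varepsilon_{j-1}^\prime = \varepsilon_{M+j} - \varepsilon_{M+j-1} = s_{M+j}$ form a contiguous sub-block of the original spacing sequence; consequently $\langle s_j^\prime\rangle = \langle s_{M+j}\rangle = \Delta$ and ${\rm cov}(s_j^\prime, s_k^\prime) = {\rm cov}(s_{M+j}, s_{M+k}) = I_{|j-k|} - \Delta^2$, with the very same Toeplitz function $I_n$. In particular $\langle \varepsilon_1^\prime\rangle = \Delta$, so Theorem~\ref{PS-stationary-main} applies to the tuned sequence of length $L$ and yields the right-hand side of Eq.~(\ref{smd-sum-x}) but with each ${\rm var}[\varepsilon_\ell]$ replaced by ${\rm var}[\varepsilon_\ell^\prime] = \langle (\delta\varepsilon_\ell^\prime)^2\rangle$.

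Third, it remains to identify ${\rm var}[\varepsilon_\ell^\prime]$ with ${\rm var}[\varepsilon_\ell]$, the variance of the $\ell$-th ordered eigenlevel of the \emph{complete} $N$-dimensional spectrum. Here I would invoke Lemma~\ref{Lemma-stat}, i.e.\ the $q=1$ and $q=2$ instances of Eq.~(\ref{L-1}): $\langle (\varepsilon_{M+\ell} - \varepsilon_M)^q\rangle = \langle \varepsilon_\ell^q\rangle$. These give $\langle \varepsilon_\ell^\prime\rangle = \langle \varepsilon_\ell\rangle = \ell\Delta$ and $\langle (\varepsilon_\ell^\prime)^2\rangle = \langle \varepsilon_\ell^2\rangle$, whence ${\rm var}[\varepsilon_\ell^\prime] = \langle \varepsilon_\ell^2\rangle - \ell^2\Delta^2 = {\rm var}[\varepsilon_\ell]$. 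Substituting this back into the formula obtained in the second step reproduces exactly Eq.~(\ref{smd-sum-x}).

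The argument is essentially a bookkeeping reduction, so I do not expect a genuine obstacle; the two points needing a little care are (i) that the subtraction $\delta\varepsilon_\ell - \delta\varepsilon_M$ built into Definition~\ref{def-01-truncated} indeed removes the collective drift, so the $M$-dependence — and with it the parasitic term Eq.~(\ref{d-SML-new}) — drops out entirely, and (ii) that ${\rm var}[\varepsilon_\ell]$ in the final formula is the variance computed in the ambient $N$-dimensional ensemble (cf.\ footnote~7), which is automatic since every moment appearing above is taken in that ensemble.
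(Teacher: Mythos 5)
Your argument is correct, and it proves the theorem by a route that differs from the paper's. The paper proves Eq.~(\ref{smd-sum-x}) by direct computation: it expands Definition~\ref{def-01-truncated} into the trial statistic $\tilde{S}_{N;M,L}(\omega)$ plus the cross and diagonal terms involving $\delta\varepsilon_M$, invokes the stationarity identity Eq.~(\ref{delta-2}) at $m=M$, performs the sum over $m$, and recognizes the result as $\tilde{S}_{N;M,L}(\omega)-\delta\tilde{S}_{N;M,L}(\omega)$, i.e.\ exactly the first ($M$-independent) term of Eq.~(\ref{SML-new}); the benefit is that the parasitic contribution Eq.~(\ref{d-SML-new}) is exhibited explicitly, which the paper then reuses [Eqs.~(\ref{SML-new-2}), (\ref{dS-osc})]. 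You instead take the path hinted at in Remark~\ref{rem-collective}: pass to the tuned sequence $\varepsilon_j^\prime=\varepsilon_{M+j}-\varepsilon_M$, check that Definition~\ref{def-01-truncated} is literally Definition~\ref{def-01} for that sequence, verify that its spacings $s_j^\prime=s_{M+j}$ inherit the Toeplitz structure of Definition~\ref{def-stationary}, apply Theorem~\ref{PS-stationary-main} with $N\mapsto L$, and then — this is the one step where a gap could have hidden, and you close it correctly — use Lemma~\ref{Lemma-stat} at $q=1,2$ to identify ${\rm var}[\varepsilon_\ell^\prime]$ with ${\rm var}[\varepsilon_\ell]$ of the ambient $N$-level ensemble. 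Both proofs ultimately rest on the same stationarity input, Eq.~(\ref{L-1}) for $q=1,2$; yours is the more economical reduction (no generating-function algebra repeated), while the paper's makes the collective-fluctuation term it subtracts quantitatively explicit.
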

\noindent
\begin{proof}
Simple rewriting of Eq.~(\ref{ps-def-truncated}) yields:
\begin{eqnarray} \label{PS-x-02} \fl \qquad
    S_{N;M,L}(\omega) = \tilde{S}_{N; M,L}(\omega) - \frac{2}{L\Delta^2} {\rm Re\,} \sum_{\ell,m=M+1}^{M+L} \langle \delta \varepsilon_\ell \delta \varepsilon_M \rangle z^{\ell-m}
    \nonumber\\ \fl \qquad \qquad\qquad \qquad\qquad
     +  \frac{1}{L\Delta^2} \sum_{\ell,m=M+1}^{M+L} \langle \delta \varepsilon_M^2 \rangle z^{\ell-m}.
\end{eqnarray}
Here, $\tilde{S}_{N; M,L}(\omega)$ is defined by Eq.~(\ref{ps-def-trial}). Making use of stationarity, Eq.~(\ref{delta-2}) with $m=M$, one has:
\begin{eqnarray} \label{PS-x-03}\fl \qquad
    S_{N;M,L}(\omega) = \tilde{S}_{N; M,L}(\omega) - \frac{1}{L\Delta^2} {\rm Re\,} \sum_{\ell,m=M+1}^{M+L}
    \Big(
        \langle \delta \varepsilon_\ell^2 \rangle - \langle \delta \varepsilon_{\ell -M}^2 \rangle
    \Big) z^{\ell -m}.
\end{eqnarray}
Performing summation over $m$, one obtains after some algebra:
\begin{eqnarray} \label{PS-x-04}
    S_{N;M,L}(\omega) = \tilde{S}_{N; M,L}(\omega) - \delta \tilde{S}_{N; M,L}(\omega),
\end{eqnarray}
see Eqs.~(\ref{SML-new}) and (\ref{d-SML-new}) for explicit expressions. Combining them with Eq.~(\ref{PS-x-04}) we derive Eq.~(\ref{smd-sum-x}).
\end{proof}

\begin{remark}\label{FF-rem}
Since the power spectrum Eq.~(\ref{ps-def-truncated}) of an ordered subsequence $\{ \varepsilon_{M+1} \le \dots \le \varepsilon_{M+L}\}$ drawn from the {\it stationary} spectrum does {\it not} depend on the number
$M$ of skipped low-lying eigenvalues, see Eq.~(\ref{smd-sum-x}), from now on we shall use the notation $S_{N;L}(\omega)$ instead of $S_{N;M,L}(\omega)$.
\hfill $\blacksquare$
\end{remark}

\subsection{Spectral form factor for truncated eigenlevel sequences with stationary spacings}\label{SFF-trunc}

Prompted by the above discussion, we define the form factor of {\it truncated} spectrum as follows:

\begin{definition}\label{def-01-K-tau-truncated}
  Let $\{0 \le \varepsilon_1 \le \dots \le \varepsilon_N\}$ be a sequence of ordered unfolded eigenlevels, $N \in {\mathbb N}$, with the mean level spacing $\Delta$ and let
  $\{ \varepsilon_{M+1} \le \dots \le \varepsilon_{M+L}\}$ be a spectral subsequence of the length $L \in {\mathbb N}$ obtained from the complete sequence by omitting both $M\in {\mathbb N} \cup \{0\}$ low-lying eigenlevels $\{\varepsilon_1 \le \dots \le \varepsilon_{M}\}$ and $N-M-L$ high-lying eigenlevels $\{ \varepsilon_{M+L+1} \le \dots \le \varepsilon_{N}\}$, such that $M+L\le N$. The form factor of the truncated spectrum equals
\begin{eqnarray}\label{ps-def-K-tau-truncated} \fl
    \qquad K_{N; M, L}(\tau) &=& \frac{1}{L}
    \Bigg( \Big<
        \sum_{\ell=M+1}^{M+L} \sum_{m=M+1}^{M+L} e^{2 i \pi \tau (\varepsilon_\ell - \varepsilon_m)/\Delta }
    \Big> \nonumber\\
    \fl \qquad
    &-&  \Big<
        \sum_{\ell=M+1}^{M+L} e^{2 i \pi \tau (\varepsilon_\ell-\varepsilon_M)/\Delta }
    \Big>
    \Big<
        \sum_{m=M+1}^{M+L} e^{-2 i \pi \tau (\varepsilon_m-\varepsilon_M)/\Delta }
    \Big> \Bigg).
\end{eqnarray}
\hfill $\blacksquare$
\end{definition}
\begin{remark}\label{rem-collective-sff}
  Definition~\ref{def-01-K-tau-truncated} is essentially the form factor of a `collectively tuned' subsequence $\{ \varepsilon_{1}^\prime, \dots, \varepsilon_{L}^\prime\}$, where
  $\varepsilon_j^\prime = \varepsilon_{M+j}-\varepsilon_{M}$.
\hfill $\blacksquare$
\end{remark}

\begin{lemma}\label{PS-stationary-K-truncated}
  The spectral form factor of an ordered subsequence $\{ \varepsilon_{M+1} \le \dots \le \varepsilon_{M+L}\}$ of the complete spectrum $\{ 0 \le \varepsilon_1 \le \dots \le \varepsilon_{N}\}$ with \textbf{\textit{infinitly-stationary}} spacings equals
  \begin{eqnarray}\label{ps-def-K-tau-theorem} \fl
  K_{N; M, L}(\tau) = \frac{1}{L}
    \Bigg( \Big<
        \sum_{\ell=1}^{L} \sum_{m=1}^{L} e^{2 i \pi \tau (\varepsilon_\ell - \varepsilon_m)/\Delta }
    \Big>
    -  \Big<
        \sum_{\ell=1}^{L} e^{2 i \pi \tau \varepsilon_\ell/\Delta }
    \Big>
    \Big<
        \sum_{m=1}^L e^{-2 i \pi \tau \varepsilon_m/\Delta }
    \Big> \Bigg). \nonumber\\
    {}
\end{eqnarray}
\end{lemma}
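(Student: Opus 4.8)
The plan is to reduce every quantity appearing in Definition~\ref{def-01-K-tau-truncated} to \emph{single} level differences $\varepsilon_\ell-\varepsilon_m$ and then to invoke infinite stationarity~(\ref{L-1-inf}), which controls precisely the distribution of such differences. By Remark~\ref{rem-collective-sff}, $K_{N;M,L}(\tau)$ is the form factor of the collectively tuned subsequence $\varepsilon_j'=\varepsilon_{M+j}-\varepsilon_M$, $j=1,\dots,L$, so the whole claim amounts to showing that, as far as this form factor is concerned, the shifted differences $\varepsilon_{M+j}-\varepsilon_M$ may be replaced by the first $L$ eigenlevels $\varepsilon_j$ (with the convention $\varepsilon_0=0$).

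First I would establish the elementary identity
\[
  \big\langle e^{2i\pi\tau(\varepsilon_\ell-\varepsilon_m)/\Delta}\big\rangle
  = \big\langle e^{2i\pi\tau\,\varepsilon_{\ell-m}/\Delta}\big\rangle, \qquad \ell\ge m,
\]
valid for any sequence with infinitely-stationary spacings. Expanding both exponentials in powers of $\tau$, the coefficient of $\tau^q$ is (up to the same factor $(2i\pi/\Delta)^q/q!$) $\langle(\varepsilon_\ell-\varepsilon_m)^q\rangle$ on the left and $\langle\varepsilon_{\ell-m}^q\rangle$ on the right, and these coincide for every $q\in\NN$ by~(\ref{L-1-inf}), the case $\ell=m$ being trivial. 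The interchange of averaging with the $\tau$-summation and the resummation are legitimate because in all concrete settings to which the lemma is applied the differences $\varepsilon_\ell-\varepsilon_m$ are moment-determinate — they are bounded for the RDM spectra of Section~\ref{rdm-section} and Gamma-distributed for i.i.d.\ exponential spacings — so equality of all moments forces equality of the characteristic functions; alternatively, the identity may be read at the level of formal power series in $\tau$, which already suffices for the asymptotic analysis of $K$. For $\ell<m$ one takes complex conjugates, $\overline{\langle e^{2i\pi\tau(\varepsilon_\ell-\varepsilon_m)/\Delta}\rangle}=\langle e^{2i\pi\tau(\varepsilon_m-\varepsilon_\ell)/\Delta}\rangle$, and applies the above with the roles of $\ell$ and $m$ exchanged.

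Next I would feed this into the three pieces of~(\ref{ps-def-K-tau-truncated}). In the double sum, for $M+1\le \ell,m\le M+L$ the identity (together with $\ell-m=(\ell-M)-(m-M)$ and $\ell-M,m-M\ge 1$) gives $\langle e^{2i\pi\tau(\varepsilon_\ell-\varepsilon_m)/\Delta}\rangle=\langle e^{2i\pi\tau(\varepsilon_{\ell-M}-\varepsilon_{m-M})/\Delta}\rangle$, so the shift $\ell\mapsto\ell-M$, $m\mapsto m-M$ turns $\sum_{\ell,m=M+1}^{M+L}$ into $\sum_{\ell,m=1}^{L}$ with exactly the summand of~(\ref{ps-def-K-tau-theorem}). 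In the single sums, $\ell>M$ yields $\langle e^{2i\pi\tau(\varepsilon_\ell-\varepsilon_M)/\Delta}\rangle=\langle e^{2i\pi\tau\,\varepsilon_{\ell-M}/\Delta}\rangle$, and the same shift replaces $\sum_{\ell=M+1}^{M+L}\langle e^{2i\pi\tau(\varepsilon_\ell-\varepsilon_M)/\Delta}\rangle$ by $\sum_{\ell=1}^{L}\langle e^{2i\pi\tau\,\varepsilon_\ell/\Delta}\rangle$ (consistent with $\varepsilon_0=0$ when $M=0$), and likewise for the conjugate sum over $m$. Collecting the three rewritten pieces reproduces the right-hand side of~(\ref{ps-def-K-tau-theorem}).

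The only genuinely delicate point is this reduction step: equality of all integer moments of $\varepsilon_\ell-\varepsilon_m$ and $\varepsilon_{\ell-m}$, guaranteed by~(\ref{L-1-inf}), implies equality of their characteristic functions only under a moment-determinacy hypothesis, which I expect to be the one thing worth stating explicitly (or else the identity is to be understood as one of formal power series in $\tau$). It holds automatically for the bounded RDM spectra of Section~\ref{rdm-section} and for exponentially distributed i.i.d.\ spacings, the settings in which the lemma is used; the remaining manipulations are routine geometric re-indexing of the sums.
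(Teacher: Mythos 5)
Your proposal is correct and follows essentially the same route as the paper: the paper's own (one-sentence) proof likewise invokes infinite stationarity to equate all moments $\langle(\varepsilon_\ell-\varepsilon_m)^q\rangle=\langle\varepsilon_{\ell-m}^q\rangle$ and thereby shift the indices in the differences under the averages by $M$, exactly your re-indexing of the double and single sums. The only difference is that you spell out the expansion of the exponential and the moment-determinacy (or bounded-support / formal-power-series) justification for passing from equal moments to equal averaged exponentials, a point the paper leaves implicit but which, as you note, is harmless in the settings where the lemma is applied.
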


\begin{proof}
    Since the {\it infinite stationarity} of level spacings, Definition~\ref{def-inf-stationary}, implies equality of moments $\langle (\varepsilon_\ell - \varepsilon_m)^q \rangle$ and $\langle \varepsilon_{\ell-m}^q \rangle$ for all $q \in {\mathbb N}$, one is allowed to simultaneously shift eigenvalues' indices in the differences $(\varepsilon_\ell - \varepsilon_m)$ appearing under the sign of averages $\langle\dots\rangle$ in Eq.~(\ref{ps-def-K-tau-truncated}) by the same integer $M$: $(\varepsilon_\ell
     -\varepsilon_m) \mapsto (\varepsilon_{\ell-M}
     -\varepsilon_{m-M})$ with $\varepsilon_0=0$. This observation produces Eq.~(\ref{ps-def-K-tau-theorem}).
\end{proof}

\begin{remark}
    As the form factor Eq.~(\ref{ps-def-K-tau-truncated}) of a subsequence $\{ \varepsilon_{M+1} \le \varepsilon_{M+2} \le  \dots \le  \varepsilon_{M+L}\}$ drawn from the {\it infinitely stationary} spectrum does {\it not} depend on the number $M$ of skipped low-lying eigenvalues, see Eq.~(\ref{ps-def-K-tau-theorem}), from now on, we shall use the notation $K_{N;L}(\tau)$ instead of $K_{N;M,L}(\tau)$.
\hfill $\blacksquare$
\end{remark}

\fancyhead{} \fancyhead[RE,LO]{\textsl{\thesection \quad Random diagonal matrices: Truncated eigenlevel sequences}}
\fancyhead[LE,RO]{\thepage}
\section{Random diagonal matrices: Spectral statistics of truncated eigenlevel sequences} \label{rdm-section}
\noindent
Random diagonal matrices (RDM) are routinely used to model spectral fluctuations in quantum systems exhibiting regular classical dynamics. Yet, much to our astonishment, we failed to find a comprehensive and transparent treatment of this paradigmatic model in the literature. In Section~\ref{RDM-unf} we partially bridge this gap: the issues of unfolding, ordering, infinite stationarity and correlations of individual eigenlevels are of our primary interest. These results, combined with a general framework of Section~\ref{PSSF}, are utilized in Sections~\ref{PS-RDM} and \ref{SFF-sec}, where the power spectrum and the spectral form factor are calculated nonperturbatively for both complete and truncated spectra of the finite--$N$ RDM-model. The two statistics are further analyzed in various scaling limits to lay the ground for Section~\ref{FFA-status}, where we discuss a status of the form factor approximation in the RDM context.

\subsection{Unfolding, ordering and (infinite) stationarity of level spacings} \label{RDM-unf}
\noindent
{\it Unfolding.}---Consider an ensemble of $N \times N$ random diagonal matrices of the form ${\bm H}={\rm diag}(\lambda_1,\dots,\lambda_N)$, where $\lambda_j \in {\mathbb R}$ are independent identically distributed random variables, each characterized by the probability density function $f_{\lambda_j}(\lambda)=f(\lambda)$ for all $j =1,\dots,N$. The mean spectral density of ${\bm H}$
\begin{eqnarray}
    \varrho_N (\lambda) = \langle \sum_{j=1}^N \delta(\lambda-\lambda_j) \rangle = N f(\lambda)
\end{eqnarray}
explicitly depends on the fluctuation properties of individual eigenlevels and, thus, is system-specific. Here, the angular brackets denote an appropriate ensemble averaging.

To study the {\it universal} properties of random diagonal matrices, one needs to re-scale their fluctuating eigenlevels in such a way that the average spacing between two adjacent levels becomes a constant. This can be achieved through the `unfolding procedure'~\cite{M-2004} which introduces a new set of eigenlevels according to the map
\begin{eqnarray} \label{map}
    \varepsilon_j = \int_{-\infty}^{\lambda_j} d\lambda\,\varrho_N(\lambda),
\end{eqnarray}
where $j = 1,\dots,N$ and $\varepsilon_j \in (0,N)$. The above definition ensures that all $\varepsilon_j$'s are independent, uniformly distributed random variables, $\varepsilon_j \sim {\rm U}(0,N)$. Indeed, the probability density function of $\varepsilon_j$ equals
\begin{eqnarray} \fl \qquad\qquad
    f_{\varepsilon_j}(\varepsilon) = \langle \delta(\varepsilon - \varepsilon_j)  \rangle = \Big< \delta\left(
         \int_{\lambda_j}^{\lambda} dE\,\varrho_N(E)
       \right) \Big> \nonumber\\
         \qquad \qquad =\frac{1}{\varrho_N(\lambda)} \langle \delta(\lambda-\lambda_j) \rangle
       = \frac{1}{N}\, \mathds{1}_{0\le \varepsilon \le N}.
\end{eqnarray}
The latter will now be denoted
\begin{eqnarray}
    f_{\varepsilon_j}(\varepsilon) = {\mathcal U}_N(\varepsilon) = \frac{1}{N}\, \mathds{1}_{0\le \varepsilon \le N}.
\end{eqnarray}
Here, $\mathds{1}_{X}$ is the indicator function; it equals unity if $X$ is true; otherwise it brings zero.
\noindent\newline\newline
{\it Ordering induces correlations.}---Aimed at studying statistical properties of truncated eigenlevel sequences introduced in Section \ref{PSSF}, we
further concentrate on {\it ordered} unfolded eigenlevels~\footnote[3]{Ordering and unfolding are mutually commuting procedures.}.

\begin{proposition}\label{prop-JPDF-ordered}
  Let $0\le \varepsilon_1\le \dots \le \varepsilon_N \le N$ be a set of unfolded eigenlevels generated by the map Eq.~(\ref{map}) out of ordered eigenvalues $\{ \lambda_1 \le \dots \le \lambda_N\}$ of a random diagonal matrix. Then, \newline\newline
  (i) the probability density $g_\ell(\varepsilon)$ of the $\ell$-th ordered eigenlevel equals
\begin{eqnarray} \label{gL-exp}
        g_\ell(\varepsilon) = \frac{1}{N} \frac{\Gamma(N+1)}{\Gamma(\ell) \Gamma(N+1-\ell)} \left( \frac{\varepsilon}{N} \right)^{\ell-1}
         \left( 1- \frac{\varepsilon}{N}\right)^{N-\ell} \mathds{1}_{0\le \varepsilon \le N},
\end{eqnarray}
that is
$$
    \frac{\varepsilon_\ell}{N} \sim {\rm Beta}_1(\ell, N+1 -\ell)
$$
is a univariate beta-variable;\newline\newline
(ii) the joint probability density $g_{\ell m}(\varepsilon,\varepsilon^\prime)$ of the $\ell$-th and $m$-th ordered eigenlevels ($\ell< m$) equals
\begin{eqnarray}\fl \label{gLM-exp}
    \qquad g_{\ell m} (\varepsilon,\varepsilon^\prime) = \frac{1}{N^2} \frac{\Gamma(N+1)}{\Gamma(\ell) \Gamma(m-\ell) \Gamma(N+1-m)} \nonumber\\
    \qquad \times
    \left( \frac{\varepsilon}{N}\right)^{\ell-1} \left( \frac{\varepsilon^\prime}{N} - \frac{\varepsilon}{N}\right)^{m-\ell-1}
    \left( 1 -  \frac{\varepsilon^\prime}{N} \right)^{N-m} \mathds{1}_{0\le \varepsilon \le \varepsilon^\prime \le N},
\end{eqnarray}
that is
$$
    \left( \frac{\varepsilon_\ell}{N}, \frac{\varepsilon_m}{N} \right) \sim {\rm Beta}_2(\ell, m-\ell, N+1 -m)
$$
is a bivariate beta-variable~\cite{M-1959}.
\end{proposition}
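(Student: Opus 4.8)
\medskip
\noindent
\textbf{Proof proposal.} The plan is to reduce the statement to the classical computation of order statistics for independent uniform random variables and then to recognise the resulting densities as beta laws.

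First I would note that the unfolding map $\lambda\mapsto\int_{-\infty}^{\lambda}d\lambda'\,\varrho_N(\lambda')=NF(\lambda)$, with $F$ the cumulative distribution function of $f$, is non-decreasing; hence it carries the ordered eigenvalues $\lambda_1\le\dots\le\lambda_N$ into the ordered unfolded eigenlevels $\varepsilon_1\le\dots\le\varepsilon_N$, so that ordering and unfolding indeed commute. Consequently $\{\varepsilon_1\le\dots\le\varepsilon_N\}$ is nothing but the order-statistics sample of $N$ independent ${\rm U}(0,N)$ variables, the marginal law $f_{\varepsilon_j}={\mathcal U}_N$ having already been established above. Absolute continuity of $f$ makes ties a null event, so the ordering is almost surely strict. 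Starting from the joint density $N^{-N}\mathds{1}_{[0,N]^N}$ of the unordered $(\varepsilon_1,\dots,\varepsilon_N)$, the joint density of the ordered sample is
\[
    \frac{N!}{N^{N}}\,\mathds{1}_{0\le\varepsilon_1\le\dots\le\varepsilon_N\le N}.
\]

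Next I would obtain (i) by integrating this density over every coordinate except the $\ell$-th one: integration over $0\le\varepsilon_1\le\dots\le\varepsilon_{\ell-1}\le\varepsilon$ yields $\varepsilon^{\ell-1}/(\ell-1)!$, and integration over $\varepsilon\le\varepsilon_{\ell+1}\le\dots\le\varepsilon_N\le N$ yields $(N-\varepsilon)^{N-\ell}/(N-\ell)!$; multiplying the two, rewriting the factorials through $\Gamma(N+1)$, $\Gamma(\ell)$, $\Gamma(N+1-\ell)$ and extracting the powers of $N$ gives precisely Eq.~(\ref{gL-exp}), whence $\varepsilon_\ell/N\sim{\rm Beta}_1(\ell,N+1-\ell)$. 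Equivalently, one may invoke the elementary multinomial count: $\ell-1$ levels below $\varepsilon$, one in $[\varepsilon,\varepsilon+d\varepsilon]$, and $N-\ell$ above, weighted by $N!/[(\ell-1)!\,(N-\ell)!]$.

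Part (ii) is the same argument with two pinned coordinates $\varepsilon\le\varepsilon'$: for $\ell<m$, integrating the ordered joint density over $0\le\varepsilon_1\le\dots\le\varepsilon_{\ell-1}\le\varepsilon$, over $\varepsilon\le\varepsilon_{\ell+1}\le\dots\le\varepsilon_{m-1}\le\varepsilon'$, and over $\varepsilon'\le\varepsilon_{m+1}\le\dots\le\varepsilon_N\le N$ produces $\varepsilon^{\ell-1}/(\ell-1)!$, $(\varepsilon'-\varepsilon)^{m-\ell-1}/(m-\ell-1)!$ and $(N-\varepsilon')^{N-m}/(N-m)!$ respectively; collecting these, rewriting factorials as Gamma functions and extracting powers of $N$ reproduces Eq.~(\ref{gLM-exp}), and hence $(\varepsilon_\ell/N,\varepsilon_m/N)\sim{\rm Beta}_2(\ell,m-\ell,N+1-m)$. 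The derivation holds verbatim for every $N\in{\mathbb N}$. I do not anticipate any genuine obstacle here: the only point demanding attention is the combinatorial bookkeeping of which coordinates fall into which of the three intervals $[0,\varepsilon)$, $(\varepsilon,\varepsilon')$, $(\varepsilon',N]$ when assembling the joint density, and the cleanest way to keep it under control is precisely the marginalisation of the explicit ordered density displayed above.
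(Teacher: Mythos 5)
Your proposal is correct and follows essentially the same route as the paper: the paper writes down the multinomial-count expression for the marginal densities of one and two order statistics of $N$ independent ${\rm U}(0,N)$ variables and performs the elementary integrals, which is exactly the computation you carry out (your marginalisation of the ordered joint density and the multinomial count you mention are the same bookkeeping). No gaps.
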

\begin{proof}
  Given the joint probability density of all $N$ unfolded (not ordered!) eigenvalues $\prod_{j=1}^N {\mathcal U}_N(\varepsilon_j)$, we observe
  \begin{eqnarray} \label{g1-interm} \fl
    g_\ell(\varepsilon) = {{N}\choose{\ell-1,1,N-\ell}} \left(\prod_{j=1}^{\ell-1} \int_{0}^{\varepsilon} d\varepsilon_j
    {\mathcal U}_N(\varepsilon_j)\right) {\mathcal U}_N(\varepsilon) \left(\prod_{k=\ell+1}^{N} \int_{\varepsilon}^{N} d\varepsilon_k
    {\mathcal U}_N(\varepsilon_k)\right)
  \end{eqnarray}
and
  \begin{eqnarray} \label{g2-interm} \fl
    g_{\ell m}(\varepsilon, \varepsilon^\prime) = {{N}\choose{\ell-1,1,m-\ell-1, 1, N-m}} \left(\prod_{j=1}^{\ell-1} \int_{0}^{\varepsilon} d\varepsilon_j
    {\mathcal U}_N(\varepsilon_j)\right) {\mathcal U}_N(\varepsilon) \nonumber\\
     \times  \left(\prod_{k=\ell+1}^{m-1} \int_{\varepsilon}^{\varepsilon^{\prime}} d\varepsilon_k
    {\mathcal U}_N(\varepsilon_k)\right) {\mathcal U}_N(\varepsilon^\prime)
    \left(\prod_{n=m+1}^{N} \int_{\varepsilon^\prime}^{N} d\varepsilon_n
    {\mathcal U}_N(\varepsilon_n)\right),
  \end{eqnarray}
where
$$
  {{N}\choose{\ell_1,\dots,\ell_n}} = \frac{\Gamma(N+1)}{\prod_{j=1}^n \Gamma(\ell_j+1)}
$$
is a multinomial coefficient. Performing simple integrals in Eqs.~(\ref{g1-interm}) and (\ref{g2-interm}), we reproduce Eqs.~(\ref{gL-exp}) and (\ref{gLM-exp}), respectively.
\end{proof}
\noindent\newline
{\it Infinite stationarity of level spacings.}---Interestingly, a sequence of level spacings associated with unfolded and ordered eigenlevels in an ensemble of
random diagonal matrices is {\it infinitely stationary}, see Definition~\ref{def-inf-stationary}.

\begin{proposition}\label{prop-inf}
  Let $0\le \varepsilon_1\le \dots \le \varepsilon_N \le N$ be a set of unfolded eigenlevels generated by the map Eq.~(\ref{map}) out of ordered eigenvalues $\{ \lambda_1 \le \dots \le \lambda_N\}$ of a random diagonal matrix. Then an associated sequence of level spacings $\{s_1,\dots,s_N\}$ between consecutive eigenlevels
  ($s_\ell=\varepsilon_\ell - \varepsilon_{\ell-1}$ with $\ell = 1,\dots, N$ and $\varepsilon_0=0$) is infinitely stationary in the sense of Definition~\ref{def-inf-stationary}.
\end{proposition}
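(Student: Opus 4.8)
The plan is to read off infinite stationarity directly from the explicit joint densities already secured in Proposition~\ref{prop-JPDF-ordered}, via the aggregation property of the Dirichlet distribution. Fix $\ell>m\ge 1$. Applying part (ii) of Proposition~\ref{prop-JPDF-ordered} with the two indices interchanged so that the smaller one is $m$, the pair $(\varepsilon_m/N,\varepsilon_\ell/N)$ is bivariate beta, ${\rm Beta}_2(m,\ell-m,N+1-\ell)$; equivalently, the triple of normalized gaps
\[
    \left( \frac{\varepsilon_m}{N},\; \frac{\varepsilon_\ell-\varepsilon_m}{N},\; 1-\frac{\varepsilon_\ell}{N}\right)
\]
is Dirichlet-distributed with parameters $(m,\,\ell-m,\,N+1-\ell)$, as is visible from the density in Eq.~(\ref{gLM-exp}).

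I would then invoke the standard fact that merging a block of coordinates of a Dirichlet vector produces again a Dirichlet vector with the corresponding partial sums as parameters. Aggregating the first and third coordinates above isolates $(\varepsilon_\ell-\varepsilon_m)/N$ with the univariate law ${\rm Beta}_1\!\left(\ell-m,\, m+(N+1-\ell)\right) = {\rm Beta}_1\!\left(\ell-m,\, N+1-(\ell-m)\right)$. On the other hand, part (i) of Proposition~\ref{prop-JPDF-ordered} gives $\varepsilon_{\ell-m}/N \sim {\rm Beta}_1(\ell-m,\, N+1-(\ell-m))$. The two laws coincide, hence $\varepsilon_\ell-\varepsilon_m$ and $\varepsilon_{\ell-m}$ are equal in distribution and, in particular, $\langle(\varepsilon_\ell-\varepsilon_m)^q\rangle = \langle\varepsilon_{\ell-m}^q\rangle$ for every $q\in{\mathbb N}$. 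The boundary case $m=0$ is trivial, since $\varepsilon_0=0$ makes this identity tautological. By Definition~\ref{def-inf-stationary}, this is exactly infinite stationarity.

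A more probabilistic route, worth stating because it exhibits the mechanism, starts from the fact (established just above in Section~\ref{RDM-unf}) that the unfolded eigenlevels are i.i.d.\ ${\rm U}(0,N)$, so the $N+1$ spacings $s_1,\dots,s_N,\,s_{N+1}:=N-\varepsilon_N$ of their order statistics are exchangeable. Since $\varepsilon_\ell-\varepsilon_m = s_{m+1}+\dots+s_\ell$ and $\varepsilon_{\ell-m}=s_1+\dots+s_{\ell-m}$ are sums of the same number $\ell-m$ of (consecutive) spacings, exchangeability forces equality in law. I do not anticipate a real obstacle: the only steps needing a line of justification are the Dirichlet aggregation identity — or, in the second route, the exchangeability of uniform order-statistic spacings including the right boundary gap $N-\varepsilon_N$ — and the trivial $m=0$ case; the bookkeeping of the ${\rm Beta}_1$ parameters is routine.
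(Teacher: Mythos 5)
Your proof is correct, and it reaches the conclusion by a genuinely different (and in fact stronger) route than the paper's. The paper proves the proposition by explicitly integrating against the densities of Proposition~\ref{prop-JPDF-ordered}, obtaining the closed-form moments $\langle \varepsilon_k^q\rangle = N^q\,\frac{\Gamma(N+1)}{\Gamma(N+1+q)}\,\frac{\Gamma(k+q)}{\Gamma(k)}$ and $\langle (\varepsilon_\ell-\varepsilon_m)^q\rangle = N^q\,\frac{\Gamma(N+1)}{\Gamma(N+1+q)}\,\frac{\Gamma(\ell-m+q)}{\Gamma(\ell-m)}$ [Eqs.~(\ref{1-m}) and (\ref{2-m})], and then comparing the two at $k=\ell-m$. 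You avoid all integration: recasting Eq.~(\ref{gLM-exp}) as the statement that $\bigl(\varepsilon_m/N,\,(\varepsilon_\ell-\varepsilon_m)/N,\,1-\varepsilon_\ell/N\bigr)$ is Dirichlet with parameters $(m,\ell-m,N+1-\ell)$, the marginal/aggregation property yields $(\varepsilon_\ell-\varepsilon_m)/N\sim {\rm Beta}_1(\ell-m,\,N+1-(\ell-m))$, which by Eq.~(\ref{gL-exp}) is exactly the law of $\varepsilon_{\ell-m}/N$; hence $\varepsilon_\ell-\varepsilon_m$ and $\varepsilon_{\ell-m}$ are equal in distribution, which is strictly stronger than the moment identity Eq.~(\ref{L-1-inf}) required by Definition~\ref{def-inf-stationary}. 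Your second route --- exchangeability of the $N+1$ spacings (including the boundary gap $N-\varepsilon_N$) of the order statistics of i.i.d.\ ${\rm U}(0,N)$ variables, so that any sum of $\ell-m$ consecutive spacings has the same law --- is equally sound and bypasses Proposition~\ref{prop-JPDF-ordered} altogether. What each approach buys: yours is shorter, integration-free, and delivers equality in law rather than only equality of moments; the paper's explicit gamma-function formulas are, however, not wasted effort, since the mean Eq.~(\ref{m-L}), covariance Eq.~(\ref{v-L}) and variance Eq.~(\ref{var-over-delta2}) used in Section~\ref{PS-RDM} are read off directly from Eqs.~(\ref{1-m})--(\ref{2-m}), so with your argument those low-order moments would still have to be computed separately for the power-spectrum calculation.
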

\begin{proof}
  Use Eqs.~(\ref{gL-exp}) and (\ref{gLM-exp}) of Proposition~\ref{prop-JPDF-ordered} to evaluate ($q \ge 0$)
\begin{eqnarray} \label{1-m}
    \langle \varepsilon_k^q\rangle = \int_{0}^{N} d\varepsilon \, \varepsilon^q g_k(\varepsilon) = N^q \frac{\Gamma(N+1)}{\Gamma(N+1+q)}\frac{\Gamma(k+q)}{\Gamma(k)},
\end{eqnarray}
and ($\ell > m$)
\begin{eqnarray}\label{2-m}
    \langle (\varepsilon_\ell -\varepsilon_m)^q\rangle &=& \int_{0}^{N} d\varepsilon \int_{0}^{N} d\varepsilon^\prime \, (\varepsilon-\varepsilon^\prime)^q
    g_{m \ell}(\varepsilon^\prime,\varepsilon)\nonumber\\
     &=& N^q \frac{\Gamma(N+1)}{\Gamma(N+1+q)}\frac{\Gamma(\ell-m+q)}{\Gamma(\ell-m)}.
\end{eqnarray}
Comparing Eq.~(\ref{1-m}) taken at $k=\ell-m$ with Eq.~(\ref{2-m}), we observe the equality as required by Eq.~(\ref{L-1-inf}) of Definition~\ref{def-inf-stationary}.
\end{proof}

\subsection{Power spectrum of complete and truncated sequences}\label{PS-RDM}
Now that the spadework has all been done, an exact calculation of the power spectrum becomes starightforward.
\noindent\newline\newline
{\it Complete spectrum.}---Due to stationarity of level spacings proven in Proposition~\ref{prop-inf}, the power spectrum of a complete eigenvalue sequence can be determined by virtue of Theorem~\ref{PS-stationary-main}. Since both the mean
\begin{eqnarray} \label{m-L}
    \langle \varepsilon_{\ell} \rangle = \frac{N}{N+1}\,\ell
\end{eqnarray}
and the covariance
\begin{eqnarray}\label{v-L}
    {\rm cov}[\varepsilon_\ell,\varepsilon_m] =  \frac{N^2}{(N+1)(N+2)}\left(
        {\rm min}(\ell,m) - \frac{\ell m}{N+1}
    \right)
\end{eqnarray}
can be read off from Eqs.~(\ref{1-m}) and (\ref{2-m}), the power spectrum of a complete sequence can be calculated from Eq.~(\ref{smd-sum}) after substituting
\begin{eqnarray} \label{var-over-delta2}
    \frac{1}{\Delta^2}{\rm var}[\varepsilon_\ell] =  \frac{N+1}{N+2}\ell \left(
        1 - \frac{\ell}{N+1}
    \right)
\end{eqnarray}
therein.

\begin{remark}
The first, linear in $\ell$, term in Eq.~(\ref{var-over-delta2}) essentially coincides with the variance of $\ell$-th ordered eigenvalue in a simple model of spectral sequences with identically distributed {\it uncorrelated} spacings considered in Section 1.3 of Ref.~\cite{ROK-2020}. The second, {\it negative} (and quadratic in $\ell$), term in Eq.~(\ref{var-over-delta2}) is a fingerprint of weak albeit long-range {\it negative} correlations between the {\it spacings} which appear to be an intrinsic feature~\footnote[8]{Indeed, Eqs.~(\ref{m-L}) and (\ref{v-L}) imply that the Pearson correlation coefficient $\rho[s_\ell,s_m]$ of $\ell$-th and $m$-th level {\it spacings} equals
$$
    \rho[s_\ell,s_m] = \left\{
                         \begin{array}{cc}
                           \displaystyle 1, & \hbox{$\ell=m$;} \\
                           \displaystyle -1/N, & \hbox{$\ell \neq m$.}
                         \end{array}
                       \right.
$$} of spectral fluctuations in random diagonal matrices. Hence, as soon as the {\it complete} spectra are concerned, the power spectrum in a model of random diagonal matrices should {\it differ} from the one in a model of spectral sequences with uncorrelated spacings. Explicit calculations confirm this expectation.
\hfill $\blacksquare$
\end{remark}

\begin{proposition}\label{RDM-PS-complete}
   Let $\{0 \le \varepsilon_1 \le \dots \le \varepsilon_N \le N\}$ be a complete sequence of unfolded eigenlevels, $N \in {\mathbb N}$, generated by the map
   Eq.~(\ref{map}) out of ordered eigenvalues $\{ \lambda_1 \le \dots \le \lambda_N\}$ of a random diagonal matrix. The power spectrum of such a sequence equals
\begin{eqnarray}\label{smd-sum-33}
    S_N(\omega) = \frac{N+1}{N+2}\left( S_N^{\rm{(u)}}(\omega) +  S_N^{\rm{(c)}}(\omega) \right),
\end{eqnarray}
where
\begin{eqnarray} \label{SNU}
    S_N^{\rm{(u)}}(\omega) =  \frac{2N+1}{4N} \frac{1}{\sin^2(\omega/2)} \left(
        1 - \frac{1}{2N+1} \frac{\sin\left((N+1/2)\omega\right)}{\sin(\omega/2)}
    \right)
\end{eqnarray}
and
\begin{eqnarray} \label{SNC}
  S_N^{\rm{(c)}}(\omega)  = - \frac{1}{4 \sin^2(\omega/2)} \frac{N}{N+1} \nonumber\\
  \quad\quad \times \left\{
        1 - \frac{2}{N} \frac{\sin(\omega N/2)}{\sin(\omega/2)} \cos\left( \frac{\omega}{2}(N+1)\right) + \frac{1}{N^2}\frac{\sin^2(\omega N/2)}{\sin^2(\omega/2)}
    \right\}.
\end{eqnarray}
\end{proposition}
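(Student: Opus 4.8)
The plan is to evaluate the power spectrum straight from Definition~\ref{def-01} --- no truncation is involved here, so it applies verbatim --- feeding in the covariance of \emph{ordered} eigenlevels, Eq.~(\ref{v-L}), together with the mean spacing $\Delta = N/(N+1)$ read off from Eq.~(\ref{m-L}); equivalently one may substitute the variance Eq.~(\ref{var-over-delta2}) into the operator representation of Theorem~\ref{PS-stationary-main}, which is legitimate because the RDM spectrum has (infinitely-)stationary spacings by Proposition~\ref{prop-inf}. Since $\langle\delta\varepsilon_\ell\delta\varepsilon_m\rangle={\rm cov}[\varepsilon_\ell,\varepsilon_m]$ and the double sum in Eq.~(\ref{ps-def}) is manifestly real (the covariance is symmetric, so $\ell\leftrightarrow m$ conjugates the summand), one obtains at once, with $z=e^{i\omega}$,
\begin{eqnarray}
\fl\quad S_N(\omega) = \frac{N+1}{N+2}\left[\frac{1}{N}\sum_{\ell,m=1}^N {\rm min}(\ell,m)\, z^{\ell-m} - \frac{1}{N(N+1)}\sum_{\ell,m=1}^N \ell m\, z^{\ell-m}\right], \nonumber
\end{eqnarray}
which is exactly the announced splitting: $S_N^{\rm{(u)}}$ is the `${\rm min}$' sum divided by $N$, while $S_N^{\rm{(c)}}$ is minus the `$\ell m$' sum divided by $N(N+1)$. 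Note that $S_N^{\rm{(u)}}$ coincides with the finite--$N$ power spectrum of the reference model with i.i.d.\ uncorrelated spacings (Section~1.3 of Ref.~\cite{ROK-2020}), so it may alternatively be quoted from there.

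Both sums are then closed-form exercises. For the second, $\sum_{\ell,m=1}^N\ell m\,z^{\ell-m}=\left|\sum_{\ell=1}^N\ell z^\ell\right|^2$ on $|z|=1$, and the standard identity $\sum_{\ell=1}^N\ell z^\ell = z\,(G-Nz^N)/(1-z)$ with $G=\sum_{k=0}^{N-1}z^k=(1-z^N)/(1-z)$ reduces the modulus squared to $|G-Nz^N|^2/|1-z|^2$; expanding $|G-Nz^N|^2=|G|^2-2N\,{\rm Re}(G\,\overline{z^N})+N^2$ and inserting $|1-z|^2=4\sin^2(\omega/2)$, $|G|^2=\sin^2(N\omega/2)/\sin^2(\omega/2)$, and ${\rm Re}(G\,\overline{z^N})=\sum_{j=1}^N\cos(j\omega)=\sin(N\omega/2)\cos((N+1)\omega/2)/\sin(\omega/2)$ produces precisely the bracketed three-term expression of Eq.~(\ref{SNC}). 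For the first sum I would write ${\rm min}(\ell,m)=\sum_{k=1}^N\mathds{1}_{k\le\ell}\,\mathds{1}_{k\le m}$, so that $\sum_{\ell,m}{\rm min}(\ell,m)z^{\ell-m}=\sum_{k=1}^N\left|\sum_{\ell=k}^N z^\ell\right|^2=\sin^{-2}(\omega/2)\sum_{j=1}^N\sin^2(j\omega/2)$; the Dirichlet-type sum $\sum_{j=1}^N\cos(j\omega)=\sin(N\omega/2)\cos((N+1)\omega/2)/\sin(\omega/2)$ and the product-to-sum identity $2\sin(N\omega/2)\cos((N+1)\omega/2)=\sin((N+1/2)\omega)-\sin(\omega/2)$ then collapse it to Eq.~(\ref{SNU}). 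Reassembling the two pieces with the common prefactor $(N+1)/(N+2)$ yields Eq.~(\ref{smd-sum-33}).

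The computation is entirely elementary; the only genuine work is trigonometric bookkeeping, and the one step that will need care is keeping track of the factorizations above --- it is the cancellation of a factor $(1-z)$ in $\sum_\ell\ell z^\ell$, and the `$\sum_k\mathds{1}\,\mathds{1}$' rewriting of ${\rm min}(\ell,m)$, that prevent spurious high-order poles at $z=1$ and let the answer be written with only $\sin^{\pm2}(\omega/2)$. I would finish by checking that the apparent singularity at $\omega\to0$ is removable (each bracket in Eqs.~(\ref{SNU})--(\ref{SNC}) vanishes there), consistently with the direct value $S_N(0)={\rm var}[\varepsilon_1+\dots+\varepsilon_N]/(N\Delta^2)=(N+1)^2/12$ obtained from the $\varepsilon_j\sim{\rm U}(0,N)$ being i.i.d.; this doubles as a convenient numerical sanity check. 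Proceeding instead through Theorem~\ref{PS-stationary-main} leads to the same two sums after splitting ${\rm var}[\varepsilon_\ell]/\Delta^2$ into its parts linear and quadratic in $\ell$ and applying $z\partial_z-N-(1-z^{-N})/(1-z)$ term by term, with the identical obstacle.
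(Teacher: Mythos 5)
Your proposal is correct: I checked the reduction of Definition~\ref{def-01} with $\Delta=N/(N+1)$ and the covariance Eq.~(\ref{v-L}) to the two lattice sums, the closed forms $\sum_{\ell,m}\min(\ell,m)z^{\ell-m}=\sin^{-2}(\omega/2)\sum_{j=1}^N\sin^2(j\omega/2)$ and $\sum_{\ell,m}\ell m\,z^{\ell-m}=\bigl|\sum_\ell \ell z^\ell\bigr|^2$, and the trigonometric bookkeeping; they reproduce Eqs.~(\ref{SNU})--(\ref{SNC}) exactly, and the $\omega\to0$ check $S_N(0)=(N+1)^2/12$ is a sound consistency test. The route differs mildly from the paper's: the paper's proof is the one-line substitution of the variance Eq.~(\ref{var-over-delta2}) into the operator formula Eq.~(\ref{smd-sum}) of Theorem~\ref{PS-stationary-main}, which is legitimate only because of the stationarity established in Proposition~\ref{prop-inf}, and it leaves all the trigonometric work implicit; you instead work directly from Definition~\ref{def-01} using the full two-point covariance Eq.~(\ref{v-L}), so stationarity is not needed as an input (it is already encoded in the explicit covariance), and the split into the uncorrelated-spacings piece $S_N^{\rm (u)}$ (the $\min$ sum) and the negative correlation piece $S_N^{\rm (c)}$ (the $\ell m$ sum) emerges transparently, which nicely anticipates Remark~\ref{remark-origin-ps}. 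You also correctly note that substituting Eq.~(\ref{var-over-delta2}) into Theorem~\ref{PS-stationary-main} --- the paper's actual route --- leads to the same two sums, so the two arguments are equivalent in substance; yours simply trades the general stationary-spectrum machinery for a direct, fully explicit computation.
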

\begin{proof}
Make use of Theorem~\ref{PS-stationary-main}, Proposition~\ref{prop-inf} and substitute Eq.~(\ref{var-over-delta2}) into Eq.~(\ref{smd-sum}) to complete the proof.
\end{proof}
\begin{remark} \label{remark-origin-ps}
  In Eq.~(\ref{smd-sum-33}), the contribution proportional to $S_N^{\rm{(u)}}(\omega)$ corresponds to the power spectrum of $N$ consecutive eigenlevels as if their spacings, distributed identically, were uncorrelated. The second -- {\it negative} -- contribution determined by the function $S_N^{\rm{(c)}}(\omega)$, originates from weak negative long-range correlations of level spacings characteristic of ordered spectra of random diagonal matrices.
\hfill $\blacksquare$
\end{remark}
\begin{remark}
As $N\rightarrow \infty$, two scaling limits of the power spectrum $S_N(\omega)$ can readily be identified in two regimes: in the infrared frequency domain $\omega={\mathcal O}(N^{-1})$ and in the bulk $\omega={\mathcal O}(N^0)$. Corresponding limiting laws can be read off from Corollary~\ref{PS-SL-trunc} below upon setting $\alpha_L=1$ therein.
\hfill $\blacksquare$
\end{remark}
\noindent\newline
{\it Truncated spectrum.}---Following the discussion in Section~\ref{PS-trunc-ss}, we adopt Definition~\ref{def-01-truncated} of the power spectrum for truncated subsequence $\{\varepsilon_{M+1} \le \dots \le \varepsilon_{M+L}\}$ of the complete spectrum $\{0 \le \varepsilon_1 \le \dots \le \varepsilon_N \}$. The power spectrum of such a truncated subsequence is determined as follows.

\begin{theorem}\label{SNL-theorem}
   Let $\{0 \le \varepsilon_1 \le \dots \le \varepsilon_N \le N\}$ be a complete sequence of unfolded eigenlevels, $N \in {\mathbb N}$, generated by the map
   Eq.~(\ref{map}) out of ordered eigenvalues $\{ \lambda_1 \le \dots \le \lambda_N\}$ of a random diagonal matrix, and let $\{ \varepsilon_{M+1} \le \dots \le \varepsilon_{M+L}\}$ be a spectral subsequence of the length $L \in {\mathbb N}$ obtained from the complete sequence by omitting both $M\in {\mathbb N} \cup \{0\}$ low-lying eigenlevels $\{\varepsilon_1 \le \dots \le \varepsilon_{M}\}$ and $N-M-L$ high-lying eigenlevels $\{ \varepsilon_{M+L+1} \le \dots \le \varepsilon_{N}\}$, such that $M+L\le N$. The power spectrum of the subsequence $\{ \varepsilon_{M+1} \le \dots \le \varepsilon_{M+L}\}$ equals
\begin{eqnarray}\label{smd-sum-L}
    S_{N;L}(\omega) = \frac{N+1}{N+2} S_L^{\rm{(u)}}(\omega) + \frac{L+1}{N+2} S_L^{\rm{(c)}}(\omega),
\end{eqnarray}
where
\begin{eqnarray} \label{SLu}
    S_L^{\rm{(u)}}(\omega) =  \frac{2L+1}{4L} \frac{1}{\sin^2(\omega/2)} \left(
        1 - \frac{1}{2L+1} \frac{\sin\left((L+1/2)\omega\right)}{\sin(\omega/2)}
    \right)
\end{eqnarray}
and
\begin{eqnarray} \label{SLc}
  S_L^{\rm{(c)}}(\omega)  = - \frac{1}{4 \sin^2(\omega/2)} \frac{L}{L+1}\nonumber\\
  \qquad \times \left\{
        1 - \frac{2}{L} \frac{\sin(\omega L/2)}{\sin(\omega/2)} \cos\left( \frac{\omega}{2}(L+1)\right) + \frac{1}{L^2}\frac{\sin^2(\omega L/2)}{\sin^2(\omega/2)}
    \right\}.
\end{eqnarray}
\end{theorem}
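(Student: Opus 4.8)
The plan is to reduce Theorem~\ref{SNL-theorem} to the finite trigonometric sums already evaluated in the proof of Proposition~\ref{RDM-PS-complete}, through the general truncated-spectrum formula of Theorem~\ref{PS-stationary-truncated}. First I would note that, by Proposition~\ref{prop-inf}, the ordered unfolded RDM spectrum has infinitely-stationary level spacings and hence, a fortiori, stationary spacings in the sense of Definition~\ref{def-stationary}; therefore Theorem~\ref{PS-stationary-truncated} applies and yields
\[
  S_{N;L}(\omega) = \frac{1}{L\Delta^2}\,{\rm Re}\left( z\frac{\partial}{\partial z} - L - \frac{1-z^{-L}}{1-z}\right)\sum_{\ell=1}^{L}{\rm var}[\varepsilon_\ell]\,z^\ell, \qquad z=e^{i\omega}.
\]
The essential subtlety --- the same one flagged in the remark and the footnote around Eq.~(\ref{SML-new}) --- is that ${\rm var}[\varepsilon_\ell]$ here is the variance of the $\ell$-th ordered eigenlevel of the \emph{ambient complete} spectrum of length $N$, i.e. the quantity in Eq.~(\ref{var-over-delta2}), so the summand carries an $N$-dependence while the differential operator and the summation range involve only the truncated length $L$.

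Next I would insert Eq.~(\ref{var-over-delta2}) in the split form
\[
  \frac{1}{\Delta^2}{\rm var}[\varepsilon_\ell] = \frac{N+1}{N+2}\,\ell - \frac{1}{N+2}\,\ell^2
\]
and use linearity of the operator $\mathcal{D}_L = z\,\partial/\partial z - L - (1-z^{-L})/(1-z)$ to write $S_{N;L}(\omega)$ in terms of ${\rm Re}\,\mathcal{D}_L\sum_{\ell=1}^L\ell\,z^\ell$ and ${\rm Re}\,\mathcal{D}_L\sum_{\ell=1}^L\ell^2\,z^\ell$. These are exactly the two sums whose images under $z\,\partial/\partial z - N - (1-z^{-N})/(1-z)$, with $N$ as the upper limit, were worked out in deriving Eqs.~(\ref{SNU})--(\ref{SNC}); relabelling $N\mapsto L$ there gives
\[
  \frac{1}{L}\,{\rm Re}\,\mathcal{D}_L \sum_{\ell=1}^{L}\ell\,z^\ell = S_L^{\rm{(u)}}(\omega), \qquad
  \frac{1}{L}\,{\rm Re}\,\mathcal{D}_L \sum_{\ell=1}^{L}\ell^2\,z^\ell = -(L+1)\,S_L^{\rm{(c)}}(\omega),
\]
with $S_L^{\rm{(u)}}$ and $S_L^{\rm{(c)}}$ the explicit functions of Eqs.~(\ref{SLu})--(\ref{SLc}). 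Combining the two contributions with their weights $\frac{N+1}{N+2}$ and $-\frac{1}{N+2}$ then produces $S_{N;L}(\omega) = \frac{N+1}{N+2}S_L^{\rm{(u)}}(\omega) + \frac{L+1}{N+2}S_L^{\rm{(c)}}(\omega)$, which is Eq.~(\ref{smd-sum-L}).

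There is no genuine difficulty here beyond bookkeeping. The one point that must be handled with care is keeping $N$ and $L$ strictly apart: the operator, the upper summation limit and the functions $S_L^{\rm{(u)}}, S_L^{\rm{(c)}}$ are all controlled by the truncated length $L$, whereas the per-eigenlevel variance is frozen at its value in the length-$N$ complete spectrum and is thus controlled by $N$. This asymmetry is exactly what yields the combination $\frac{N+1}{N+2}S_L^{\rm{(u)}} + \frac{L+1}{N+2}S_L^{\rm{(c)}}$, as opposed to the $\frac{L+1}{L+2}(S_L^{\rm{(u)}} + S_L^{\rm{(c)}})$ that a careless $N\to L$ substitution in Proposition~\ref{RDM-PS-complete} would suggest.
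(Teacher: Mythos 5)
Your proposal is correct and follows essentially the paper's own route: invoke Theorem~\ref{PS-stationary-truncated} (stationarity being supplied by Proposition~\ref{prop-inf}) and substitute the RDM variance Eq.~(\ref{var-over-delta2}) into Eq.~(\ref{smd-sum-x}). Your explicit bookkeeping --- splitting the $N$-dependent variance into its linear and quadratic parts while the operator, summation range and the functions $S_L^{\rm{(u)}}, S_L^{\rm{(c)}}$ are controlled by $L$, and reusing (with $N\mapsto L$) the same trigonometric sums underlying Proposition~\ref{RDM-PS-complete} --- is exactly the computation the paper leaves implicit.
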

\noindent
\begin{proof}
Make use of Theorem~\ref{PS-stationary-truncated}, and substitute Eq.~(\ref{var-over-delta2}) into Eq.~(\ref{smd-sum-x}) to complete the proof.
\end{proof}
\begin{remark}
Equations (\ref{smd-sum-L}), (\ref{SLu}) and (\ref{SLc}) provide an exact solution for the power spectrum of truncated eigenvalue sequences in random diagonal matrices. Notice that this result holds after the entire sub-sequence $\{\varepsilon_{M+1} \le \dots \le \varepsilon_{M+L}\}$ has been collectively tuned by subtracting a `reference level' $\varepsilon_M$ from each eigenlevel of subsequence, see Remark~\ref{rem-collective}. In absence of such a tuning, the power spectrum will be distorted by parasitic oscillations originating from collective fluctuations of the subsequence as a whole, as discussed in Section~\ref{PS-trunc-ss}. An explicit formula for oscillatory signal $\delta\tilde{S}_{N;M,L}(\omega)$ can be determined from Eq.~(\ref{d-SML-new}). Indeed, substituting Eq.~(\ref{var-over-delta2}) therein, we derive:
\begin{eqnarray} \label{dS-osc}
    \delta\tilde{S}_{N;M,L}(\omega) =  \frac{M}{L}
          \frac{N-M-L}{N+2} \frac{\sin^2(\omega L/2)}{\sin^2(\omega/2)}.
\end{eqnarray}
Its amplitude can be significantly higher than the power spectrum $S_{N;L}(\omega)$ itself; see also Eq.~(\ref{SML-new-2}) and a discussion around it.
\hfill $\blacksquare$
\end{remark}
\begin{remark} \label{rem-correlations}
For the origin of the two terms in the power spectrum Eq.~(\ref{smd-sum-L}), see Remark~\ref{remark-origin-ps}. This time, however, a negative contribution $S_L^{\rm{(c)}}(\omega)$ deriving from weak long-range correlations between level spacings, bears a weight factor $(L+1)/(N+2)$. Since it becomes vanishingly small for subsequences of finite length $L$ as $N \rightarrow \infty$, the power spectrum of RDM sequences of finite length reduces to the power spectrum of finite eigenvalue sequences with uncorrelated spacings, see Section 1.3 of Ref.~\cite{ROK-2020}.
\hfill $\blacksquare$
\end{remark}

\begin{corollary}[Finite subsequence]\label{PS-inf-L-omega}
  Let $S_{N;L}(\omega)$ be the power spectrum of a spectral subsequence of length $L$ obtained from a complete spectrum of random diagonal matrices as specified in Theorem~\ref{SNL-theorem}. For $\omega={\mathcal O}(N^0)$ and $L={\mathcal O}(N^0)$, it holds that
\begin{eqnarray} \label{SL-omega}\fl \quad
    S_{\infty;L}(\omega) = \lim_{N\rightarrow \infty} S_{N;L}(\omega) \nonumber\\
    \fl\qquad\qquad\quad =
    \frac{2L+1}{4L} \frac{1}{\sin^2(\omega/2)} \left(
        1 - \frac{1}{2L+1} \frac{\sin\left((L+1/2)\omega\right)}{\sin(\omega/2)}
    \right).
\end{eqnarray}
\end{corollary}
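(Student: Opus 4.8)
The plan is to read the result directly off Theorem~\ref{SNL-theorem}. Equation~(\ref{smd-sum-L}) expresses $S_{N;L}(\omega)$ as a sum of two pieces, $\frac{N+1}{N+2}\,S_L^{\rm(u)}(\omega)$ and $\frac{L+1}{N+2}\,S_L^{\rm(c)}(\omega)$, and the functions $S_L^{\rm(u)}(\omega)$, $S_L^{\rm(c)}(\omega)$ given by Eqs.~(\ref{SLu}) and (\ref{SLc}) involve only $L$ and $\omega$ --- crucially, they carry no residual dependence on $N$. Since the hypothesis $\omega={\mathcal O}(N^0)$ and $L={\mathcal O}(N^0)$ means that $\omega$ and $L$ are held fixed as $N\rightarrow\infty$, the entire $N$-dependence of $S_{N;L}(\omega)$ is confined to the two rational prefactors.

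First I would note that $\frac{N+1}{N+2}\rightarrow 1$, so the first term converges to $S_L^{\rm(u)}(\omega)$. Next, for fixed $L$ and $\omega$ bounded away from integer multiples of $2\pi$ (so that $\sin^2(\omega/2)>0$), the curly-bracket factor in Eq.~(\ref{SLc}) is a finite trigonometric expression in $\omega$; hence $S_L^{\rm(c)}(\omega)$ is a finite number, and multiplying it by $\frac{L+1}{N+2}\rightarrow 0$ shows that the second term vanishes in the limit. Combining the two, $S_{\infty;L}(\omega)=\lim_{N\rightarrow\infty}S_{N;L}(\omega)=S_L^{\rm(u)}(\omega)$, which is exactly the right-hand side of Eq.~(\ref{SL-omega}).

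There is no genuine obstacle here: the corollary is an immediate consequence of the exact finite-$N$ formula of Theorem~\ref{SNL-theorem}, and the only point worth a sentence is the (elementary) boundedness of $S_L^{\rm(c)}(\omega)$ at fixed $L$, which guarantees that the weight factor $(L+1)/(N+2)$ really does kill the long-range correlation contribution as $N\rightarrow\infty$. The degenerate behaviour near $\omega=0$ falls outside the regime $\omega={\mathcal O}(N^0)$ assumed here and is instead captured by the infrared scaling $\omega={\mathcal O}(N^{-1})$ of the companion Corollary~\ref{PS-SL-trunc}, so no additional limiting care is required.
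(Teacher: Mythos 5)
Your argument is correct and is exactly the route the paper intends: Eq.~(\ref{smd-sum-L}) carries its only $N$-dependence in the two prefactors, so $(N+1)/(N+2)\rightarrow 1$ keeps $S_L^{\rm{(u)}}(\omega)$ while $(L+1)/(N+2)\rightarrow 0$ suppresses the bounded correlation term $S_L^{\rm{(c)}}(\omega)$, as also noted in Remark~\ref{rem-correlations}. Nothing further is needed.
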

\begin{remark}\label{R-S}
Equation~(\ref{SL-omega}) coincides with the power spectrum calculated in Ref.~\cite{ROK-2020} for random spectra with uncorrelated level spacings (with unit mean and variance).
As $L\rightarrow \infty$, two scaling limits of the power spectrum $S_{\infty;L}(\omega)$ have been identified in Ref.~\cite{ROK-2020} -- in the infrared frequency domain $\omega={\mathcal O}(L^{-1})$ and in the bulk $\omega={\mathcal O}(L^0)$. Corresponding limiting laws can alternatively be recovered from Eqs.~(\ref{S-minus-1}) and (\ref{S-0}) of Corollary~\ref{PS-SL-trunc} upon {\it formal} setting $\alpha_L=0$ therein.
\hfill $\blacksquare$
\end{remark}

In Fig.~\ref{Plot_2}, our theoretical predictions for regularized~\footnote[2]{~To get rid of the singularity of $S_{N;L(N)}(\omega)$ at $\omega=0$, we consider the combination $\omega^2 S_{N;L(N)}(\omega)$ instead of the power spectrum itself.} power spectrum are confronted with the results of numerical simulations performed for complete and truncated spectra of random diagonal matrices. Referring the reader to a figure caption for detailed explanations, we plainly notice a perfect agreement between the simulations and the theoretical results.

\begin{figure}
\includegraphics[width=\textwidth]{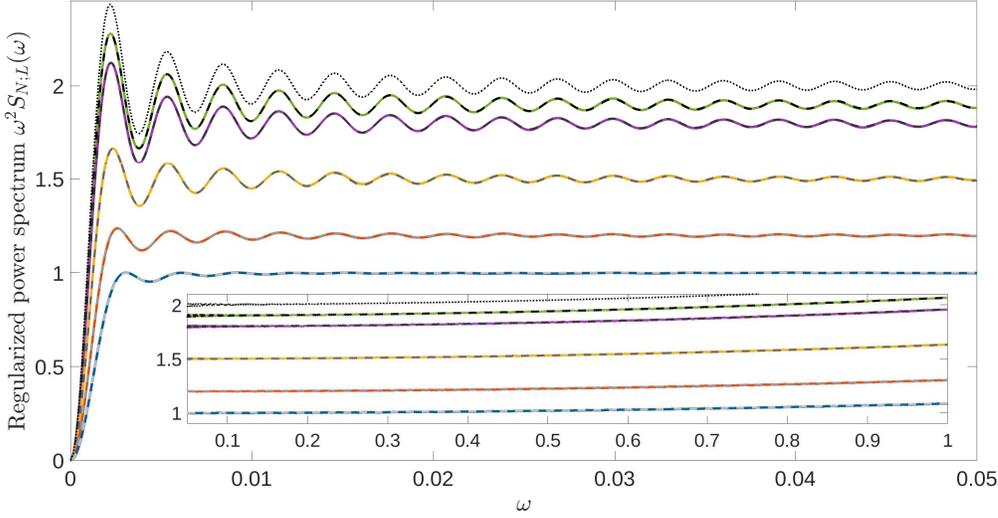}
\caption{
Regularized power spectrum $\omega^2 S_{N;L}(\omega)$ as a function of frequency $\omega$ for truncated RDM eigenlevel sequences characterized by various truncation parameters $\alpha_L$. (i)~Solid curves represent regularized power spectra computed numerically by sampling $L=2048$ consecutive eigenlevels out of complete spectra of random diagonal matrices whose $N=L/\alpha_L$ independent eigenvalues were uniformly distributed ${\rm U}(0, N)$. Averaged over $M=10^6$ realizations, the green, purple, yellow, red and blue curves mark the regularized power spectrum for truncation parameters $\alpha_L = 0.1$, $0.2$, $0.5$, $0.8$, and $1$, respectively. (ii)~Dashed lines on the top of colored ones display theoretical curves from Theorem~\ref{SNL-theorem}. (iii)~Dotted line represents a theoretical curve for the regularized power spectrum for eigenlevel sequences of the length $L=2048$ with independent and exponentially distributed level spacings (Corollary~\ref{PS-inf-L-omega} and Remark~\ref{R-S}). Its comparison with a simulated curve can be found in Fig.~1 of Ref.~\cite{ROK-2020}. Inset: the same graphs for higher frequencies.
}
\label{Plot_2}
\end{figure}

Motivated by a forthcoming discussion of validity of the form factor approximation in Section~\ref{FFA-status}, below we are going to concentrate on two scaling limits of the power spectrum for truncated eigenlevel sequences of random diagonal matrices as their dimensionality $N\rightarrow\infty$.
\begin{corollary} \label{PS-SL-trunc}
  Let $M=M(N)$ be the number of skipped low-lying eigenlevels and $L=L(N)$ be the length of the spectral subsequence such that $M(N)+L(N) \le N$. If $L(N) \rightarrow \infty$ as $N\rightarrow \infty$ and the two limits
\begin{eqnarray}\label{scales-LMN}
    \alpha_L = \lim_{N\rightarrow\infty} \frac{L(N)}{N}, \quad \alpha_M = \lim_{N\rightarrow\infty} \frac{M(N)}{N}
\end{eqnarray}
exist, then the regularized power spectrum $\omega^2 S_{N;L(N)}(\omega)$ admits the following scaling limits:
\begin{eqnarray}\label{S-minus-1}
    {\mathcal S}_{\alpha_L}^{(-1)}(\Omega) &=& \lim_{N\rightarrow\infty} \omega^2 S_{N;L(N)}(\omega)\Big|_{\omega=\Omega/L(N)} \nonumber\\
    &=&  2-\alpha_L -\alpha_L \frac{\sin^2(\Omega/2)}{(\Omega/2)^2} -2 (1-\alpha_L) \frac{\sin\Omega}{\Omega},
\end{eqnarray}
and
\begin{eqnarray}\label{S-0}
    {\mathcal S}_{\alpha_L}^{(0)}(\omega) &=& \lim_{N\rightarrow\infty}\omega^2 S_{N;L(N)}(\omega) \nonumber\\
        &=&
     \frac{2-\alpha_L}{4} \frac{\omega^2}{\sin^2(\omega/2)}, \quad \omega_0 < \omega \le \pi,
\end{eqnarray}
where $\omega_0>0$ can be arbitrary small but fixed. Furthermore, any other scaling limit of $S_{N;L(N)}$ is trivial, i.e. for $\omega=\tilde{\Omega}/L(N)^\delta$ with the scaling exponent $0<\delta<1$, we have:
\begin{eqnarray}\label{S-minus-lambda}
    {\mathcal S}_{\alpha_L}^{(-\delta)}(\tilde{\Omega}) = \lim_{N\rightarrow\infty} \omega^2 S_{N;L(N)}(\omega)\Big|_{\omega=\tilde{\Omega}/L(N)^\delta}
    = 2-\alpha_L.
\end{eqnarray}
\end{corollary}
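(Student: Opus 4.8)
The plan is to start from the exact finite-$N$ identity of Theorem~\ref{SNL-theorem}, namely $S_{N;L}(\omega) = \frac{N+1}{N+2}\,S_L^{(u)}(\omega) + \frac{L+1}{N+2}\,S_L^{(c)}(\omega)$ with $S_L^{(u)}$ and $S_L^{(c)}$ given explicitly by Eqs.~(\ref{SLu}) and (\ref{SLc}), multiply through by $\omega^2$, and substitute the relevant scaling ($\omega = \Omega/L(N)$, or $\omega$ fixed, or $\omega = \tilde\Omega/L(N)^\delta$) \emph{before} sending $N\to\infty$. Since $L(N)\to\infty$ and, by hypothesis, $L(N)/N\to\alpha_L$, the two $N$-dependent prefactors converge, $\frac{N+1}{N+2}\to 1$ and $\frac{L+1}{N+2}\to\alpha_L$; the parameter $\alpha_M$ is irrelevant precisely because $S_{N;L}$ does not depend on $M$ (Theorem~\ref{PS-stationary-truncated}, Remark~\ref{FF-rem}), which is worth stating at the outset. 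The problem thus reduces to computing the $L\to\infty$ limits of $\omega^2 S_L^{(u)}(\omega)$ and $\omega^2 S_L^{(c)}(\omega)$ in each of the three frequency windows and forming the combination $\lim \omega^2 S_L^{(u)} + \alpha_L\,\lim \omega^2 S_L^{(c)}$.

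The core is an elementary Taylor-expansion bookkeeping. In every window the regularizing factor exactly kills the singularity: $\omega^2/\sin^2(\omega/2)\to 4$ when $\omega=\Omega/L\to 0$ or $\omega=\tilde\Omega/L^\delta\to 0$, and it stays equal to the finite quantity $\omega^2/\sin^2(\omega/2)$ when $\omega$ is fixed; what differs is the fate of the oscillatory corrections. In the bulk ($\omega_0<\omega\le\pi$ fixed) the ratios $\sin((L+\tfrac12)\omega)/\sin(\omega/2)$ and $\sin(\omega L/2)/\sin(\omega/2)$ remain bounded, so every term carrying an explicit $1/L$ or $1/L^2$ prefactor drops, leaving $\omega^2 S_L^{(u)}\to\frac12\,\omega^2/\sin^2(\omega/2)$ and $\omega^2 S_L^{(c)}\to-\frac14\,\omega^2/\sin^2(\omega/2)$, whose weighted sum is Eq.~(\ref{S-0}). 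In the infrared window $\omega=\Omega/L$ one uses $(L+\tfrac12)\omega = \Omega + \Omega/(2L)\to\Omega$, $\omega L/2 = \Omega/2$, together with $(2L+1)\sin(\omega/2)\to\Omega$ and $L\sin(\omega/2)\to\Omega/2$; substituting and retaining all surviving contributions gives $\omega^2 S_L^{(u)}\to 2-2\sin\Omega/\Omega$ and $\omega^2 S_L^{(c)}\to -1+2\sin\Omega/\Omega-\sin^2(\Omega/2)/(\Omega/2)^2$, whose weighted sum collapses after rearrangement to Eq.~(\ref{S-minus-1}). In the intermediate window $\omega=\tilde\Omega/L^\delta$, $0<\delta<1$, one has $\omega\to 0$ yet $\omega L=\tilde\Omega L^{1-\delta}\to\infty$, so every oscillatory factor is bounded while its prefactor scales as $L^{\delta-1}$ or $L^{2\delta-2}$, both vanishing; hence $\omega^2 S_L^{(u)}\to 2$, $\omega^2 S_L^{(c)}\to-1$, and the weighted sum is the constant $2-\alpha_L$ of Eq.~(\ref{S-minus-lambda}).

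I would close by recording the two consistency checks that glue the regimes together: the $\Omega\to\infty$ limit of $\mathcal{S}^{(-1)}_{\alpha_L}(\Omega)$ and the $\omega\to 0^+$ limit of $\mathcal{S}^{(0)}_{\alpha_L}(\omega)$ both equal $2-\alpha_L$, matching the intermediate plateau. As for the main obstacle, there is no deep difficulty — only careful bookkeeping of which terms survive and which vanish in each window. The one genuine point of care is that the bulk law Eq.~(\ref{S-0}) must be asserted only for $\omega$ bounded away from the origin (hence the fixed $\omega_0>0$), because the $O(1/L)$ corrections are controlled uniformly only where $\sin(\omega/2)$ is bounded below; this is exactly why no single formula interpolates continuously down to $\omega=0$ and why the infrared regime genuinely has to be handled on its own.
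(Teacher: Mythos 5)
Your proposal is correct and follows essentially the same route the paper intends: the corollary is a direct asymptotic consequence of the exact formula of Theorem~\ref{SNL-theorem}, obtained by letting $\frac{N+1}{N+2}\to 1$, $\frac{L+1}{N+2}\to\alpha_L$ and expanding $\omega^2 S_L^{\rm (u)}(\omega)$ and $\omega^2 S_L^{\rm (c)}(\omega)$ in each frequency window, and your limiting values in all three regimes recombine to Eqs.~(\ref{S-minus-1}), (\ref{S-0}) and (\ref{S-minus-lambda}). The consistency checks and the remark on why $\alpha_M$ drops out (via Theorem~\ref{PS-stationary-truncated}) are also in line with the paper.
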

\noindent\newline
{\it Discussion.}---Corollary~\ref{PS-SL-trunc} indicates that the power spectrum of random diagonal matrices is affected by eigenlevel truncations at ${\it all}$ frequencies: the power spectrum depends on the truncation parameter $\alpha_L$ in both the infrared region $\omega={\mathcal O}(L(N)^{-1})$ and the bulk $\omega={\mathcal O}(L(N)^0)$, see Eqs.~(\ref{S-minus-1}) and (\ref{S-0}), respectively. This observation has important implications for a quantitative description of the power spectrum in closed quantum systems with integrable classical dynamics.

As far as a {\it complete spectrum} of random diagonal matrices is concerned, the limiting laws for the power spectrum are furnished by Corollary~\ref{PS-SL-trunc} taken at  $\alpha_L=1$. In this case, the power spectrum in the bulk and near the origin is described by
\begin{eqnarray} \label{S-int-qs-bulk-N}
    S_{N}(\omega) = \frac{1}{4} \times \frac{1}{\sin^2(\omega/2)} + {\mathcal O}(N^{-2}), \quad \omega_0 < \omega \le \pi,
\end{eqnarray}
and
\begin{eqnarray} \label{S-int-qs-infrared-N}
    \omega^2 S_{N}(\omega)\Big|_{\omega=\Omega/N} = 1 - \frac{\sin^2(\Omega/2)}{(\Omega/2)^2} + {\mathcal O}(N^{-1}),
\end{eqnarray}
respectively, where $\omega_0>0$ is arbitrarily small but fixed. The remainder terms in Eqs.~(\ref{S-int-qs-bulk-N}) and ~(\ref{S-int-qs-infrared-N}) follow from Proposition~\ref{RDM-PS-complete}.

Even though Eqs.~(\ref{S-int-qs-bulk-N}) and (\ref{S-int-qs-infrared-N}) have been derived for random diagonal matrices (which are widely believed to mimic spectral statistics in quantum systems with integrable classical dynamics), they do {\it not} describe the power spectrum in realistic quantum systems. Indeed, in any conceivable experimental setup, only a {\it finite} number $L={\mathcal O}(N^0)$ of eigenlevels -- out of {\it infinitely many} ($N\rightarrow\infty$) --  can be measured. This situation is described by Corollary~\ref{PS-inf-L-omega} which brings, as $L\rightarrow \infty$, {\it different} power spectrum laws in the bulk
\begin{eqnarray} \label{S-int-qs-bulk}
    S_{\infty;L}(\omega) = \frac{1}{2} \times \frac{1}{\sin^2(\omega/2)} + {\mathcal O}(L^{-1}), \quad \omega_0 < \omega \le \pi,
\end{eqnarray}
and near the origin $\omega=0$:
\begin{eqnarray} \label{S-int-qs-infrared}
    \omega^2 S_{\infty;L}(\omega)\Big|_{\omega=\Omega/L} = 2 \left(1 - \frac{\sin\Omega}{\Omega}\right) + {\mathcal O}(L^{-1}).
\end{eqnarray}

Importantly, {\it both} limiting laws [Eq.~(\ref{S-int-qs-bulk}) and (\ref{S-int-qs-infrared})] differ from their counterparts [Eqs.~(\ref{S-int-qs-bulk-N}) and (\ref{S-int-qs-infrared-N})] derived for complete RDM spectra. In particular, comparison of the leading terms in Eqs.~(\ref{S-int-qs-bulk-N}) and (\ref{S-int-qs-bulk}) reveals that, {\it at any finite frequency} (that is, away from the origin $\omega=0$), the power spectrum of infinite-dimensional random diagonal matrices is {\it twice smaller} than the power spectrum of a spectral subsequence of a large~\footnote[5]{This holds true in a wider setting, when a subsequence length $L(N)$ grows weaker than $N$ such that $L(N)/N\rightarrow 0$ as $N\rightarrow \infty$, see Corollary~\ref{PS-SL-trunc}.}  but {\it finite} length. The origin of such a reduction was highlighted in Remark~\ref{rem-correlations}.

This discussion leads us to conclude that the bulk of the power spectrum in generic quantum systems with completely regular classical geodesics is described by Eq.~(\ref{S-int-qs-bulk}) (notice the factor $\rfrac{1}{2}$) rather than by Eq.~(\ref{S-int-qs-bulk-N}) (notice the factor $\rfrac{1}{4}$) previously claimed in Refs.~\cite{FGMMRR-2004}. This conclusion is in concert with our previous findings~\cite{ROK-2017,ROK-2020}. Numerical studies of the power spectrum in rectangular and semicircular billiards, to be presented in Section~\ref{Num-Billiards}, lend independent support to the universal laws Eqs.~(\ref{S-int-qs-bulk}) and (\ref{S-int-qs-infrared}).

\subsection{Spectral form factor of complete and truncated sequences}\label{SFF-sec}

Motivated by recent discussions~\cite{ROK-2017,ROK-2020} of the status of a form factor approximation in the theories of the power spectrum of bounded quantum systems, we now turn to a nonperturbative analysis of the form factor in the model of random diagonal matrices; both complete and truncated eigenspectra will be considered.
\noindent\newline\newline
{\it Complete spectrum.}---To determine the form factor for a complete spectrum of random diagonal matrices, we adopt Definition~\ref{def-01-K-tau-truncated}:
\begin{eqnarray} \fl \label{ps-K-tau}
    K_N(\tau)=K_{N;N}(\tau) \nonumber\\
    \fl\qquad =
    \frac{1}{N}
    \left( \Big<
        \sum_{\ell=1}^N \sum_{m=1}^N e^{2 i \pi \tau (\varepsilon_\ell - \varepsilon_m)/\Delta }
    \Big> -  \Big<
        \sum_{\ell=1}^N e^{2 i \pi \tau \varepsilon_\ell/\Delta }
    \Big>
    \Big<
        \sum_{m=1}^N e^{-2 i \pi \tau \varepsilon_m/\Delta }
    \Big> \right).
\end{eqnarray}
Here, summations run over a set $\{\varepsilon_1,\dots,\varepsilon_N\}$ of either ordered or unordered unfolded eigenlevels.

\begin{proposition}\label{RDM-FF-complete}
   Let $\{\varepsilon_1, \dots, \varepsilon_N\}$ be a complete sequence of unfolded eigenlevels, $N \in {\mathbb N}$, generated by the map
   Eq.~(\ref{map}) out of the eigenvalues $\{ \lambda_1, \dots, \lambda_N\}$ of a random diagonal matrix. The spectral form factor of such a sequence equals
\begin{eqnarray} \label{FF}
    \quad
    K_N(\tau) = 1 - \left(\frac{\sin[\pi \tau (N+1)]}{\pi \tau (N+1)}\right)^2.
\end{eqnarray}
\end{proposition}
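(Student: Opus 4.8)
The plan is to compute the two averages in Eq.~(\ref{ps-K-tau}) directly, exploiting the fact that the unfolded eigenlevels $\{\varepsilon_1,\dots,\varepsilon_N\}$ may be taken \emph{unordered} (Remark following Eq.~(\ref{ps-K-tau}): the summations are symmetric, so ordering is immaterial). In the unordered picture the $\varepsilon_j$ are i.i.d.\ uniform on $(0,N)$, which makes all averages factorize over the index. Setting $\Delta = N/(N+1)$ (the mean spacing, from Eq.~(\ref{m-L})), the elementary building block is the characteristic function
\begin{eqnarray}\label{char-block}
    \varphi(\tau) = \Big\langle e^{2i\pi\tau\varepsilon_j/\Delta}\Big\rangle
    = \frac{1}{N}\int_0^N e^{2i\pi\tau\varepsilon/\Delta}\,d\varepsilon
    = \frac{e^{2i\pi\tau(N+1)}-1}{2i\pi\tau(N+1)},
\end{eqnarray}
where I used $N/\Delta = N+1$. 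Hence $|\varphi(\tau)|^2 = \big(\sin[\pi\tau(N+1)]/(\pi\tau(N+1))\big)^2$, which is already the subtracted term we expect.

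Next I would evaluate the first (double-sum) average in Eq.~(\ref{ps-K-tau}) by splitting into diagonal and off-diagonal parts. For $\ell=m$ the summand is $e^0 = 1$, contributing $N$ terms. For $\ell\neq m$, independence of $\varepsilon_\ell$ and $\varepsilon_m$ gives $\langle e^{2i\pi\tau(\varepsilon_\ell-\varepsilon_m)/\Delta}\rangle = \varphi(\tau)\overline{\varphi(\tau)} = |\varphi(\tau)|^2$, and there are $N(N-1)$ such pairs. Therefore
\begin{eqnarray}\label{double-sum}
    \Big\langle \sum_{\ell,m=1}^N e^{2i\pi\tau(\varepsilon_\ell-\varepsilon_m)/\Delta}\Big\rangle
    = N + N(N-1)\,|\varphi(\tau)|^2.
\end{eqnarray}
The second term of Eq.~(\ref{ps-K-tau}) is simply $\big\langle\sum_\ell e^{2i\pi\tau\varepsilon_\ell/\Delta}\big\rangle\big\langle\sum_m e^{-2i\pi\tau\varepsilon_m/\Delta}\big\rangle = (N\varphi(\tau))(N\overline{\varphi(\tau)}) = N^2|\varphi(\tau)|^2$.

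Finally I would assemble these: dividing by $N$ as in Eq.~(\ref{ps-K-tau}),
\begin{eqnarray}\label{assemble}
    K_N(\tau) = \frac{1}{N}\Big( N + N(N-1)|\varphi(\tau)|^2 - N^2|\varphi(\tau)|^2\Big)
    = 1 - |\varphi(\tau)|^2,
\end{eqnarray}
and substituting $|\varphi(\tau)|^2$ from Eq.~(\ref{char-block}) yields Eq.~(\ref{FF}). There is essentially no obstacle here: the whole computation rests on the i.i.d.\ uniform structure of the unordered unfolded eigenvalues, and the only point requiring a moment's care is the bookkeeping of the diagonal versus off-diagonal terms in the double sum (and the cancellation that turns $N(N-1)-N^2 = -N$ into the clean factor $1$ after division by $N$). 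One could alternatively carry out the same argument in the ordered picture using the beta densities of Proposition~\ref{prop-JPDF-ordered}, but that route is strictly more laborious and offers no advantage, so I would not take it.
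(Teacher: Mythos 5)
Your proposal is correct and follows the same route as the paper's own proof: treat the eigenlevels as unordered i.i.d.\ uniform variables on $(0,N)$ with $\Delta=N/(N+1)$, compute the single-level characteristic function, and substitute into Eq.~(\ref{ps-K-tau}); your explicit diagonal/off-diagonal bookkeeping is just the spelled-out version of the paper's "substitute it into Eq.~(\ref{ps-K-tau})" step.
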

\begin{proof}
Considering the sequence $\{\varepsilon_1, \dots, \varepsilon_N\}$ in absence of ordering, we follow the discussion of Section~\ref{RDM-unf} to conclude that they are uniform i.i.d. random variables, $\varepsilon_\ell \sim {\rm U}(0,N)$ with the mean level spacing $\Delta = N/(N+1)$, see Eq.~(\ref{m-L}). Observing that
\begin{eqnarray}
    \langle e^{2 i \pi \tau \varepsilon_\ell/\Delta } \rangle = e^{i\pi \tau (N+1)} \frac{\sin[\pi \tau (N+1)]}{\pi \tau (N+1)},
\end{eqnarray}
we substitute it into Eq.~(\ref{ps-K-tau}) to derive Eq.~(\ref{FF}).
\end{proof}
\noindent\newline
{\it Truncated spectrum.}---Infinite stationarity of level spacings in ensemble of random diagonal matrices makes it possible to determine the spectral form factor exactly also for truncated sequences. This being said, a calculation of the form factor for truncated spectrum is way more involved since eigenlevel ordering becomes essential.

\begin{theorem}\label{KNL-theorem}
   Let $\{0 \le \varepsilon_1 \le \dots \le \varepsilon_N \le N\}$ be a complete sequence of unfolded eigenlevels, $N \in {\mathbb N}$, generated by the map
   Eq.~(\ref{map}) out of ordered eigenvalues $\{ \lambda_1 \le \dots \le \lambda_N\}$ of a random diagonal matrix, and let $\{ \varepsilon_{M+1} \le \dots \le \varepsilon_{M+L}\}$ be a spectral subsequence of the length $L \in {\mathbb N}$ obtained from the complete sequence by omitting both $M\in {\mathbb N} \cup \{0\}$ low-lying eigenlevels $\{\varepsilon_1 \le \dots \le \varepsilon_{M}\}$ and $N-M-L$ high-lying eigenlevels $\{ \varepsilon_{M+L+1} \le \dots \le \varepsilon_{N}\}$, such that $M+L\le N$. The form factor of the subsequence $\{ \varepsilon_{M+1} \le \dots \le \varepsilon_{M+L}\}$ equals
        \begin{eqnarray} \fl \quad \label{KLN}
        K_{N;L}(\tau) = \frac{1}{L} \Bigg\{
        L + \frac{N(N-1)}{2\pi^2 \tau^2 (N+1)^2} \Big(
        1 - {\rm Re} {}_1 F_{1}(L-1;N-1; 2i \pi \tau (N+1))
      \Big) \nonumber\\
      \quad
      -\frac{N^2}{4 \pi^2 \tau^2 (N+1)^2} \Big|
        1 - {}_1 F_{1}(L;N; 2i \pi \tau (N+1))
      \Big|^2
    \Bigg\},
\end{eqnarray}
where ${}_1 F_1 (a;b;z)$ is the Kummer confluent hypergeometric function.
\end{theorem}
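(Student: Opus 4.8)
The plan is to express $K_{N;L}(\tau)$ through the single-level characteristic function and the two-level one, to carry out the remaining summations over $\ell$ and $m$ by means of one summation identity for the Kummer function, and finally to take real parts exploiting that the relevant argument $2\ii\pi\tau(N+1)$ has a real square and a purely imaginary reciprocal.

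First, by Proposition~\ref{prop-inf} the level spacings are infinitely stationary, so Lemma~\ref{PS-stationary-K-truncated} applies and $K_{N;L}(\tau)$ is given by Eq.~(\ref{ps-def-K-tau-theorem}), that is, by sums over the \emph{first} $L$ ordered eigenlevels only. By Proposition~\ref{prop-JPDF-ordered}(i), $\varepsilon_\ell/N \sim \mathrm{Beta}_1(\ell,N+1-\ell)$; since the moment generating function of a $\mathrm{Beta}_1(a,b)$ variable is ${}_1F_1(a;a+b;\cdot)$, and $\Delta=N/(N+1)$ by Eq.~(\ref{m-L}), one gets
\[
   \Big\langle e^{2\ii\pi\tau\varepsilon_\ell/\Delta}\Big\rangle = {}_1F_1\big(\ell;N+1;z\big), \qquad z := 2\ii\pi\tau(N+1).
\]
For $\ell>m$, infinite stationarity gives $\langle(\varepsilon_\ell-\varepsilon_m)^q\rangle = \langle\varepsilon_{\ell-m}^q\rangle$ for all $q$, and since the unfolded spectrum is bounded one may expand the exponential and interchange summation and averaging to conclude $\langle e^{2\ii\pi\tau(\varepsilon_\ell-\varepsilon_m)/\Delta}\rangle = {}_1F_1(\ell-m;N+1;z)$ (its complex conjugate for $\ell<m$). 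Splitting the double sum in Eq.~(\ref{ps-def-K-tau-theorem}) into a diagonal part (equal to $L$) and an off-diagonal part, and grouping the latter by $k=|\ell-m|$, reduces the problem to evaluating $\sum_{\ell=1}^{L}{}_1F_1(\ell;N+1;z)$ and $\sum_{k=1}^{L-1}(L-k)\,{}_1F_1(k;N+1;z)$.

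The computational core is the identity
\[
   \sum_{\ell=1}^{L}{}_1F_1(\ell;b;z) = \frac{b-1}{z}\,\big({}_1F_1(L;b-1;z)-1\big),
\]
which I would establish by expanding each ${}_1F_1$ as a power series, applying the hockey-stick summation $\sum_{\ell=1}^{L}(\ell)_n = (L)_{n+1}/(n+1)$, shifting the summation index, and using the Pochhammer relations $(L)_{m}=L\,(L+1)_{m-1}$ and $(b-1)_{m}=(b-1)\,(b)_{m-1}$ to recognise the result as a shifted ${}_1F_1$. Taking $b=N+1$ produces the disconnected contribution $\langle\sum_{\ell=1}^L e^{2\ii\pi\tau\varepsilon_\ell/\Delta}\rangle = (N/z)\big({}_1F_1(L;N;z)-1\big)$, whose squared modulus supplies the last term in Eq.~(\ref{KLN}). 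For the off-diagonal contribution I would write $\sum_{k=1}^{L-1}(L-k)\,{}_1F_1(k;N+1;z) = \sum_{j=1}^{L-1}\sum_{k=1}^{j}{}_1F_1(k;N+1;z)$, apply the identity first with $b=N+1$ (upper limit $j$) and then once more with $b=N$ (upper limit $L-1$), arriving at $\tfrac{N(N-1)}{z^2}\big({}_1F_1(L-1;N-1;z)-1\big)-\tfrac{N(L-1)}{z}$.

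Finally I would reassemble Eq.~(\ref{ps-def-K-tau-theorem}). Because $z^2 = -4\pi^2\tau^2(N+1)^2$ is real and $1/z$ is purely imaginary, the term $-N(L-1)/z$ is annihilated by $2\Re(\cdot)$, while $2\Re\big[\tfrac{N(N-1)}{z^2}(\cdots)\big]$ becomes $\tfrac{N(N-1)}{2\pi^2\tau^2(N+1)^2}\big(1-\Re\,{}_1F_1(L-1;N-1;z)\big)$; adding $\tfrac{1}{L}$ times the diagonal term $L$ and subtracting the squared-modulus term with prefactor $\tfrac{N^2}{4\pi^2\tau^2(N+1)^2}$ reproduces Eq.~(\ref{KLN}) verbatim. (A consistency check: at $L=N$ one has ${}_1F_1(a;a;z)=e^z$, and the formula collapses to Proposition~\ref{RDM-FF-complete}.) The one genuinely delicate step is the hypergeometric summation identity — keeping the index shifts and the $b\mapsto b-1$ parameter change straight; everything else is routine bookkeeping, with no convergence subtleties since all eigenlevels lie in $[0,N]$.
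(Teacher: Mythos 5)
Your proposal is correct: every step checks out (the beta characteristic function of $\varepsilon_\ell/N$, the reduction of the double sum via infinite stationarity, the summation identity $\sum_{\ell=1}^{L}{}_1F_1(\ell;b;z)=\frac{b-1}{z}({}_1F_1(L;b-1;z)-1)$ — which is exactly the paper's Eq.~(\ref{sum-rule}) — and the real-part bookkeeping with $z=2\ii\pi\tau(N+1)$), and it reproduces Eq.~(\ref{KLN}) exactly. The overall skeleton is the same as the paper's (Lemma~\ref{PS-stationary-K-truncated} plus Proposition~\ref{prop-inf} to shift to the first $L$ levels, connected-minus-disconnected split, beta statistics of ordered levels, the Kummer summation identity), but you diverge in the middle: the paper never writes down the per-level characteristic functions ${}_1F_1(\ell;N+1;z)$; instead it Fourier-transforms the \emph{partial mean density} $\sum_{\ell\le L}g_\ell$, recognizes it as a binomial cumulative distribution (an incomplete beta ratio), integrates by parts as in Eq.~(\ref{FLN-parts}), and so obtains ${\mathcal F}_{N;L}$ in the closed form Eq.~(\ref{beta-major}) directly, after which the identity Eq.~(\ref{sum-rule}) — cited from Prudnikov--Brychkov--Marichev rather than proved — is applied only once, to the sum in Eq.~(\ref{G-sum}). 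Your route replaces that density/integration-by-parts step by a second application of the same identity (with $b=N+1$ and then $b=N$, after rewriting the weight $L-k$ as a nested sum), and your hockey-stick series proof of the identity makes the argument self-contained; the paper's version buys a slightly shorter computation and a probabilistic reading of the partial density, yours buys elementarity and avoids introducing the incomplete beta function altogether. Your justification of $\langle e^{2\ii\pi\tau(\varepsilon_\ell-\varepsilon_m)/\Delta}\rangle={}_1F_1(\ell-m;N+1;z)$ via bounded support and moment equality is sound and plays the same role as the index-shift argument in the paper's Lemma~\ref{PS-stationary-K-truncated}, and your $L=N$ consistency check mirrors the paper's remark following the theorem.
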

\noindent
\begin{proof}
By virtue of Lemma~\ref{PS-stationary-K-truncated} and Proposition~\ref{prop-inf}, the form factor $K_{N;L}(\tau)$ of the truncated subsequence is defined by Eq.~(\ref{ps-def-K-tau-theorem}), where the averaging runs over an ensemble of random diagonal matrices with the mean level spacing $\Delta=N/(N+1)$, see Eq.~(\ref{m-L}). To facilitate the calculation of $K_{N;L}(\tau)$ we spot that it can be written in the form
\begin{eqnarray} \label{K-FG}
    K_{N;L}(\tau) =\frac{1}{L} \left\{
        {\mathcal G}_{N;L}\left(\tau\frac{N+1}{N}\right) - \left| {\mathcal F}_{N;L}\left(\tau\frac{N+1}{N}\right) \right|^2
    \right\},
\end{eqnarray}
where
\begin{eqnarray} \label{F-sum}
    {\mathcal F}_{N;L}(\tau) = \left<
    \sum_{\ell=1}^L e^{2 i\pi \tau \varepsilon_\ell}
    \right>
\end{eqnarray}
and
\begin{eqnarray} \label{G-sum}
    {\mathcal G}_{N;L}(\tau) = \left<
    \sum_{\ell=1}^L \sum_{m=1}^L e^{2 i\pi \tau (\varepsilon_\ell -\varepsilon_m)}
    \right> = L + 2 {\rm Re} \sum_{\ell=1}^{L-1} {\mathcal F}_{N;\ell} (\tau).
\end{eqnarray}
The latter equality is a direct consequence of the infinite stationarity of level spacings in ensemble of random diagonal matrices, see Proposition~\ref{prop-inf}.

Having expressed the form factor in term of a single function ${\mathcal F}_{N;L} (\tau)$, we now turn to its evaluation. To proceed, we notice that it can be expressed
\begin{eqnarray} \label{F-sum-fourier}
    {\mathcal F}_{N;L}(\tau) = \int_{0}^{N} d\varepsilon \, \varrho_{N;L}(\varepsilon) \, e^{2 i \pi \tau \varepsilon}
\end{eqnarray}
in terms of the partial mean spectral density of the first $L$ ordered eigenvalues
\begin{eqnarray} \label{rho_L_N_sum}
        \varrho_{N;L}(\varepsilon) = \langle \sum_{\ell=1}^L \delta(\varepsilon -\varepsilon_\ell) \rangle =
        \sum_{\ell=1}^L g_\ell (\varepsilon),
\end{eqnarray}
where $g_\ell (\varepsilon)$ is the mean density of the $\ell$-th ordered eigenlevel, see Eq.~(\ref{gL-exp}), so that
\begin{eqnarray} \label{RNL-1}
        \varrho_{N;L}(\varepsilon) =\sum_{j=0}^{L-1} \frac{(N-1)!}{j!(N-1-j)!} \left(\frac{\varepsilon}{N}\right)^j
        \left(
            1- \frac{\varepsilon}{N}
        \right)^{N-1-j}.
\end{eqnarray}
Hence, the partial mean spectral density coincides with the cumulative distribution function of the binomial random variable $X_{\{\varepsilon/N,N-1\}} \sim {\rm Bin}(\varepsilon/N, N-1)$:
\begin{eqnarray} \fl \label{rho_L_N}
    \qquad
    \varrho_{N;L}(\varepsilon) = {\rm Prob}\left(
        X_{\{\varepsilon/N,N-1\}} \le L-1
    \right) = \frac{B(1-\varepsilon/N; N-L,L)}{B(N-L,L)},
\end{eqnarray}
where
\begin{eqnarray} \label{RNL-2}
    B(x;a,b) = \int_{0}^{x} t^{a-1} (1-t)^{b-a} dt
\end{eqnarray}
is the incomplete beta function and
\begin{eqnarray}  \label{RNL-3}
    B(a,b) = B(1;a,b) = \frac{\Gamma(a)\Gamma(b)}{\Gamma(a+b)}.
\end{eqnarray}
Substitution of Eq.~(\ref{rho_L_N}) into Eq.~(\ref{F-sum-fourier}) followed by integration by parts yields
\begin{eqnarray} \label{FLN-parts}
\fl
\qquad
    {\mathcal F}_{N;L}(\tau) =  \frac{1}{2 i \pi \tau} \Bigg(
        \frac{1}{B(N-L,L)}\int_{0}^{1} e^{2 i \pi \tau N \xi} \xi^{L-1} (1-\xi)^{N-L-1} d\xi - 1
    \Bigg).
\end{eqnarray}
The remaining integral is nothing but a
characteristic function for the random variable $Y \sim {\rm Beta}_1(L,N-L)$. This observation brings
\begin{eqnarray} \label{beta-major}
    {\mathcal F}_{N;L}(\tau) = \frac{1}{2 i \pi \tau} \Big(
         {}_1 F_{1}(L;N; 2 i \pi \tau N) - 1
    \Big).
\end{eqnarray}
The other sought function, ${\mathcal G}_{N;L}(\tau)$, can be determined from Eq.~(\ref{G-sum}). Substituting Eq.~(\ref{beta-major}) therein and making use of
the summation formula given by Eq.~(5.3.5.1) of Ref. \cite{PBM-2002},
\begin{eqnarray} \label{sum-rule}
    \sum_{\ell=1}^{L} {}_1 F_{1}(\ell;N; z) = \frac{N-1}{z} \Big(
        -1 + {}_1 F_{1}(L;N-1; z)
    \Big),
\end{eqnarray}
we derive:
\begin{eqnarray} \label{GLN-final}
 {\mathcal G}_{N;L}(\tau) =  L + \frac{N-1}{2 N \pi^2 \tau^2} \Big(
    1 - {\rm Re\,} {}_1 F_{1}(L-1;N-1; 2 i \pi \tau N)
 \Big).
\end{eqnarray}
Equation~(\ref{KLN}) follows from Eqs.~(\ref{K-FG}), (\ref{beta-major}) and (\ref{GLN-final}).
\end{proof}
\begin{remark}
  For $L=N$ the identity ${}_1 F_{1}(N;N; z) = e^{z}$ can be used to observe that Eq.~(\ref{KLN}) reduces to the form factor for complete spectrum, see Eq.~(\ref{FF}).
\hfill $\blacksquare$
\end{remark}

\begin{figure}
\includegraphics[width=\textwidth]{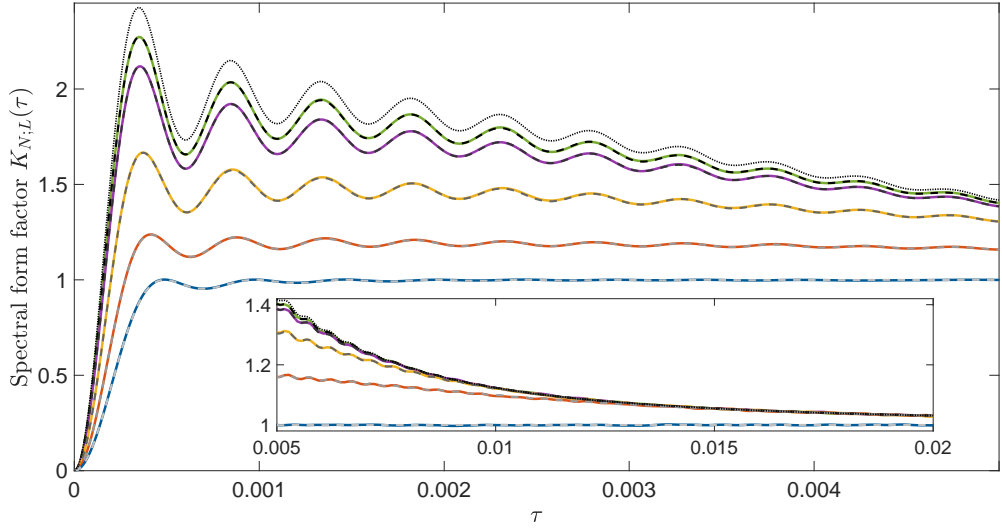}
\caption{Spectral form factor $K_{N;L}(\tau)$ as a function of $\tau$ for truncated eigenlevel sequences of the length $L=2048$ obtained from RDM spectra. Solid green, purple, yellow, red and blue curves correspond to the form factor simulated numerically for truncation parameters $\alpha_L=0.1$, $0.2$, $0.5$, $0.8$, and $1$. The number of realizations is $M=10^6$. The eigenvalues were uniformly distributed ${\rm U}(0, L/\alpha_L)$. Dashed lines on the top of solid colored ones display theoretical curves from Theorem~\ref{KNL-theorem}. Dotted line corresponds to the theoretical form factor for eigenlevel sequences of the length $L=2048$ with independent and exponentially distributed level spacings. Comparison with simulated curve can be found in Fig.~2 of Ref.~\cite{ROK-2020}. Inset: the same graphs for higher $\tau$.
}
\label{Plot_3}
\end{figure}

In Section~\ref{PS-RDM} (see Remark~\ref{rem-correlations} and Corollary~\ref{PS-inf-L-omega}) we have shown that the power spectrum of spectral subsequences of {\it finite} length $L$ drawn from infinite-dimensional random diagonal matrices coincides with the one for finite eigenvalue sequences with uncorrelated spacings addressed in Section~1.3 of Ref.~\cite{ROK-2020}. Not surprisingly, such a correspondence between two spectral models holds for their spectral form factors, too, see Remark~\ref{FFS-R} below.

Figure~\ref{Plot_3} demonstrates a perfect agreement between the analytical prediction of Theorem~\ref{KNL-theorem} and the form factor computed out of simulated RDM spectra.

\begin{theorem}[Finite subsequence]\label{K-inf-L-tau}
  Let $K_{N;L}(\tau)$ be a form factor of the spectral subsequence of length $L$ obtained from a complete spectrum of random diagonal matrices as specified in Theorem~\ref{KNL-theorem}. For $\tau={\mathcal O}(N^0)$ and $L={\mathcal O}(N^0)$, it holds that
\begin{eqnarray} \label{KL-tau}\fl \quad
    K_{\infty;L}(\tau) = \lim_{N\rightarrow \infty} K_{N;L}(\tau) \nonumber\\
    \fl\qquad\qquad\quad =
    1 + \frac{1}{4\pi^2\tau^2 L} \left(
    1 - \frac{1}{(1+4\pi^2\tau^2)^L}
    \right)   - \frac{1}{\pi\tau L} \frac{\sin[L \arctan(2\pi\tau)]}{(1+4\pi^2\tau^2)^{L/2}}.
\end{eqnarray}
\end{theorem}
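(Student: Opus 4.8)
The plan is to start from the exact finite--$N$ formula of Theorem~\ref{KNL-theorem} and let $N\to\infty$ at fixed $L$ and $\tau$; the only substantive ingredient is the large--$N$ asymptotics of the two Kummer functions appearing in Eq.~(\ref{KLN}), which I would obtain probabilistically. By the Euler integral representation --- already used in the derivation of Eq.~(\ref{beta-major}) --- one has ${}_1 F_1(a;b;z)=\langle e^{zX}\rangle$ with $X\sim{\rm Beta}_1(a,b-a)$; hence ${}_1 F_1(L;N;2i\pi\tau(N+1))$ and ${}_1 F_1(L-1;N-1;2i\pi\tau(N+1))$ are the characteristic functions, evaluated at $2\pi\tau$, of the rescaled beta variables $(N+1)Y$ with $Y\sim{\rm Beta}_1(L,N-L)$ and $(N+1)X$ with $X\sim{\rm Beta}_1(L-1,N-L)$, respectively. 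Since $L$ stays fixed while $N-L\to\infty$, the classical Beta--Gamma limit gives $(N-L)Y\Rightarrow{\rm Gamma}(L,1)$ and $(N-L)X\Rightarrow{\rm Gamma}(L-1,1)$; because $(N+1)/(N-L)\to1$, convergence in distribution (whence of characteristic functions at every real point) yields
\begin{eqnarray}\label{prop-1F1-limit}
    \fl\qquad {}_1 F_1(L;N;2i\pi\tau(N+1))\longrightarrow(1-2i\pi\tau)^{-L}, \nonumber\\
    \fl\qquad {}_1 F_1(L-1;N-1;2i\pi\tau(N+1))\longrightarrow(1-2i\pi\tau)^{-(L-1)},
\end{eqnarray}
the right-hand sides being the characteristic functions of ${\rm Gamma}(L,1)$ and ${\rm Gamma}(L-1,1)$ at $2\pi\tau$. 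This argument (or, equally, the substitution $t=s/b$ in the Euler integral followed by dominated convergence) delivers Eq.~(\ref{prop-1F1-limit}) for \emph{all} real $\tau$ and is not confined to the range $|2\pi\tau|<1$ within which the naive binomial-type series limit of ${}_1 F_1$ would converge; it is of the same flavour as the ${}_1 F_1$ asymptotics developed in Appendix~\ref{A-1}.

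Next I would write $1-2i\pi\tau=\sqrt{1+4\pi^2\tau^2}\,e^{-i\theta}$ with $\theta=\arctan(2\pi\tau)$, so that $\cos\theta=(1+4\pi^2\tau^2)^{-1/2}$, $\tan\theta=2\pi\tau$, and $(1-2i\pi\tau)^{-k}=(1+4\pi^2\tau^2)^{-k/2}e^{ik\theta}$. Thus ${\rm Re}\,(1-2i\pi\tau)^{-(L-1)}=(1+4\pi^2\tau^2)^{-(L-1)/2}\cos\big((L-1)\theta\big)=:A$, while $\big|1-(1-2i\pi\tau)^{-L}\big|^{2}=1-2B+C$ with $B=(1+4\pi^2\tau^2)^{-L/2}\cos(L\theta)$ and $C=(1+4\pi^2\tau^2)^{-L}$. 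Feeding Eq.~(\ref{prop-1F1-limit}) into Eq.~(\ref{KLN}), using $N(N-1)/(N+1)^{2}\to1$ and $N^{2}/(N+1)^{2}\to1$, and collecting terms, I obtain
\begin{eqnarray}\label{prop-Kinf-interm}
    K_{\infty;L}(\tau)=1+\frac{1-C}{4\pi^2\tau^2 L}+\frac{B-A}{2\pi^2\tau^2 L}.
\end{eqnarray}

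It then remains only to simplify $B-A$. Factoring out $(1+4\pi^2\tau^2)^{-L/2}$, using $(1+4\pi^2\tau^2)^{1/2}=1/\cos\theta$ together with the addition formula $\cos\big((L-1)\theta\big)=\cos(L\theta)\cos\theta+\sin(L\theta)\sin\theta$, one finds $\cos(L\theta)\cos\theta-\cos\big((L-1)\theta\big)=-\sin(L\theta)\sin\theta$, so that $B-A=-(1+4\pi^2\tau^2)^{-L/2}\sin(L\theta)\tan\theta=-2\pi\tau\,(1+4\pi^2\tau^2)^{-L/2}\sin(L\theta)$; inserting this and $C=(1+4\pi^2\tau^2)^{-L}$ into Eq.~(\ref{prop-Kinf-interm}) reproduces Eq.~(\ref{KL-tau}) verbatim. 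As consistency checks, the right-hand side of Eq.~(\ref{KL-tau}) tends to $0$ as $\tau\to0$, in agreement with Theorem~\ref{KNL-theorem}, and the formal choice $L=N$ with ${}_1 F_1(N;N;z)=e^{z}$ recovers Proposition~\ref{RDM-FF-complete}. There is no deep obstacle: the single point demanding care is Eq.~(\ref{prop-1F1-limit}) with enough uniformity to exchange the limit with the integration (equivalently, the summation) inside Eq.~(\ref{KLN}) --- through the Euler integral this is a routine dominated-convergence estimate with majorant $s^{a-1}e^{-s/2}$ --- after which the argument is just the elementary trigonometric identity that collapses the $(L-1)$-indexed contribution.
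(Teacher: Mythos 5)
Your proposal is correct: the limit relations you state for the two Kummer functions are exactly what is needed, your substitution into Eq.~(\ref{KLN}) together with $N(N-1)/(N+1)^2\to 1$, $N^2/(N+1)^2\to 1$ is right, and the trigonometric reduction of $B-A$ via $\cos((L-1)\theta)=\cos(L\theta)\cos\theta+\sin(L\theta)\sin\theta$ reproduces Eq.~(\ref{KL-tau}) exactly. The overall skeleton coincides with the paper's (one-line) proof, which simply invokes the appendix asymptotics of ${}_1F_1$ in Eq.~(\ref{KLN}); the relevant statement there is Proposition~\ref{prop1new2}, ${}_1F_1(a;\lambda+j;i\lambda x)=(1-ix)^{-a}+\mathcal{O}(1/\lambda)$ with $a$ fixed (the proof text cites Proposition~\ref{1F1-as}, whose first parameter scales as $b\lambda+j$ and is really the tool for Theorem~\ref{KNL-infinite}, so your instinct to re-derive the limit independently was sound). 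Where you genuinely differ is in how that limit is obtained: the paper proves Proposition~\ref{prop1new2} by deforming the Euler integral onto a steepest-descent contour and applying the Laplace method, which buys a uniform $\mathcal{O}(1/\lambda)$ rate valid for all real $x$; you instead read the Euler integral as the characteristic function of a rescaled ${\rm Beta}_1$ variable and use the Beta--Gamma convergence (Slutsky plus L\'evy continuity, or equivalently the substitution $t=s/b$ with a dominated-convergence majorant $s^{a-1}e^{-s/2}$), which is softer and yields only the limit, but that is all Theorem~\ref{K-inf-L-tau} asserts, and your argument correctly avoids the pitfall of the term-by-term series limit that would only converge for $|2\pi\tau|<1$. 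Your explicit algebra, absent from the paper, and the consistency checks at $\tau\to 0$ and $L=N$ are welcome additions; the only cosmetic caveat is the degenerate case $L=1$, where ${}_1F_1(0;N-1;\cdot)\equiv 1$ and the stated limit holds trivially.
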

\begin{proof}
Apply Proposition~\ref{1F1-as} to the hypergeometric functions appearing in Eq.~(\ref{KLN}).
\end{proof}
\begin{remark}\label{FFS-R}
Not surprisingly, the form factor $K_{\infty;L}(\tau)$ of spectral subsequences of a {\it finite} length $L$, produced by truncation of unbounded spectra of infinite-dimensional random diagonal matrices, coincides with the form factor in the model of eigenlevel sequences with uncorrelated, exponentially distributed spacings, see Section~1.3 of Ref.~\cite{ROK-2020}. Formally, this can be deduced by comparing Eq.~(\ref{KL-tau}) with Eq.~(1.22) of Ref.~\cite{ROK-2020}, in which $N$ is replaced with $L$ and the characteristing function $\Psi_s(\tau)$ of level spacings is set to $\Psi_s(\tau) = (1 - 2i\pi\tau)^{-1}$.
\hfill $\blacksquare$
\end{remark}

\begin{remark}
As $L\rightarrow \infty$, several scaling limits of the form factor $K_{\infty;L}(\tau)$ can be identified: for extremely short times $\tau={\mathcal O}(L^{-1})$, for intermediately small times $\tau = {\mathcal O}(L^{-1/2})$, and for finite times $\tau={\mathcal O}(L^0)$. Corresponding limiting laws are given by Eqs.~(\ref{KLN-1}), (\ref{KLN-2}) and (\ref{KLN-3}), respectuvely, of Theorem~\ref{KNL-infinite} upon {\it formal} setting $\alpha_L=0$ therein; see also Section~1.3 of Ref.~\cite{ROK-2020}.
\hfill $\blacksquare$
\end{remark}

To address the issue of validity of the form factor approximation raised in our previous papers~\cite{ROK-2017,ROK-2020}, we need to determine an asymptotic behavior of the spectral form factor $K_{N;L}(\tau)$ for truncated unfolded spectra of random diagonal matrices as $N \rightarrow \infty$. Various scaling limits of the form factor are established in Theorem~\ref{KNL-infinite} below. These large--$N$ results, combined with a similar analysis of the power spectrum presented in Section~\ref{PS-RDM}, will set the scene for discussion in Section~\ref{FFA-status}. The proof of Theorem~\ref{KNL-infinite} relies heavily on various asymptotic properties of the Kummer confluent hypergeometric function. Their detailed study is presented in Appendix~\ref{A-1}.

\begin{theorem}[Infinite subsequences]\label{KNL-infinite}
Let both the number $M=M(N)$ of skipped (low-lying) eigenlevels and the length $L=L(N)$ of the spectral subsequence scale with $N$ in such a way that the two limits exist
\begin{eqnarray}\label{scales-LMN}
    \alpha_L = \lim_{N\rightarrow\infty} \frac{L(N)}{N}, \quad \alpha_M = \lim_{N\rightarrow\infty} \frac{M(N)}{N},
\end{eqnarray}
where $L(N)+M(N)\le N$. Then, the form factor $K_{N;L}(\tau)$ admits the following scaling limits:
\begin{eqnarray}\label{KLN-1}
  K_{\alpha_L}^{(-1)}(T) &=&\lim_{N\rightarrow \infty} K_{N;L(N)}(\tau)\Big|_{\tau=T/L(N)} \nonumber\\
    &=& 2-\alpha_L - \alpha_L \frac{\sin^2(\pi T)}{(\pi T)^2} - 2 (1-\alpha_L) \frac{\sin(2\pi T)}{2\pi T}
\end{eqnarray}
and
\begin{eqnarray}\label{KLN-2}
K_{\alpha_L}^{(-\rfrac{1}{2})}({\mathcal T}) = \lim_{N\rightarrow \infty} K_{N; L(N)}(\tau)\Big|_{\tau={\mathcal T}/L(N)^{1/2}} \nonumber\\
            \qquad \qquad \qquad = 1+\frac{1}{4\pi^2 {\mathcal T}^2}\left( 1-e^{-4 \pi^2 {\mathcal T}^2(1-\alpha_L)} \right).
\end{eqnarray}
Furthermore, any other scaling limit of $K_{N;L}$ is trivial, i.e.~for $\tau=\mathfrak{T}/L(N)^{\delta}$ with fixed scaling exponent $\delta$, we have
\begin{eqnarray}\label{KLN-3}\fl
    \qquad K_{\alpha_L}^{(-\delta)}({\mathfrak T})=\lim_{N\rightarrow\infty} K_{N;L(N)}(\tau)\Big|_{\tau={\mathfrak T}/L(N)^{\delta}}=\left\{
\begin{array}{ll}
  1, & ${\rm if\;}$ \delta < \rfrac{1}{2}; \\
  2-\alpha_L, & ${\rm if\;}$ \rfrac{1}{2}<\delta <1.
\end{array}
\right.
\end{eqnarray}
\end{theorem}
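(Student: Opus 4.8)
The plan is to reduce everything to the exact finite-$N$ formula of Theorem~\ref{KNL-theorem} and then feed its Kummer functions into the asymptotic machinery developed in Appendix~\ref{A-1}. First I would write the right-hand side of Eq.~(\ref{KLN}) as $K_{N;L}(\tau)=1+A_{N;L}(\tau)-B_{N;L}(\tau)$, where $A_{N;L}$ carries the factor ${\rm Re}\,{}_1F_1(L-1;N-1;2i\pi\tau(N+1))$ and $B_{N;L}$ the factor $|1-{}_1F_1(L;N;2i\pi\tau(N+1))|^2$, each multiplied by an explicit rational function of $N$, $L$ and $\tau^{-2}$; the prefactor of $A_{N;L}$ is asymptotically twice that of $B_{N;L}$. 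Following the proof of Theorem~\ref{KNL-theorem}, it is convenient to read ${}_1F_1(L;N;2i\pi\tau N)$ as the characteristic function $\langle e^{2i\pi\tau N Y}\rangle$ of $Y\sim{\rm Beta}_1(L,N-L)$, a variable that concentrates at $L/N\to\alpha_L$ with Gaussian width of order $N^{-1/2}$. This makes transparent why $\tau$ at the scales $L^{-1}$, $L^{-1/2}$, and strictly in between should produce qualitatively different limits -- exactly the trichotomy of Eqs.~(\ref{KLN-1})--(\ref{KLN-3}) -- and it is the role of Appendix~\ref{A-1} (in particular Proposition~\ref{1F1-as}) to supply the matching expansions of ${}_1F_1$, uniform in the relevant ranges of the argument.

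For $\tau=T/L$ the Kummer argument $2i\pi\tau(N+1)\to2i\pi T/\alpha_L$ stays bounded, both Kummer functions tend to $e^{2i\pi T}$, but the prefactors of $A_{N;L}$ and $B_{N;L}$ each grow like $L$ and their leading contributions cancel \emph{identically}. Extracting the finite remainder thus forces one to retain the $O(1/N)$ correction to ${}_1F_1$ at bounded argument -- i.e.\ the contribution of ${\rm Var}\,Y\sim\alpha_L(1-\alpha_L)/N$, together with the $O(1/N)$ drifts of $L/N$ and of the shifts $N\mapsto N+1$, $L\mapsto L-1$ -- and then to assemble the surviving $O(1)$ pieces, which I expect to collapse to Eq.~(\ref{KLN-1}). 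For $\tau={\mathcal T}/L^{1/2}$ the argument grows like $N^{1/2}$, the critical scale: there ${}_1F_1$ equals a rapidly oscillating unimodular phase times the decaying factor $e^{-2\pi^2{\mathcal T}^2(1-\alpha_L)}(1+o(1))$ arising from the Gaussian (second-cumulant) approximation of the Beta characteristic function. Since $\frac{L-1}{N-1}-\frac{L}{N}=O(1/N)$, the two Kummer functions share the same leading phase, so the ${\rm Re}$-pieces of $A_{N;L}$ and $-B_{N;L}$ cancel (the twofold prefactor ratio offsetting the factor $2$ in $|1-{}_1F_1|^2=1-2{\rm Re}\,{}_1F_1+|{}_1F_1|^2$), leaving the difference of the two prefactors minus the $B_{N;L}$-prefactor times $|{}_1F_1|^2$, which tends to $\frac{1}{4\pi^2{\mathcal T}^2}\big(1-e^{-4\pi^2{\mathcal T}^2(1-\alpha_L)}\big)$, i.e.\ Eq.~(\ref{KLN-2}).

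The remaining exponents $\tau={\mathfrak T}/L^{\delta}$ with $\delta\neq1,1/2$ I would handle by the same cancellation of the ${\rm Re}$-terms, so once more $K_{N;L}-1$ reduces, in the limit, to the difference of the two prefactors minus the $B_{N;L}$-prefactor times $|{}_1F_1|^2$. For $\delta<1/2$ the argument outruns the critical scale, so $|{}_1F_1|^2\sim e^{-{\rm const}\cdot N^{1-2\delta}}\to0$ while both prefactors vanish like $N^{2\delta-1}$; the combination tends to $0$ and $K_{N;L}\to1$. For $1/2<\delta<1$ one instead has $1-|{}_1F_1|^2\sim{\rm const}\cdot N^{1-2\delta}\to0$, now balanced by the diverging prefactor $\sim N^{2\delta-1}$ to the finite constant $1-\alpha_L$, so $K_{N;L}\to2-\alpha_L$.

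The hard part will be the case $\tau=T/L$. In every other regime the Kummer argument is large and only the second-cumulant (Gaussian) approximation of the Beta characteristic function is needed, so -- modulo the elementary observation that the two Kummer functions share their oscillatory phase up to $o(1)$ -- the computation is essentially bounded bookkeeping. But at bounded argument the naive limit $e^{(a/b)z}$ carries no usable information once the $O(L)$ leading terms have cancelled, and one needs ${}_1F_1(a;b;z)$ (with $a=L-1$ or $L$, $b=N-1$ or $N$) to relative accuracy $o(1/b)$ with $a/b\to\alpha_L$; such a subleading expansion does not seem to be recorded in the standard literature, and it must be uniform enough to survive multiplication by the $O(L)$ prefactors, with the $N\leftrightarrow N+1$ and $L\leftrightarrow L-1$ discrepancies tracked since they too contribute at the surviving order. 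Producing and justifying that expansion is exactly what Appendix~\ref{A-1} is built for; granting its results, the passage to Eqs.~(\ref{KLN-1})--(\ref{KLN-3}) is mechanical.
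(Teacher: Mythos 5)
Your proposal is correct and follows essentially the paper's own route: start from the exact formula of Theorem~\ref{KNL-theorem}, exploit the $2{:}1$ prefactor ratio so that the oscillatory ${\rm Re}$-terms cancel, and feed in the Appendix~\ref{A-1} asymptotics regime by regime --- the subleading $O(1/\lambda)$ expansion of Proposition~\ref{1F1-as} (evaluated at $j=0$ and $j=-1$) for $\tau=T/L$ where the leading terms cancel identically, and the steepest-descent/Gaussian result of Proposition~\ref{prop:1F1new} for $\tau={\mathcal T}/L^{1/2}$ and $\rfrac{1}{2}<\delta<1$. The only (harmless) deviation is in the regime $\delta<\rfrac{1}{2}$, where the paper argues more simply via the uniform bound $|{}_1F_1|\le 1$ of Lemma~\ref{lemma:bound} together with the vanishing prefactors, rather than via the exponential decay of $|{}_1F_1|$ you invoke, which is unnecessary and, for $\delta\le 0$, not actually covered by the small-$x$ uniformity of Proposition~\ref{prop:1F1new}.
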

\begin{proof}
(i) To prove Eq.~(\ref{KLN-1}), we start with the exact Eq.~(\ref{KLN}) of Theorem~\ref{KNL-theorem} and make use of Proposition~\ref{1F1-as} to deduce the asymptotic expansion
\begin{eqnarray} \fl \label{1F1-Th}
    \qquad {}_1 F_1 \left(\alpha_L N+j; N+j; \frac{2 i\pi T}{\alpha_L} \frac{N+1}{N}\right) \nonumber\\
\fl
  \qquad\qquad = e^{2 i \pi T}
    \left\{
        1 -  \frac{2 \pi T}{\alpha_L N} \Big( (1-\alpha_L) \pi T - i\, (j + (1-j)\alpha_L) \Big)
    + {\mathcal O}\left( \frac{1}{N^2} \right)
    \right\},
\end{eqnarray}
in which both leading and the first subleading in $1/N$ terms are kept. Substituting Eq.~(\ref{1F1-Th}) taken at $j=0$ and $j=-1$ into Eq.~(\ref{KLN}) produces Eq.~(\ref{KLN-1}).

(ii) To prove Eq.~(\ref{KLN-2}), we employ a small--$x$ version
\begin{eqnarray} \fl \label{ii-1}
   \qquad {}_1F_1(b \lambda+j;\lambda+j; i x \lambda)\ e^{-i\lambda b x} = \exp \left[ -\lambda \left( \frac{1}{2}b(1-b) x^2 + {\mathcal O}(x^3) \right) \right]\nonumber\\
   \fl
     \qquad \qquad\qquad\qquad \times \left(1 +{\mathcal O}(x)\right)\left( 1
     + {\mathcal O}\left( \frac{1}{\lambda} \right)
\right)
\end{eqnarray}
of Eq.~(\ref{pst-prop-main}) proven in Proposition~\ref{prop:1F1new} to observe that
\begin{eqnarray} \fl \label{1F1-Th-ii}
    \qquad {}_1 F_1 \left(\alpha_L N+j; N+j; \frac{2 i\pi \mathcal{T}\sqrt{N}}{\sqrt{\alpha_L}} \frac{N+1}{N}\right) \, \exp\left(
        -2 i \pi \mathcal{T} \sqrt{\frac{\alpha_L}{N}} (N+1)
        \right)
     \nonumber\\
        \qquad\qquad = e^{-2 \pi^2 \mathcal{T}^2 (1-\alpha_L) }  \left( 1
     + {\mathcal O}\left( \frac{1}{\sqrt{N}} \right)\right)
\end{eqnarray}
as $N \rightarrow \infty$. Notice that r.h.s.~in~Eqs.~(\ref{ii-1}) and (\ref{1F1-Th-ii}) does not depend on $j$. Substitution of Eq.~(\ref{1F1-Th-ii}) into Eq.~(\ref{KLN}) brings the sought Eq.~(\ref{KLN-2}).

(iii) Now we are in position to prove Eq.~(\ref{KLN-3}).

(iii-a) For $\delta < \rfrac{1}{2}$, we realize that the hypergeometric functions in Eq.~(\ref{KLN}) are bounded, as spelt out by Lemma~\ref{lemma:bound}; hence, the $N\rightarrow \infty$ limit of Eq.~(\ref{KLN}) is governed by the product $\tau^2 N$. If $\tau = {\mathcal O}(N^{-\delta})$, the product $\tau^2 N ={\mathcal O}(N^{1-2\delta})$ grows unboundedly for all $\delta<\rfrac{1}{2}$ thus producing unity in the first line of Eq.~(\ref{KLN-3}).

(iii-b) For $\rfrac{1}{2} < \delta < 1$, Proposition~\ref{prop:1F1new} (see also Eq.~(\ref{ii-1})) implies that
\begin{eqnarray}
\fl \label{1F1-Th-b}
    \qquad {}_1 F_1 \left(\alpha_L N+j; N+j; \frac{2 i\pi \mathfrak{T}}{(N\alpha_L)^\delta}(N+1)\right)
    \exp\left(
        - 2 i\pi  \mathfrak{T} (N\alpha_L)^{1-\delta}\frac{N+1}{N}
    \right)
 \nonumber\\
\fl
  \qquad\qquad\qquad\qquad = 1 - \frac{2\pi^2 \mathfrak{T}^2 (1-\alpha_L)}{(N \alpha_L)^{2\delta-1}} + {\mathcal O}(N^{-\delta}) + {\mathcal O}(N^{2-4\delta}).
\end{eqnarray}
Substitution of Eq.~(\ref{1F1-Th-b}) into Eq.~(\ref{KLN}) yields the second line of Eq.~(\ref{KLN-3}).

\end{proof}

\begin{remark}
  Equation~(\ref{KLN-3}) of Theorem~\ref{KNL-infinite}, taken at $\delta=0$, brings out an asymptotic behavior of spectral form factor in the bulk $\tau ={\mathcal O(L(N)^0)}$ as $N\rightarrow \infty$:
\begin{eqnarray} \label{K=1}
    \lim_{N\rightarrow\infty} K_{N;L(N)}(\tau) = 1.
\end{eqnarray}
\hfill $\blacksquare$
\end{remark}
\begin{remark}
Notice the limits
\begin{eqnarray} \label{Kc-00}
    \lim_{T\rightarrow 0} K_{\alpha_L}^{(-1)}(T) &=&\lim_{\tau\rightarrow 0} K_{N;L}(\tau)=0,\\
        \label{Kc-23}
    \lim_{\mathcal{T}\rightarrow \infty} K_{\alpha_L}^{(-\rfrac{1}{2})}(\mathcal{T})&=&
\lim_{\tau\rightarrow\infty} K_{N;L}(\tau) = K_{\alpha_L}^{(0)}(\tau) \equiv 1
\end{eqnarray}
and the continuity of the scaling curves in Theorem~\ref{KNL-infinite}:
\begin{eqnarray} \label{Kc-12}
\lim_{T\rightarrow \infty} K_{\alpha_L}^{(-1)}(T) =\lim_{\mathcal{T}\rightarrow 0} K_{\alpha_L}^{(-\rfrac{1}{2})}(\mathcal{T})=2-\alpha_L.
\end{eqnarray}
\hfill $\blacksquare$
\end{remark}

The scaling curves derived in Theorem~\ref{KNL-infinite} are displayed in Fig.~\ref{Plot_4}, where the three curves for the spectral form factor -- $K^{(-1)}_{\alpha_L}(T)$, $K^{(-\rfrac{1}{2})}_{\alpha_L}({\mathcal T})$ and $K^{(0)}_{\alpha_L}(\tau)$ -- are glued together across all three scaling regimes. Continuity of the resulting curve at two borderlines -- between the scaling regimes (I) and (II), and (II) and (III) -- are secured by Eqs.~(\ref{Kc-12}) and (\ref{Kc-23}), respectively. We shall return to Fig.~\ref{Plot_4} in Section~\ref{FFA-status}.

\begin{figure}
\includegraphics[width=\textwidth]{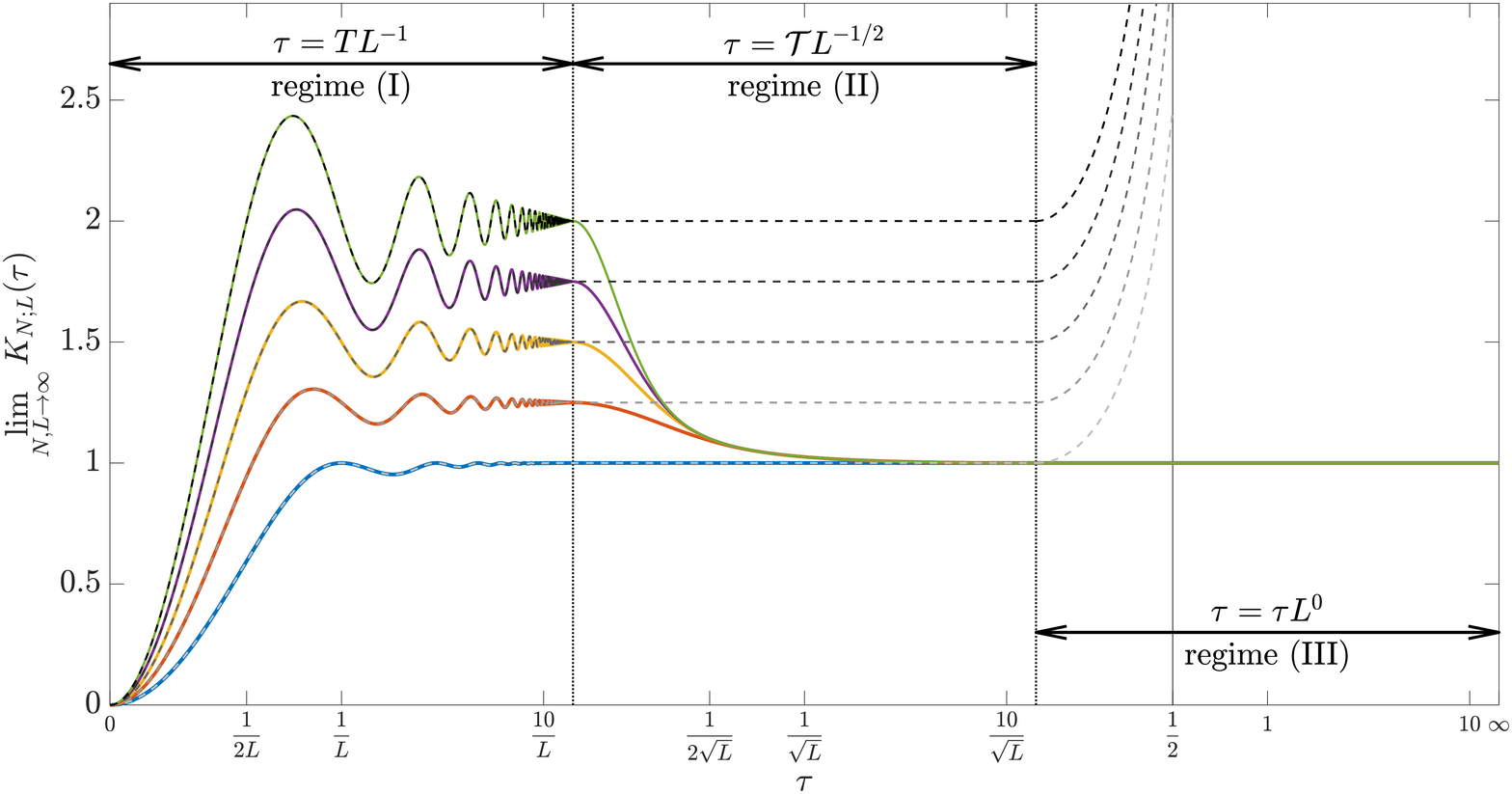}
\caption{
Limiting curves ($N\rightarrow \infty$) for the form factor across three scaling regimes [${\rm (I)}$ -- Eq.~(\ref{KLN-1}), ${\rm (II)}$ -- Eq.~(\ref{KLN-2}), and ${\rm (III)}$ -- Eq.~(\ref{KLN-3}) at $\delta=0$], glued together at vertical dotted lines. The functions $K^{(-1)}_{\alpha_L}(T)$, $K^{(-1/2)}_{\alpha_L}({\mathcal T})$ and
$K^{(0)}_{\alpha_L}(\tau)$, describing the regimes ${\rm (I)}$, ${\rm (II)}$ and ${\rm (III)}$, correspondingly, are plotted vs variables $T=L\tau$, ${\mathcal T}=L^{1/2}\tau$ and $\tau$, each running over the entire real half-line compactified using the transformation $(0,\infty)=\tan((0,\pi/2))$. Solid blue, red, yellow, purple and green curves correspond to the form factor for truncated RDM sequences with truncation parameters $\alpha_L =1, 0.75, 0.5, 0.25, 0$. The dashed grey lines display the limiting curve for the regularized power spectrum $\lim_{N\rightarrow \infty}\omega^2 S_{N;L}(\omega)$ with $0\le \omega=2\pi\tau \le \pi$ (that is, $0\le \tau \le \rfrac{1}{2}$). In the scaling regimes ${\rm (I)}$, ${\rm (II)}$ and ${\rm (III)}$, the curve is described by Eq.~(\ref{S-minus-1}), Eq.~(\ref{S-minus-lambda}) taken at $\delta=1/2$, and Eq.~(\ref{S-0}), respectively.
}
\label{Plot_4}
\end{figure}

\subsection{Form factor approximation: Power spectrum vs form factor} \label{FFA-status}

Equipped with the nonperturbative results for the power spectrum and the form factor for the exactly solvable model of random diagonal matrices, we are in position to examine a status of the so-called form factor approximation, introduced heuristically in Ref.~\cite{FGMMRR-2004}. Assuming that the power spectrum of a quantum system is merely determined by its two-point spectral correlations, the authors of Ref.~\cite{FGMMRR-2004} claimed to relate the power spectrum $S(\omega)$ of a quantum system to its spectral form factor $K(\tau)$ in the {\it entire} frequency domain $0 \le \omega \le \pi$. Referring interested readers to Eqs.~(3), (8) and (10) of the original paper Ref.~\cite{FGMMRR-2004}, we only quote a small-$\omega$ reduction of their result:
\begin{eqnarray} \label{FFA-heuristics}
    S(\omega)\Big|_{\omega \ll 1} \approx \frac{1}{\omega^2} K\left(
    \tau = \frac{\omega}{2\pi}
    \right),
\end{eqnarray}
see also Ref.~\cite{PRPAG-2018}.
\newline\newline\noindent
{\it Complete spectrum.}---In the case of complete spectrum (truncation parameter $\alpha_L=1$), two scaling regimes appear in both the regularized power spectrum and the form factor as $N\rightarrow \infty$. For extremely small frequencies $\omega=\Omega/N$ and very short times $\tau=T/N$, Eq.~(\ref{S-minus-1}) of Corollary~\ref{PS-SL-trunc} and Eq.~(\ref{KLN-1}) of Theorem~\ref{KNL-infinite} bring
\begin{eqnarray}
     \lim_{N\rightarrow\infty} \omega^2 S_N(\omega)\Big|_{\omega=\Omega/N} = 1 - \frac{\sin^2(\Omega/2)}{(\Omega/2)^2}
\end{eqnarray}
and
\begin{eqnarray}
     \lim_{N\rightarrow\infty} K_N(\tau)\Big|_{\tau=T/N} = 1 - \frac{\sin^2(\pi T)}{(\pi T)^2},
\end{eqnarray}
correspondingly. This immediately implies that the heuristic equality Eq.~(\ref{FFA-heuristics}) does hold in the domain $\omega = {\mathcal O}(N^{-1})$ and $\tau = {\mathcal O}(N^{-1})$, upon identifying $T=\Omega/2\pi$. Remarkably, this equality stays valid also for finite albeit small frequencies $\omega =o(N^0)$ and times $\tau = o(N^0)$ as both $\omega^2 S_N(\omega)$ and $K_N(\tau)$ approach unity as $N\rightarrow \infty$, see Eq.~(\ref{S-0}) taken at $\alpha_L=1$ and Eq.~(\ref{K=1}). Hence, our rigorous analysis {\it does} validate the form factor approximation Eq.~(\ref{FFA-heuristics}) for {\it complete spectra of random diagonal matrices} up to $\omega=o(N^0)$.
\newline\newline\noindent
{\it Truncated spectrum.}---As far as the truncated spectrum is concerned (truncation parameter $0\le \alpha_L<1$), three scaling regimes can be identified for the power spectrum and the form factor.

(i) In the first regime, referring to frequencies $\omega=\Omega/L(N)$ and times $\tau = T/L(N)$,
Eq.~(\ref{S-minus-1}) of Corollary~\ref{PS-SL-trunc} and Eq.~(\ref{KLN-1}) of Theorem~\ref{KNL-infinite} yield
\begin{eqnarray} \label{FSE-S}
     \mathcal{S}_{\alpha_L}^{(-1)}(\Omega)&=& \lim_{N\rightarrow\infty} \omega^2 S_{N;L(N)}(\omega)\Big|_{\omega=\Omega/L(N)} \nonumber\\
        &=& 2 - \alpha_L -\alpha_L \frac{\sin^2(\Omega/2)}{(\Omega/2)^2} -2(1-\alpha_L)\frac{\sin\Omega}{\Omega}
\end{eqnarray}
and
\begin{eqnarray} \label{FSE-K}
    K_{\alpha_L}^{(-1)}(T) &=& \lim_{N\rightarrow\infty} K_{N;L(N)}(\tau)\Big|_{\tau=T/L(N)} \nonumber\\
        &=& 2 - \alpha_L -\alpha_L \frac{\sin^2(\pi T)}{(\pi T)^2} -2(1-\alpha_L)\frac{\sin(2\pi T)}{2\pi T},
\end{eqnarray}
respectively. Since the two expressions coincide with each other upon substitution $T=\Omega/2\pi$, we conclude that a heuristic form factor approximation is indeed justified close to the origin, that is, in the domain $\omega={\mathcal O}(L(N)^{-1})$ and $\tau={\mathcal O}(L(N)^{-1})$.

(ii) In the second regime, corresponding to higher frequencies $\omega={\mathcal O}(L(N)^{-1/2})$ and larger times $\tau={\mathcal O}(L(N)^{-1/2})$, Eq.~(\ref{S-minus-lambda})
of Corollary~\ref{PS-SL-trunc} (with $\delta$ set to $\rfrac{1}{2}$) and Eq.~(\ref{KLN-2}) of Theorem~\ref{KNL-infinite} yield two functionally different limiting laws for the power spectrum
\begin{eqnarray}\label{LimS-2}
    \mathcal{S}_{\alpha_L}^{(-\rfrac{1}{2})}(\tilde{\Omega})=\lim_{N\rightarrow\infty} \omega^2 S_{N;L(N)}(\omega)\Big|_{\omega = \tilde{\Omega}/L(N)^{1/2}} = 2-\alpha_L,
\end{eqnarray}
and the form factor
\begin{eqnarray}\label{LimK-2} \fl
   K_{\alpha_L}^{(-\rfrac{1}{2})}(\mathcal{T}) = \lim_{N\rightarrow \infty} K_{N; L(N)}(\tau)\Big|_{\tau={\mathcal T}/L(N)^{1/2}}
            = 1+\frac{1}{4\pi^2 {\mathcal T}^2}\left( 1-e^{-4 \pi^2 {\mathcal T}^2(1-\alpha_L)} \right).
\end{eqnarray}
The two limits Eqs.~(\ref{LimS-2}) and (\ref{LimK-2}) are seen to coincide only approximately for ${\mathcal T} \ll (1-\alpha_L)^{-1/2}$, quickly deviating from what is `predicted' by the form factor approximation Eq.~(\ref{FFA-heuristics}). This signals of the breakdown of the form factor approximation in the domain of frequencies and times of order ${\mathcal O}(L(N)^{-1/2})$.

(iii) In the third regime, referring to both finite frequencies $\omega={\mathcal O}(L(N)^0)$ and times $\tau={\mathcal O}(L(N)^0)$ further restricted to the domain $\omega \ll 1$ and $\tau \ll 1$ (as imposed by Eq.~(\ref{FFA-heuristics})), Eqs.~(\ref{S-0}) and (\ref{K=1}) furnish
\begin{eqnarray}\label{LimS-1}
    \mathcal{S}_{\alpha_L}^{(0)}(\omega)=\lim_{N\rightarrow\infty} \omega^2 S_{N;L(N)}(\omega)\Big|_{\omega = {\mathcal O}(L(N)^0) \ll 1} = 2-\alpha_L + {\mathcal O}(\omega^2)
\end{eqnarray}
and
\begin{eqnarray}\label{LimK-1}
    K_{\alpha_L}^{(0)}(\tau)=\lim_{N\rightarrow\infty} K_{N;L(N)}(\tau)\Big|_{\tau = {\mathcal O}(L(N)^0) \ll 1} = 1.
\end{eqnarray}
As the two limits Eqs.~(\ref{LimS-1}) and (\ref{LimK-1}) differ from each other for all $0\le \alpha_L <1$, we conclude that, in the domain of finite frequencies $\omega \ll 1$ and times $\tau \ll 1$, the form factor approximation is broken down, too, for any truncated eigenlevel sequence. The heavier the truncation the higher discrepancy is; the discrepancy factor $2-\alpha_L$ reaches its maximum $2$ for heavily truncated sequences ($\alpha_L=0$) describing spectral fluctuations in realistic quantum systems with integrable classical dynamics, see Discussion in Section~\ref{PS-RDM}.

The above consideration is visualized in Fig.~\ref{Plot_4}, where we compare the $N\rightarrow\infty$ limiting curves for the regularized power spectrum and for the form factor, in all three scaling regimes. While both spectral indicators are on the top of each other in the first scaling regime (characterized by extremely low frequencies $\omega={\mathcal O}(L^{-1})$ and short times $\tau={\mathcal O}(L^{-1})$), they start to deviate from each other already at $\omega={\mathcal O}(L^{-1/2})$ and $\tau={\mathcal O}(L^{-1/2})$, except for $\alpha_L = 1$ (the case of complete spectrum). The deviations keep growing with further increase of $\omega$ and $\tau$. Figure~\ref{Plot_5} demonstrates that a similar behavior is also observed for RDM spectra of a large but finite size.

\begin{figure}
\includegraphics[width=\textwidth]{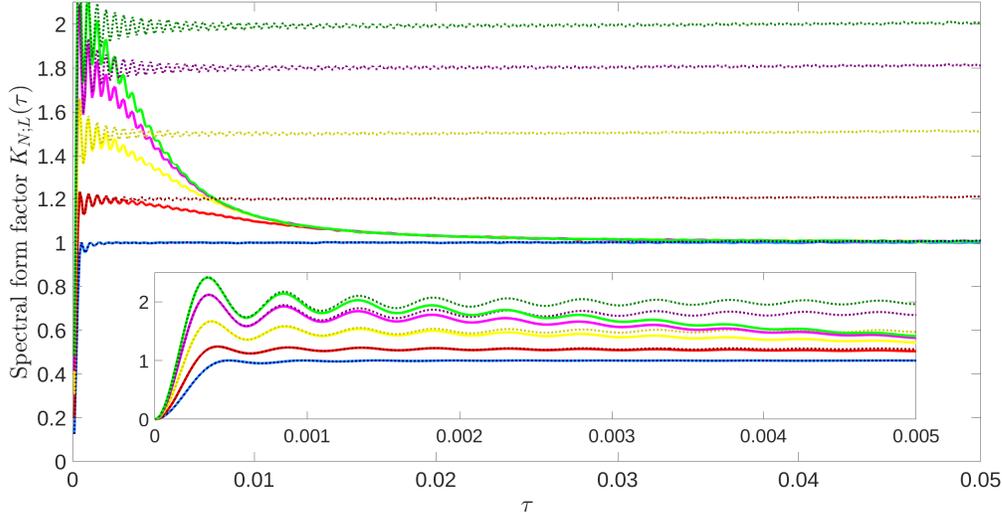}
\caption{
Spectral form factor $K(\tau)$ (solid lines) and regularized power spectrum $\omega^2 S(\omega)$ (dotted lines, $\omega=2\pi \tau$) simulated for truncated eigenlevel sequences of the length $L=2048$ obtained from RDM spectra. Green, purple, yellow, red and blue curves correspond to truncation parameters $\alpha_L = 0.1$, $0.2$, $0.5$, $0.8$, and $1$. The number of realizations is $M=10^6$. Inset: magnified view of the same graphs.
}
\label{Plot_5}
\end{figure}

To conclude, our analysis shows that the heuristic relation Eq.~(\ref{FFA-heuristics}), considered in the context of {\it infinite-dimensional random diagonal matrices},  holds true for their {\it complete} spectra up to $\omega={\mathcal O}(L^0)$ as long as $\omega \ll 1$. However, for {\it truncated spectra}~\footnote[4]{This includes the case $\alpha_L=0$ which corresponds to a model of random spectra with uncorrelated spacings discussed in Ref.~\cite{ROK-2020}.}, the form factor approximation breaks down much earlier, at the very beginning of the scaling regime (II), see Fig.~\ref{Plot_4}. Hence, the form factor approximation is restricted to very low frequencies $\omega=o(L^{-1/2})$ which give no access to the bulk of the power spectrum.

\fancyhead{} \fancyhead[RE,LO]{\textsl{\thesection \quad Numerical study of power spectrum}}
\fancyhead[LE,RO]{\thepage}
\section{Numerical study of power spectrum in quantum systems with integrable classical dynamics} \label{Num-Billiards}

In numerical studies of spectral observables of {\it individual} quantum systems, an energy averaging, rather than an ensemble averaging, is usually involved. Specifically, an observable of interest is computed along a spectral segment of the length $L$ and then averaged over various locations of such segments in the spectrum~\cite{CCG-1985}. In the context of our study, such an energy averaging is straightforward to define for both the power spectrum
\begin{eqnarray}\label{ps-def-numerics}\fl
    \qquad S_L(\omega) = \frac{1}{L} \left< \sum_{\ell, m= 1}^{L} (E_{M_0+\ell} - E_{M_0})
                    (E_{M_0+m} - E_{M_0}) e^{i\omega (\ell-m)} \right>_{M_0}
\end{eqnarray}
and the form factor
\begin{eqnarray}\label{sff-def-numerics}\fl
    \qquad K_L(\tau) = \frac{1}{L} \Bigg\{ \left< \sum_{\ell, m= 1}^{L} e^{2i\pi \tau (E_{M_0+\ell} - E_{M_0})}
                     e^{-2i\pi \tau(E_{M_0+m} - E_{M_0})} \right>_{M_0} \nonumber\\
                     - \left< \sum_{\ell= 1}^{L} e^{2i\pi \tau(E_{M_0+\ell} - E_{M_0})} \right>_{M_0}
                     \left< \sum_{m=1}^L e^{-2i\pi \tau(E_{M_0+m} - E_{M_0})} \right>_{M_0} \Bigg\}.
\end{eqnarray}
Here, $\{ E_1 \le E_2 \le \dots \}$ is a set of {\it deterministic}, ordered and unfolded eigenlevels of an individual quantum system; angular brackets
$\langle (\dots) \rangle_{M_0}$ stand for energy averaging performed by sampling spectral segments of the length $L$ at various locations $M_0$ along unbounded spectrum. Following a detailed discussion of collective fluctuations in Section~\ref{PSSF}, both Eqs.~(\ref{ps-def-numerics}) and (\ref{sff-def-numerics}) deal with {\it tuned} spectral segments. To capture the {\it universal aspects} of spectral statistics expected to emerge for high-lying eigenlevels, the number of skipped eigenlevels $M_0$ should be large enough (see below).

\subsection{Semicircular billiards}\label{SCBN}

To study the power spectrum in a circular billiard of the radius $R$, we start with the Schr\"odinger equation
\begin{eqnarray}\label{c-SE}
    \left( \frac{\partial^2}{\partial r^2} + \frac{1}{r}\frac{\partial}{\partial r} + \frac{1}{r^2}\frac{\partial^2}{\partial \theta^2}\right)\Psi_{n,m}(r,\theta) + E_{n,m} \Psi_{n,m}(r,\theta) = 0,
\end{eqnarray}
where we set $\hbar^2/2m=1$, and solve it the domain ${\mathcal D}_{\rm\ocircle}=\{ 0\le r\le R; -\pi \le \theta \le \pi\}$ assuming Dirichlet boundary conditions
$\Psi_{n,m}((r,\theta)\in \partial {\mathcal D}_{\rm\ocircle})=0$. Its solution brings eigenfunctions and eigenspectrum in the form
\begin{eqnarray}\label{c-SE-functions}
    \Psi_{n,m}(r,\theta) = e^{i m\theta} J_m\left( \kappa_{n,m}\frac{r}{R}\right)
\end{eqnarray}
and
\begin{eqnarray}\label{c-SE-spectrum}
    E_{n,m} =  \frac{\kappa_{n,m}^2}{R^2}.
\end{eqnarray}
Here, $\kappa_{n,m}$ denotes the $n$-th positive zero ($n=1,2,\dots$) of the $m$-th Bessel function, $J_{m}(\kappa_{n,m})=0$ with $m=0,\pm 1,\pm 2,\dots$.

Owing to the identity $\kappa_{n,-m}=\kappa_{n,m}$, the energy spectrum Eq.~(\ref{c-SE-spectrum}) is double degenerate for all $m \neq 0$. To remove such a degeneracy, we follow the idea of Ref.~\cite{RV-1998} and consider a semicircular billiard comprised by the domain ${\mathcal D}_{\rm\Leftcircle}=\{ 0\le r\le R; 0 \le \theta \le \pi\}$. For the Dirichlet boundary conditions, the eigenstates of ${\mathcal D}_{\rm\Leftcircle}$
\begin{eqnarray}
    \chi_{n,m}(r,\theta) = \sin(m\theta)\, J_m\left( \kappa_{n,m}\frac{r}{R}\right), \qquad n,m=1,2,\dots,
\end{eqnarray}
are antisymmetric with respect to the reflection across the line $\theta=0$ and the eigenspectrum is given by Eq.~(\ref{c-SE-spectrum}) with $n,m=1,2,\dots$.

The large--$E$ asymptotics of spectral counting function $\bar{N}(E)$, describing the number of eigenlevels below energy $E$, is given by the Weyl law~\cite{BH-1976}
\begin{eqnarray}\label{Weyl}
    \bar{N}(E) = \frac{{\mathcal A}}{4\pi} E - \frac{\mathcal L}{4\pi} \sqrt{E} + {\mathcal C}, \quad E \rightarrow \infty,
\end{eqnarray}
where ${\mathcal A}=\pi R^2/2$ is the semicircle area, $\mathcal{L}=(2+\pi)R$ is its circumference, and ${\mathcal C}$ is a curvature-and-corner's contribution whose precise value is known but irrelevant for our purposes. In what follows we set $R=2 \sqrt{2}$ to ensure that the unit mean spacing $\Delta = (d\bar{N}/dE)^{-1}=1$ as $E \rightarrow \infty$.

In concert with a discussion in the beginning of Section~\ref{Num-Billiards}, the following computational protocol will be used to compute the form factor and power spectrum in the semicircular billiard.\newline\newline
\noindent
{\bf Protocol 1.}~To generate an ensemble ${\mathcal E} = \{ {\mathcal E}_1,\dots, {\mathcal E}_Q\}$ of $Q$ eigenlevel sequences ($Q\gg 1$ is required for high-accuracy computation of the form factor and the power spectrum), we produce a sufficiently long, ordered, eigenlevel sequence $\{E_1 \le E_2 \le \dots \}$ for a semicircular billiard of the radius $R=2\sqrt{2}$. To attain the universal spectral regime, we choose a high-lying threshold energy $E_*> 10^{9}$ (measured in the units of mean level spacing $\Delta=1$) and count the number $M(E_*)$ of eigenvalues {\it below} the threshold $E_*$; all of them will be skipped. The first truncated spectral subsequence $\mathcal{E}_1$ of the length $L$ belonging to the ensemble ${\mathcal E}$ is set to be $\mathcal{E}_1 =
    \{ E_{M(E_*)+1}-E_{M(E_*)}, \dots, E_{M(E_*)+L}- E_{M(E_*)}\}$,
notice the tuning! To produce the remaining $(Q-1)$ spectral subsequences of the ensemble ${\mathcal E}$, we generate a set of $(Q-1)$ independent discrete random variables $\{n_1, n_2,\dots, n_{Q-1}\}$, with $n_k$ denoting a random gap between the last eigenvalue of the $k$-th and the first eigenvalue of the $(k+1)$-th truncated subsequence of the same length $L$. The $j$-th member of the ensemble ${\mathcal{E}}$ is comprised by truncated subsequence of the form
\begin{eqnarray} \fl
    \quad \mathcal{E}_j =
    \{ E_{M(E_*)+(j-1)L + \sum_{k=1}^{j-1} n_{k} +1 } -
        E_{M(E_*)+(j-1)L + \sum_{k=1}^{j-1} n_{k} }, \dots, \nonumber\\
    \fl \qquad \qquad \quad
    E_{M(E_*)+(j-1)L + \sum_{k=1}^{j-1} n_{k} + L } -
        E_{M(E_*)+(j-1)L + \sum_{k=1}^{j-1} n_{k} }\}.
\end{eqnarray}
The averaging procedure is then performed over the ensemble ${\mathcal E} = \{ {\mathcal E}_1,\dots, {\mathcal E}_Q\}$ of $Q\gg 1$ spectral sequences, each of length $L$.
\hfill $\blacksquare$
\newline\newline
This protocol was applied for numerical studies of the power spectrum and the spectral form factor in a semicircular billiard of the radius $R=2\sqrt{2}$ at energies larger than the threshold energy $E_* =10^{11}$. The spectral data were gathered by generating $Q=10^5$ spectral sequences of length $L=1024$ each, separated by independent, uniformly distributed random gaps $\{n_1, n_2,\dots,n_{Q-1}\}$, with~\footnote[2]{We have checked that the results are largely insensitive to a particular choice of the gaps.} $n_j \thicksim {\rm U}(50,150)$. The results are presented in Fig.~\ref{Plot_6}.

(i) Spectral form factor and power spectrum computed numerically for semicircular billiards show excellent agreement with the theoretical predictions derived for {\it heavily truncated spectra} of infinite-dimensional random diagonal matrices (Theorem~\ref{K-inf-L-tau} and Corollary~\ref{PS-inf-L-omega}). In accordance with the discussion in Sections~\ref{PS-RDM} and \ref{SFF-sec}, this sector ($\alpha_L=0$) of the theory of RDM spectra is equivalent to a spectral model of finite eigenlevel sequences with independent, exponentially distributed level spacings, see also Ref.~\cite{ROK-2020}.

(ii) For comparison, we have also plotted theoretical curves for the form factor and the power spectrum calculated for {\it complete spectra} of random diagonal matrices (see Propositions~\ref{RDM-PS-complete} and \ref{RDM-FF-complete}). It appears that the two have nothing in common with the corresponding statistics in semicircular billiards thus invalidating a direct use of random diagonal matrices to description of spectral fluctuations in realistic quantum systems with integrable classical dynamics.

(iii) As a side remark, we note that while the power spectrum in truncated and complete RDM spectra are nowhere the same, their form factors approach unity in both models as $\tau$ grows; see Fig.~\ref{Plot_4} for further details.

\begin{figure}
\includegraphics[width=\textwidth]{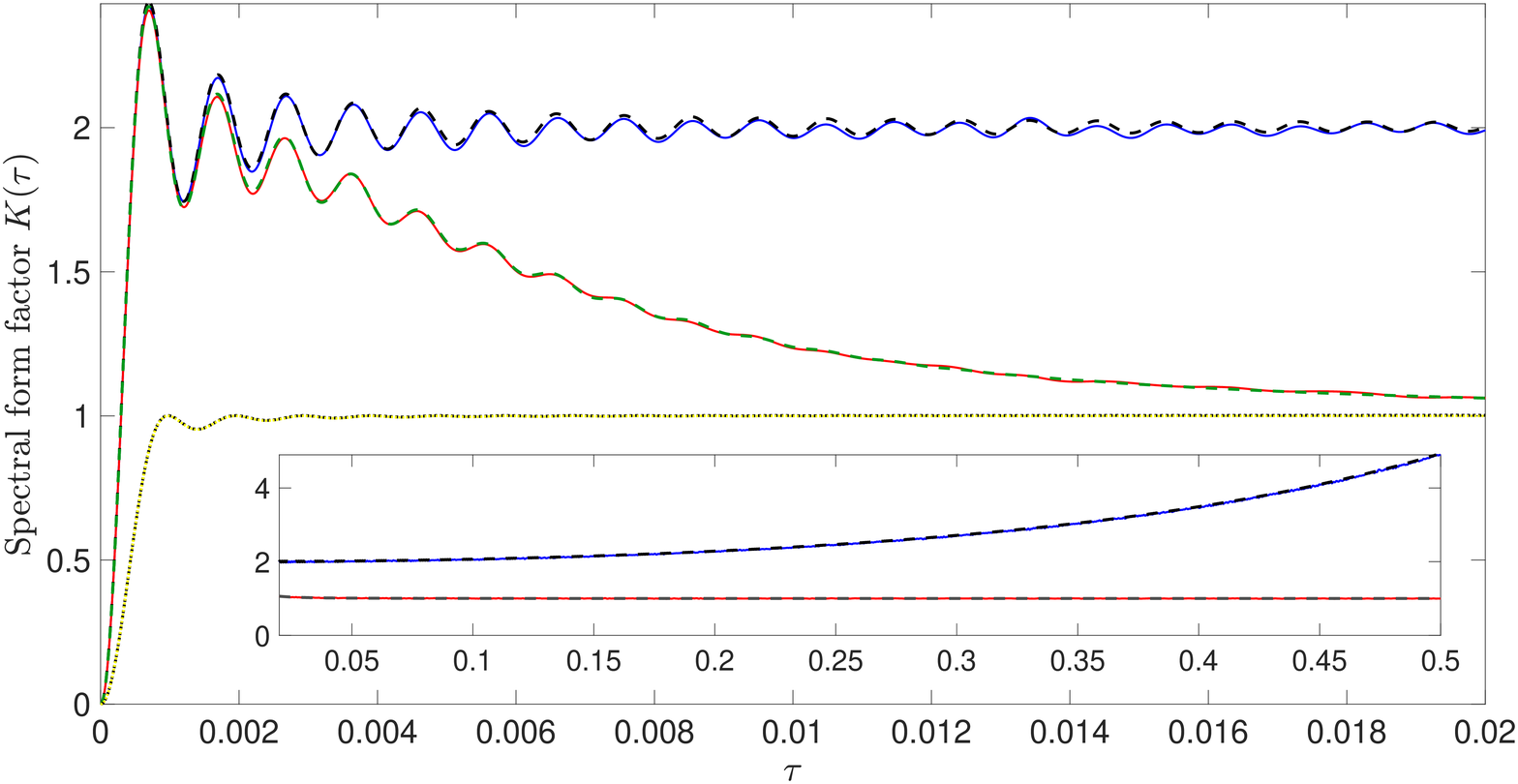}
\caption{
Spectral form factor $K(\tau)$ (red solid line) and regularized power spectrum $\omega^2 S(\omega)$ (blue solid line, $\omega=2\pi \tau$) calculated numerically for a semicircular billiard as described in Protocol 1. Parameters used: $R=2\sqrt{2}$, $E_*=10^{11}$, $L=1024$, and $Q=10^5$. Green and black dashed lines are the theoretical curves for the form factor [Eq.~(\ref{KL-tau})] and regularized power spectrum [Eq.~(\ref{SL-omega})], respectively. Both theoretical curves correspond to heavily truncated spectra of infinite-dimensional RDM which is equivalent to a spectral model of finite eigenlevel sequences with independent and exponentially distributed level spacings, see Remarks~\ref{FFS-R} and~\ref{rem-correlations}. Inset: the same graphs shown for larger $\tau$. For comparison, yellow solid and black dotted lines show $K_N(\tau)$ and $\omega^2 S_N(\omega)$ for complete spectra of $1024 \times 1024$ random diagonal matrices, see Eqs.~(\ref{FF}) and (\ref{smd-sum-33}).
}
\label{Plot_6}
\end{figure}

\subsection{Irrational rectangular billiards} \label{IRBN}

In the case of a rectangular billiard with sides $a$ and $b$, we start with the Schr\"odinger equation
\begin{eqnarray}
    \left( \frac{\partial^2}{\partial x^2} + \frac{\partial^2}{\partial y^2}\right)\Psi_{n,m}(x,y) + E_{n,m} \Psi_{n,m}(x,y) = 0,
\end{eqnarray}
where we set $\hbar^2/2m=1$, and solve it in the domain ${\mathcal D}=\{ 0\le x\le a; 0\le y\le b\}$ assuming Dirichlet boundary conditions
$\Psi_{n,m}((x,y)\in \partial {\mathcal D})=0$. Its solution brings both eigenfunctions and eigenspectrum in the form
\begin{eqnarray}
    \Psi_{n,m}(x,y) = \sin\left(\frac{\pi n x}{a}\right) \sin\left(\frac{\pi m y}{b}\right)
\end{eqnarray}
and
\begin{eqnarray}
    E_{n,m} = \pi^2 \left[
        \left(\frac{n}{a}\right)^2 +  \left(\frac{m}{b}\right)^2
    \right], \quad n,m = 1, 2, \dots
\end{eqnarray}
The spectral counting function $\bar{N}(E)$, describing the number of eigenlevels below energy $E$, is asymptotically given by the Weyl law Eq.~(\ref{Weyl}), where
${\mathcal A}=ab$ is the billiard area, $\mathcal{L}=2(a+b)$ is its circumference, and ${\mathcal C}$ is a corner's contribution.

In what follows we set ${\mathcal A}=4\pi$ to ensure the unit mean spacing $\Delta = (d\bar{N}/dE)^{-1}=1$ as $E \rightarrow \infty$. To focus
on generic, {\it universal} aspects of spectral statistics, we choose the squared aspect ratio
\begin{eqnarray} \label{alpha-irr}
    \alpha = \left(\frac{a}{b}\right)^2 \in {\mathbb R}_+ \backslash {\mathbb Q}
\end{eqnarray}
to be an irrational number hereby suppressing contributions of non-universal accidental spectral degeneracies~\cite{BT-1977, CK-1999}. This leaves us with the energy levels parametrized as follows:
\begin{eqnarray} \label{Enm-alpha}
    E_{n,m} = \frac{\pi}{4} \left[
        \frac{n^2}{\sqrt{\alpha}} + m^2\sqrt{\alpha}
    \right], \quad n,m = 1, 2, \dots.
\end{eqnarray}

To compute the form factor and the power spectrum with high accuracy, one has to create an {\it ensemble} of eigenlevel sequences and appropriately perform ensemble averaging. Contrary to the numerical study of a semicircular billiard, where only one protocol was devised, this can be achieved in several ways.

Below we discuss three different computational protocols particularly suitable for numerical studies of rectangular billiards. In Protocol 1, statistical ensemble arises as a result of samplings of a sufficiently long deterministic eigenlevel sequence `measured' for an {\it individual billiard}, see the discussion of `energy averaging' in the beginning of Section~\ref{Num-Billiards}. The Protocols 2 and 3 are inapplicable to studies of an individual billiard; instead, the two are tailor-made for statistical analysis of an {\it ensemble of billiards} with different aspects ratios but asymptotically equivalent mean spacings. Choosing the ratios at random introduces a true randomness similar to the one encountered in our analysis of ensembles of random diagonal matrices. Computation of the form factor and the power spectrum according to Protocol 2 is in one-to-one correspondence with the tuning procedure discussed and implemented in Section~\ref{PSSF}. The Protocol 3, describing a somewhat heuristic way of dealing with collective fluctuations, appears to be more effective numerically-wise.
\newline\newline
{\bf Protocol 1.}~To generate an ensemble ${\mathcal E}^{\prime} = \{ {\mathcal E}_1^{\prime},\dots, {\mathcal E}_Q^{\prime}\}$ of $Q$ eigenlevel sequences, we shall sample a sufficiently long, ordered, eigenlevel sequence $\{E_1^{(\alpha_0)} \le E_2^{(\alpha_0)} \le \dots \}$ produced for a fixed value of the squared aspect ratio that we set to the golden ratio $\alpha_0=(1+\sqrt{5})/2$. Similarly to the Protocol 1 for the semicircular billiard, we choose a high-lying threshold energy $E_*>10^{9}$ to attain the universal spectral regime and count the number $M(E_*;\alpha_0)$ of eigenvalues {\it below} the threshold $E_*$; all of them will be skipped. The first truncated spectral subsequence $\mathcal{E}_1^{\prime}$ of the length $L$ belonging to the ensemble ${\mathcal E}^{\prime}$ is chosen to be
\begin{eqnarray} \fl \qquad\quad
\mathcal{E}_1^{\prime} =
    \{ E_{M(E_*;\alpha_0)+1}^{(\alpha_0)}-E_{M(E_*;\alpha_0)}^{(\alpha_0)}, \dots, E_{M(E_*;\alpha_0)+L}^{(\alpha_0)}- E_{M(E_*;\alpha_0)}^{(\alpha_0)}\},
\end{eqnarray}
notice the tuning! Remaining $(Q-1)$ spectral subsequences of the ensemble ${\mathcal E}^{\prime}$ will be produced by generating a set of $(Q-1)$ independent random gaps $\{n_1, n_2,\dots, n_{Q-1}\}$, where $n_k$ denotes a random gap between the last eigenvalue of the $k$-th and the first eigenvalue of the $(k+1)$-th truncated subsequence of the same length $L$. The $j$-th member of the ensemble ${\mathcal{E}^{\prime}}$ is then comprised by truncated subsequence of the form
\begin{eqnarray} \fl
    \quad \mathcal{E}_j^{\prime} =
    \{ E_{M(E_*;\alpha_0)+(j-1)L + \sum_{k=1}^{j-1} n_{k} +1 }^{(\alpha_0)} -
        E_{M(E_*;\alpha_0)+(j-1)L + \sum_{k=1}^{j-1} n_{k} }^{(\alpha_0)}, \dots, \nonumber\\
    \fl \qquad \qquad \quad
    E_{M(E_*;\alpha_0)+(j-1)L + \sum_{k=1}^{j-1} n_{k} + L }^{(\alpha_0)} -
        E_{M(E_*;\alpha_0)+(j-1)L + \sum_{k=1}^{j-1} n_{k} }^{(\alpha_0)}\}.
\end{eqnarray}
The averaging procedure is then performed over the ensemble ${\mathcal E}^{\prime} = \{ {\mathcal E}_1^{\prime},\dots, {\mathcal E}_Q^{\prime}\}$ of $Q \gg 1$ eigenlevel sequences, each of length $L$.
\hfill $\blacksquare$
\newline\newline\noindent
{\bf Protocol 2.}~Contrary to Protocol 1, in which an ensemble of $Q$ spectral sequences was obtained through random sampling of a single, deterministic and sufficiently long sequence of eigenlevels of an {\it individual} billiard, the Protocol 2 deals with an ensemble ${\mathcal E}^{\prime\prime} = \{ {\mathcal E}_1^{\prime\prime},\dots, {\mathcal E}_Q^{\prime\prime}\}$ of eigenlevel sequences originating from $Q$ rectangular billiards whose aspect ratios $\{ \alpha_1,\dots,\alpha_Q\}$ comprise independent random variables. The sides of the $j$-th billiard equal $a=2\sqrt{\pi} \alpha_j^{1/4}$ and $b=2\sqrt{\pi} \alpha_j^{-1/4}$ such that a billiard area ${\mathcal A}=4\pi$ does not vary throughout the ensemble. For each $\alpha_j$, we produce a sufficiently long, ordered, eigenlevel sequence $\{E_1^{(\alpha_j)} \le E_2^{(\alpha_j)} \le \dots \}$ by virtue of Eq.~(\ref{Enm-alpha}). Next, we skip a sufficiently large number $M>10^9$ of lowest eigenlevels to obtain a truncated sequence of $L$ ordered eigenlevels
$\{E_{M+1}^{(\alpha_j)} < E_{M+2}^{(\alpha_j)} < \dots <  E_{M+L}^{(\alpha_j)}\}$ taking particular care that there are no spectral degeneracies therein, within the machine precision. Finally, we tune this subsequence, as discussed in Section~\ref{PS-trunc-ss} for the power spectrum (Definition~\ref{def-01-truncated} and Remark~\ref{rem-collective}) and in Section~\ref{SFF-trunc} for the form factor (Definition~\ref{def-01-K-tau-truncated}, Lemma~\ref{PS-stationary-K-truncated} and Remark~\ref{FF-rem}) to obtain the $j$-th member of the spectral ensemble,
\begin{eqnarray}
    \mathcal{E}_j^{\prime\prime} = \{E_{M+1}^{(\alpha_j)} - E_{M}^{(\alpha_j)},  E_{M+2}^{(\alpha_j)} - E_{M}^{(\alpha_j)}, \dots,  E_{M+L}^{(\alpha_j)} - E_{M}^{(\alpha_j)}\}.
\end{eqnarray}
The averaging procedure is then performed over the ensemble ${\mathcal E}^{\prime\prime} = \{ {\mathcal E}_1^{\prime\prime},\dots, {\mathcal E}_Q^{\prime\prime}\}$ of $Q \gg 1$ eigenlevel sequences, each of length $L$.
\hfill $\blacksquare$
\newline\newline\noindent
{\bf Protocol 3.}~There exists an alternative way to get rid of collective fluctuations in the truncated sequence of eigenlevels, as described below. To build an appropriate ensemble ${\mathcal E}^{\prime\prime\prime} = \{ {\mathcal E}_1^{\prime\prime\prime},\dots, {\mathcal E}_Q^{\prime\prime\prime}\}$ of $Q$ eigenlevel sequences, we again generate $Q$ aspect ratios $\{ \alpha_1,\dots,\alpha_Q\}$ at random, see Protocol 2. For each $\alpha_j$, we compute a sufficiently long, ordered, eigenlevel sequence $\{E_1^{(\alpha_j)} \le E_2^{(\alpha_j)} \le \dots \}$ by virtue of Eq.~(\ref{Enm-alpha}). However, in distinction to Protocol 2, a truncated sequence of $L$ ordered eigenlevels will be constructed in a slightly different manner. Specifically, we choose a high-lying threshold energy $E_*>10^9$ (measured in units of the mean level spacing $\Delta=1$) and count the number of eigenvalues {\it below} the threshold $E_*$; say, there will be $M(E_*;\alpha_j)$ of such eigenvalues; all of them will be discarded. The number $M(E_*;\alpha_j)$ is a random variable as is the aspect ratio $\alpha_j$. The associated truncated sequence of $L$ ordered eigenlevels comprising the $j$-th member of the spectral ensemble reads:
\begin{eqnarray}
    \mathcal{E}_j^{\prime\prime\prime} =
    \{ E_{M(E_*;\alpha_j)+1}^{(\alpha_j)}, E_{M(E_*;\alpha_j)+2}^{(\alpha_j)}, \dots, E_{M(E_*;\alpha_j)+L}^{(\alpha_j)}\}.
\end{eqnarray}
We verify, within the machine precision, that $\mathcal{E}_j^{\prime\prime\prime}$ is free of spectral degeneracies. It should be noticed that spectral subsequence $\mathcal{E}_j^{\prime\prime\prime}$ does not exhibit collective fluctuations since its lowest eigenvalue -- the first above the threshold energy -- may only fluctuate on the scale of the mean level spacing. The averaging procedure is then performed over the ensemble ${\mathcal E}^{\prime\prime\prime} = \{ {\mathcal E}_1^{\prime\prime\prime},\dots, {\mathcal E}_Q^{\prime\prime\prime}\}$ of $Q \gg 1$ eigenlevel sequences, each of length $L$.
\hfill $\blacksquare$
\newline\newline\noindent
The Protocol 3 was applied to study the power spectrum and the spectral form factor in an ensemble of $Q=10^6$ rectangular billiards with randomly chosen aspect ratios $\alpha_j\in (1,2)$ at energies larger than the threshold energy $E_* =10^{13}$. For each billiard realization, a set of $L=2048$ ordered eigenlevels was recorded for further statistical processing. Remarkably, the power spectrum~\footnote[6]{The power spectrum curve for rectangular billiards has been displayed in Fig.~\ref{Plot_1} of the Introduction.} and the form factor computed numerically for an ensemble of rectangular billiards followed closely the theoretical curves Eqs.~(\ref{SL-omega}) and (\ref{KL-tau}), respectively. In fact, a correspondence between simulations and theoretical predictions for irrational rectangular billiards is even better than for semicircular billiards. This can be attributed to a higher threshold energy $E_*$ in the former case. Hence, the discussion and conclusions presented in the end of Section~\ref{SCBN} equally apply to rectangular billiards. Numerical implementation of Protocols 1 and 2 produced very similar curves for the two statistics, thus validating all three computational protocols.

\section*{Acknowledgments}
This work was supported by the Israel Science Foundation through the Grants No.~648/18 (E.K. and R.R.) and No.~2040/17 (R.R.). Some of the computations presented in this work were performed on the Hive computer cluster at the University of Haifa, which is partially funded through the ISF grant No.~2155/15. Last but not least, we thank Dr.~Dmitrii Karp for useful discussion of the properties of hypergeometric functions.

\newpage
\renewcommand{\appendixpagename}{\normalsize{Appendices}}
\addappheadtotoc
\appendixpage
\renewcommand{\thesection}{\Alph{section}}
\renewcommand{\theequation}{\thesection.\arabic{equation}}
\setcounter{section}{0}
\fancyhead{} \fancyhead[RE,LO]{\textsl{\thesection \quad Asymptotics of the Kummer confluent hypergeometric function ${}_1 F_1 (a,b;i x)$}}
\fancyhead[LE,RO]{\thepage}

\section{Asymptotics of the Kummer confluent hypergeometric function ${}_1 F_1 (a,b;i x)$}\label{A-1}

For any $x\in \CC$ and ${\rm Re\,} b > {\rm Re\,} a>0$, the Kummer confluent hypergeometric function ${}_1 F_1 (a,b;i x)$ admits the integral representation
\begin{equation}\label{eq:1F1int}
{}_1F_1(a;b;i x)=\frac{\Gamma(b)}{\Gamma(a)\Gamma(b-a)} I(a;b;x),
\end{equation}
where
\begin{equation}\label{eq:Iabz}
    I(a;b;x)= \int_0^1 e^{i x t} t^{a-1} (1-t)^{b-a-1} dt.
\end{equation}

\begin{lemma}\label{lemma:bound}
For all $b\ge a\ge 0$ and $x \in \RR$, the following bound holds:
\begin{equation}
\left| {}_1F_1	(a; b; i x)\right|\le 1.
\end{equation}	
\end{lemma}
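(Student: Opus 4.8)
The plan is to recognize ${}_1F_1(a;b;ix)$, for the generic range of parameters, as the characteristic function of a beta-distributed random variable and then invoke the elementary bound on characteristic functions. First I would treat the principal case $b>a>0$, in which the integral representation Eqs.~(\ref{eq:1F1int})--(\ref{eq:Iabz}) is available. The key observation is that the weight
$$
    \mathrm{d}\mu_{a,b}(t) = \frac{\Gamma(b)}{\Gamma(a)\Gamma(b-a)}\, t^{a-1}(1-t)^{b-a-1}\,\mathrm{d}t
$$
is, by the normalization of the Euler beta integral, a \emph{probability} measure on $[0,1]$ (it is the law of a ${\rm Beta}_1(a,b-a)$ variable). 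Hence
$$
    {}_1F_1(a;b;ix) = \int_0^1 e^{ixt}\,\mathrm{d}\mu_{a,b}(t),
$$
and the triangle inequality for integrals gives
$$
    \bigl| {}_1F_1(a;b;ix) \bigr| \le \int_0^1 \bigl| e^{ixt} \bigr|\,\mathrm{d}\mu_{a,b}(t) = \int_0^1 \mathrm{d}\mu_{a,b}(t) = 1,
$$
since $x\in\RR$ makes $|e^{ixt}|=1$.

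It then remains to dispose of the degenerate boundary parameter values not covered by the integral representation. For $a=0$ (and $b\ge 0$) the Kummer series collapses to its constant term, ${}_1F_1(0;b;ix)=1$, so the bound holds with equality. For $0<a=b$ one has the identity ${}_1F_1(b;b;ix)=e^{ix}$, again of modulus exactly $1$. The only genuinely singular point is $a=b=0$, which I would either exclude as irrelevant to the applications in Section~\ref{SFF-sec} (where $b=N-1$ or $b=N$ with $N\ge 2$), or dispatch by the convention ${}_1F_1(0;0;ix)=e^{ix}$ obtained as a limit; in all interpretations the modulus does not exceed $1$.

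I do not expect any serious obstacle here: the statement is essentially the classical fact that a characteristic function is bounded by $1$, repackaged through the beta integral. The only point requiring a modicum of care is the bookkeeping of the boundary cases $a\in\{0\}\cup\{b\}$ where the density $\mathrm{d}\mu_{a,b}$ ceases to be integrable in the naive sense, and one must fall back on the explicit closed forms of ${}_1F_1$ rather than on Eq.~(\ref{eq:1F1int}). A remark could be added that the bound is sharp, attained for $x=0$ (all $a,b$) and, more interestingly, for every $x$ when $a\in\{0,b\}$.
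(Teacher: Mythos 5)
Your argument is correct and is essentially the paper's own proof: the paper likewise bounds the integral representation Eq.~(\ref{eq:1F1int}) by moving the modulus inside (your triangle inequality), recognizes the resulting integral as ${}_1F_1(a;b;0)=1$ via the Euler beta normalization (your probability-measure observation), and then disposes of the edge cases $a=0$ and $a=b$ through the same explicit identities ${}_1F_1(0;b;ix)=1$ and ${}_1F_1(a;a;ix)=e^{ix}$. The characteristic-function phrasing is only a cosmetic repackaging of the same computation.
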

\begin{proof}
For any $x\in\RR$ and $0<a<b$ we have
\begin{eqnarray} \fl
\qquad \left| {}_1F_1	(a; b; i x)\right| = \frac{\Gamma(b)}{\Gamma(a)\Gamma(b-a)}\left| I(a;b;x) \right| \nonumber\\
    \le \frac{\Gamma(b)}{\Gamma(a)\Gamma(b-a)} \int_0^1 \left| e^{i x t} t^{a-1} (1-t)^{b-a-1}\right| dt \nonumber\\
    = \frac{\Gamma(b)}{\Gamma(a)\Gamma(b-a)} \int_0^1  t^{a-1} (1-t)^{b-a-1} dt= {}_1F_1	(a; b; 0)=1.
\end{eqnarray}
Since ${}_1F_1(0; b; i x)=1$ and ${}_1F_1(a; a; i x)=e^{i x}$, the bound also holds for $a=0$ and $a=b$.
\end{proof}

\begin{proposition}\label{prop1new2}
	   Let $a>0$, $x\in \RR$ and $j\in\RR$. Then, as $\lambda \rightarrow \infty$, the expansion
    \begin{eqnarray} \label{eq:limit1new}
	 {}_1 F_1 (a; \lambda + j; i \lambda x) = (1-i x)^{-a} + {\mathcal O}\left( \frac{1}{\lambda} \right)
	\end{eqnarray}
	holds uniformly for any bounded $a$, $x$ and $j$.
\end{proposition}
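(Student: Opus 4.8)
The plan is to work directly from the integral representation \eqref{eq:1F1int}--\eqref{eq:Iabz}, which is valid here since $0<a<\lambda+j$ for all large $\lambda$. Writing $b=\lambda+j$, we have
\[
{}_1F_1(a;\lambda+j;i\lambda x)=\frac{\Gamma(\lambda+j)}{\Gamma(a)\Gamma(\lambda+j-a)}\int_0^1 e^{i\lambda x t}\,t^{a-1}(1-t)^{\lambda+j-a-1}\,dt.
\]
The idea is that the factor $e^{i\lambda x t}(1-t)^{\lambda+j-a-1}$ concentrates the mass of the integral near $t=0$ as $\lambda\to\infty$, so I would rescale $t=u/\lambda$. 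Then $(1-u/\lambda)^{\lambda+j-a-1}\to e^{-u}$ and $e^{i\lambda x t}=e^{ixu}$, so the integrand tends pointwise to $e^{ixu}e^{-u}u^{a-1}$, and $\int_0^\infty e^{-(1-ix)u}u^{a-1}\,du=\Gamma(a)(1-ix)^{-a}$. The prefactor $\Gamma(\lambda+j)/(\Gamma(a)\Gamma(\lambda+j-a))$ behaves like $\lambda^a/\Gamma(a)$ by the standard ratio asymptotics for the Gamma function. Multiplying these together gives the claimed leading term $(1-ix)^{-a}$.

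To make this rigorous and, crucially, to control the error at order $\mathcal O(1/\lambda)$ uniformly in bounded $a,x,j$, I would proceed as follows. First I would record the uniform asymptotic $\Gamma(\lambda+j)/\Gamma(\lambda+j-a)=\lambda^{a}\bigl(1+\mathcal O(1/\lambda)\bigr)$, with the implied constant controlled in terms of $|a|$ and $|j|$ only; this follows from Stirling or from the known expansion $\Gamma(z+\alpha)/\Gamma(z+\beta)=z^{\alpha-\beta}(1+(\alpha-\beta)(\alpha+\beta-1)/(2z)+\cdots)$. Next, after the substitution $t=u/\lambda$ the integral becomes $\lambda^{-a}\int_0^\lambda e^{ixu}u^{a-1}(1-u/\lambda)^{\lambda+j-a-1}\,du$, so the $\lambda^{-a}$ cancels the prefactor's $\lambda^{a}$ and I am left with $\frac{1}{\Gamma(a)}\int_0^\lambda e^{ixu}u^{a-1}(1-u/\lambda)^{\lambda+j-a-1}\,du$ (times $1+\mathcal O(1/\lambda)$). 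I would then compare this to $\frac{1}{\Gamma(a)}\int_0^\infty e^{ixu}u^{a-1}e^{-u}\,du=(1-ix)^{-a}$ by splitting the error into (a) the tail $\int_\lambda^\infty$ of the limiting integral, which is exponentially small, and (b) the bulk difference $\int_0^\lambda e^{ixu}u^{a-1}\bigl[(1-u/\lambda)^{\lambda+j-a-1}-e^{-u}\bigr]\,du$. For (b) the elementary inequality $|(1-u/\lambda)^{\lambda+j-a-1}-e^{-u}|\le C(u^2+u)e^{-u}/\lambda$ on $0\le u\le \lambda$ — obtained from $(\lambda+j-a-1)\log(1-u/\lambda)=-u - u^2/(2\lambda)+\cdots$ together with a uniform remainder bound — gives a contribution of order $1/\lambda$ after integrating against $u^{a-1}e^{-u}$.

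The main obstacle I anticipate is making the estimate in step (b) genuinely uniform in $a$ down to $a$ near $0$ (where $u^{a-1}$ is mildly singular at the origin) and near the endpoint $u\approx\lambda$ (where $\log(1-u/\lambda)$ blows up, so the quadratic Taylor bound for the logarithm fails and one must instead use that $(1-u/\lambda)^{\lambda+j-a-1}$ is small there and handle that region separately, e.g. on $u\in[\lambda/2,\lambda]$, by a crude direct bound). Splitting the $u$-range into, say, $[0,\lambda/2]$ and $[\lambda/2,\lambda]$ and treating each piece with the appropriate estimate should resolve this; the small-$u$ integrability is fine because $\int_0^1 u^{a-1}\,du=1/a$ is absorbed into the constant once we note that for $a$ bounded away from $0$ it is bounded, and for $a\to 0$ one uses instead the boundary cases ${}_1F_1(0;b;ix)=1=(1-ix)^{0}$ together with continuity, exactly as in the proof of Lemma~\ref{lemma:bound}. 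I would also double-check that no cancellation is lost by taking absolute values prematurely: since $|e^{ixu}|=1$, bounding $|e^{ixu}-\text{same}|$ is harmless and the oscillation plays no role in the error analysis, only in producing the factor $(1-ix)^{-a}$ in the main term.
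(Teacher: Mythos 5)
Your argument is correct, and it takes a genuinely different route from the paper. The paper deforms the contour of $I(a;\lambda+j;\lambda x)$ onto the steepest-descent path of $h_0(x;t)=-ixt-\log(1-t)$, which forces a three-case discussion of the path topology ($0<x<\pi/2$, $x=\pi/2$, $x>\pi/2$, with auxiliary arcs $\gamma_1,\gamma_\cap,\gamma_2$ whose contributions are shown to be exponentially small) and then invokes Wong's Laplace-method theorem at the endpoint $t=0$. You instead stay on the real interval, rescale $t=u/\lambda$, and compare $(1-u/\lambda)^{\lambda+j-a-1}$ with $e^{-u}$; this works precisely because the \emph{modulus} of the integrand, $t^{a-1}(1-t)^{\lambda+j-a-1}$, already concentrates at the endpoint on the scale $t=\mathcal{O}(1/\lambda)$, where the phase $e^{i\lambda xt}=e^{ixu}$ is not rapidly oscillating, so no cancellation and hence no contour deformation is needed. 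Your method is more elementary and gives the uniformity in bounded $x$, $j$ for free (the oscillatory factor is only ever bounded by $1$), while the paper's machinery is the one that generalizes to Proposition~\ref{prop:1F1new}, where the relevant critical point is an interior saddle and a genuine steepest-descent analysis is unavoidable, and it also produces the full asymptotic series systematically. Two small points: your stated bound $|(1-u/\lambda)^{\lambda+j-a-1}-e^{-u}|\le C(u^2+u)e^{-u}/\lambda$ cannot hold with $e^{-u}$ up to $u=\lambda$ (near the right endpoint the logarithm blows up), but with $e^{-u/2}$ on $[0,\lambda/2]$ plus the crude exponential bound on $[\lambda/2,\lambda]$ — exactly the split you propose — the estimate goes through; and the uniformity as $a\to0^+$ needs no separate continuity argument, since the error integrand carries an extra factor of $u$ (so $u^{a-1}$ becomes $u^{a}$) and the prefactor $1/\Gamma(a)$ even improves the constant there.
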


\begin{proof}
As soon as ${}_1F_1	(a; b; i 0)=1$, we shall only consider $x \neq 0$. Furthermore, it is sufficient to treat the case $x> 0$ only since the domains $x>0$ and $x<0$ are related to each other by virtue of the symmetry relation $I(a;b;-x)=I^*(a;b;x)$.

Owing to the integral representation Eq.~(\ref{eq:1F1int}), we shall focus on the asymptotic analysis of $I(a,\lambda+j,\lambda x)$, Eq.~(\ref{eq:Iabz}), as $\lambda \rightarrow \infty$. This will be done by applying the Laplace method~\cite{Wong} after deforming the integration contour $t\in (0,1)$ in the integral
\begin{eqnarray}\label{eq:int1new}
I(a,\lambda+j,\lambda x) =\int_{0}^1 t^{a-1} (1-t)^{j-a-1} e^{-\lambda h_0(x;t)} dt,\nonumber\\
\label{eq:halpha}
    \qquad \quad h_{0}(x;t) =-i x t - \log(1-t),
\end{eqnarray}
into the integration path $\gamma$ lying in the complex plane
\begin{equation}\label{eq:int1new-gamma}
I(a,\lambda+j,\lambda x)=\int_{\gamma} z^{a-1} (1-z)^{j-a-1} e^{-\lambda h_0(x;z)} dz
\end{equation}
and coinciding, at least partially, with a path $\gamma_0$ of steepest descent determined by equation
\begin{eqnarray}\label{steep-d-eq}
    {\rm Im\,} h_0(x;z)\big|_{z\in \gamma_0} = {\rm Im\,} h_0(x;z=0) = 0.
\end{eqnarray}
The latter brings an explicit parameterization of the path $\gamma_0$ in the form $z= 1-r e^{-i\phi}$, where
\begin{eqnarray}
    \phi=x(1-r\cos\phi).
\end{eqnarray}
In particular, if $x=0$, the real line will be a path of steepest descent.

For $x>0$, a path of steepest descent is described by the curve
\begin{eqnarray}\label{cont-g-0}
    \gamma_0: \;\; z=z_0(\phi) = \frac{\phi}{x} + i \left( 1 - \frac{\phi}{x}\right) \tan \phi, \quad \phi\in(0,\phi_0),
\end{eqnarray}
where $\phi_0 = {\rm min}(x,\pi/2)$. Depending on the values of $x$, Eq.~(\ref{cont-g-0}) brings out three different types of the deformed integration contour $\gamma$.

(i) If $0<x<\pi/2$, the path of steepest descent $\gamma_0$ is a finite-length curve starting at the origin $z=0$ and finishing at the endpoint $z=1$, see the left panel in Fig.~\ref{fig:SP1}. Hence, the integration path $\gamma$ in Eq.~(\ref{eq:int1new-gamma}) is identical to $\gamma_0$.

(ii) If $x = \pi/2$, the path of steepest descent $\gamma_0$ is a finite-length curve starting at the origin $z=0$ and finishing at the point $z= 1+2i/\pi$ thus {\it not} reaching the endpoint $z=1$. To close the integration contour, we need to add the straight (steepest descent) path $\gamma_1$ connecting the points $z= 1+2i/\pi$ and $z=1$, see the middle panel in
Fig.~\ref{fig:SP1}. Hence, the integration path $\gamma$ in Eq.~(\ref{eq:int1new-gamma}) is $\gamma=\gamma_0 \cup \gamma_1$.

(iii) Finally, if $x> \pi/2$, the path of steepest descent $\gamma_0$ is a curve of infinite length starting at the origin $z=0$ and approaching complex infinity at $z=\pi/(2x) + i\infty$. To build the integration contour $\gamma$, we need to return from complex infinity along the straight (steepest descent) path $\gamma_2$ connecting the points $z= 1+i\infty$ and $z=1$, and connecting the curves $\gamma_0$ and $\gamma_2$ with an arc $\gamma_\cap$, see the right panel in Fig.~\ref{fig:SP1}. Hence, the integration path $\gamma$ in Eq.~(\ref{eq:int1new-gamma}) is $\gamma=\gamma_0 \cup \gamma_\cap \cup \gamma_2$.
\begin{figure}[b]
\begin{center}
\includegraphics[width=0.33\textwidth]{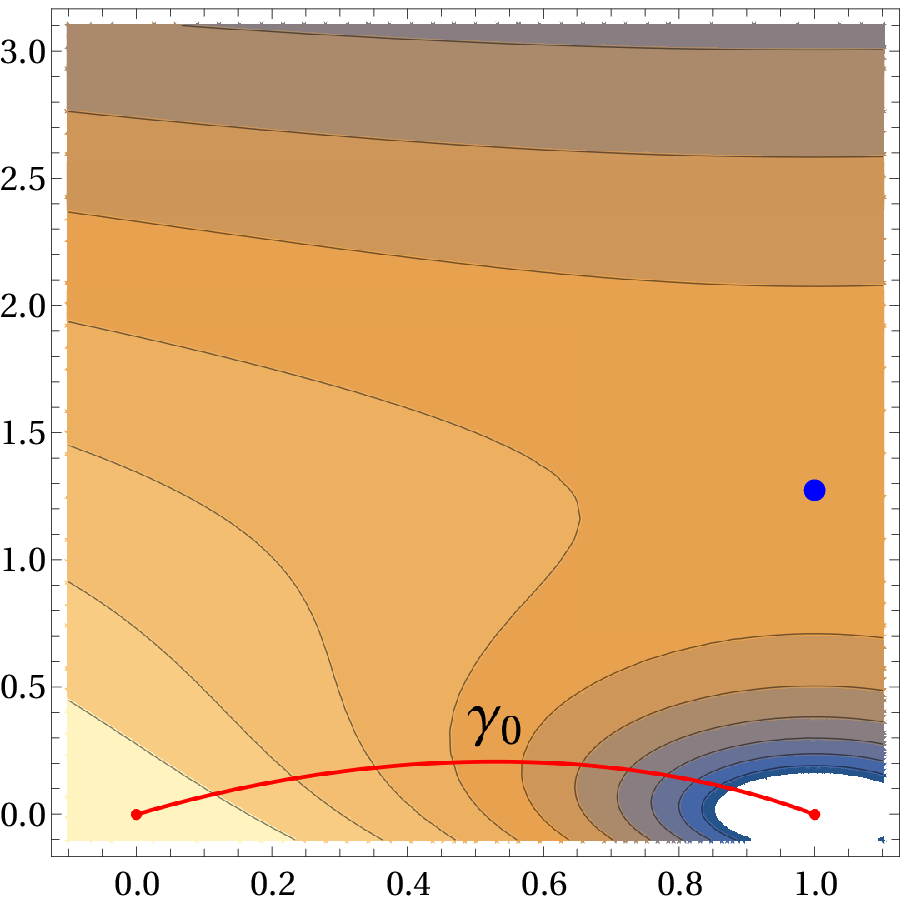}\hfill\includegraphics[width=0.33\textwidth]{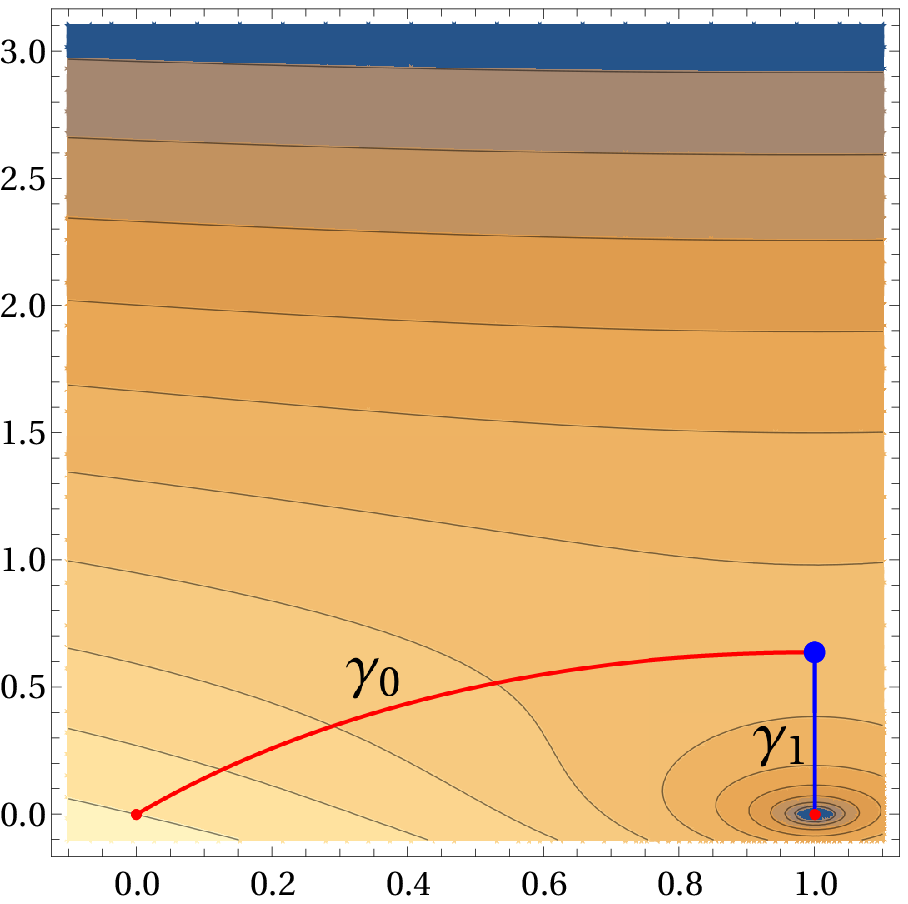}\hfill\includegraphics[width=0.33\textwidth]{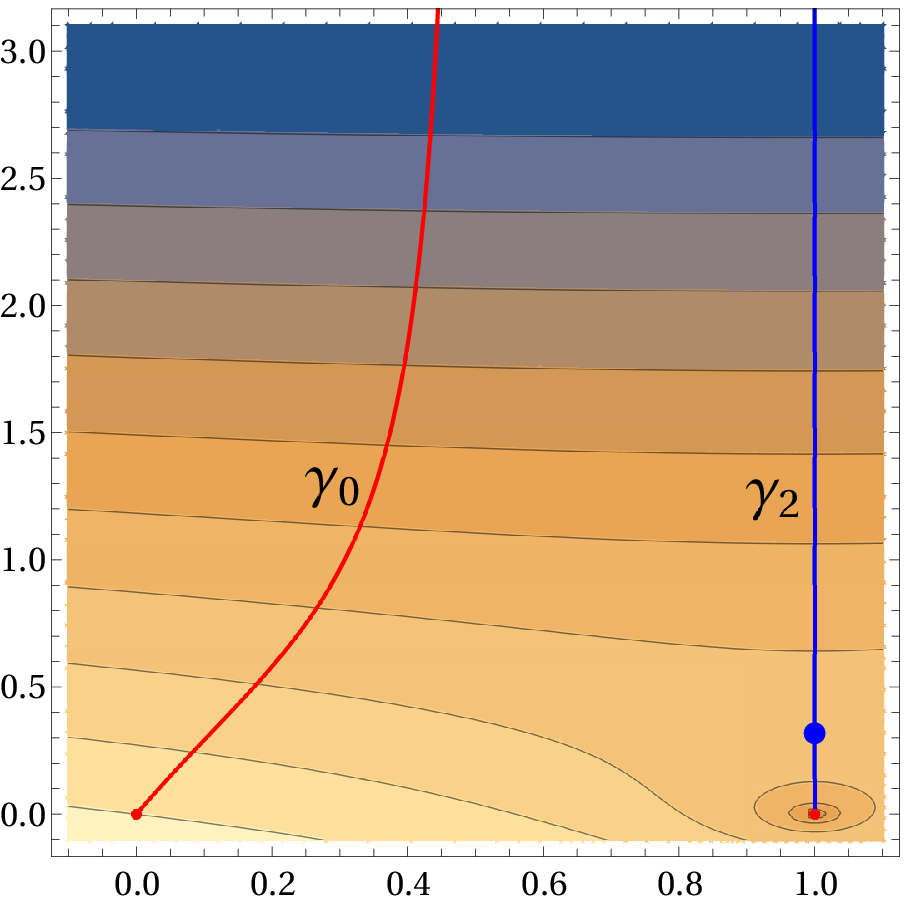}
\caption{Integration path $\gamma$ in the integral Eq.~(\ref{eq:int1new-gamma}) for three cases. Left panel: $0 < x< \pi/2$. Middle panel: $x=\pi/2$. Right panel: $x>\pi/2$.
Colored background is a contour plot of ${\rm Re\,} h_0(x; z)$ as a function of $z\in\CC$. Blue dot denotes the saddle point of $h_0(x;z)$.}\label{fig:SP1}
\end{center}
\end{figure}

Notice that for $x \ge \pi/2$  the integration over additional paths $\gamma_1$ and $\gamma_2$ will bring exponentially small corrections to the dominating contribution coming from the contour $\gamma_0$ as $\lambda\rightarrow \infty$. This holds true since for any $z$ belonging to $\gamma_1,\gamma_2$ we observe the inequality ${\rm Re\,}h_0(x;z) > h_0(x;0)=0$. The integration over $\gamma_\cap$ brings no contribution even for finite $\lambda$.

Hence, we are left to consider a contribution $I_{\gamma_0}(a,\lambda+j,\lambda x)$ to the integral Eq.~(\ref{eq:int1new-gamma}) coming from a path of steepest descent $\gamma_0$. Making use of Eq.~(\ref{cont-g-0}), we obtain:
\begin{eqnarray}\label{eq:int1new-gamma-000}
\fl
I_{\gamma_0}(a,\lambda+j,\lambda x)=\int_{\gamma_0} z^{a-1} (1-z)^{j-a-1} e^{-\lambda h_0(x;z)} dz = \int_0^{\phi_0} d\phi \, f(\phi) \, e^{-\lambda h(\phi)},
\end{eqnarray}
where
\begin{eqnarray} \fl
\qquad f(\phi) = z_0^\prime(\phi) \, z_0(\phi)^{a-1} (1-z_0(\phi))^{j-a-1} =\left(1 + i\frac{x-\phi}{\cos^2\phi} - i \tan\phi\right) \nonumber\\
 \times  x^{1-j} (x-\phi)^{j-a-1} \left(1 - i \tan \phi\right)^{j-a-1}
    \left( \phi + i (x-\phi)\tan \phi \right)^{a-1}
\end{eqnarray}
and
\begin{eqnarray}\fl \label{hf}
\qquad h(\phi) = h_0(x; z_0(\phi)) = (x-\phi) \tan \phi -\log\left(1 -\frac{\phi}{x} \right) + \log \cos\phi.
\end{eqnarray}
Since the function $h(\phi)$, appearing in the exponent of the integral Eq.~(\ref{eq:int1new-gamma-000}), reaches its minimum at the origin $\phi=0$ and $h(0)=0$, the main contribution to the integral $I_{\gamma_0}(a,\lambda+j,\lambda x)$ comes from a neighborhood of $\phi=0$. To determine the dominating contribution to $I_{\gamma_0}(a,\lambda+j,\lambda x)$, we make use of Theorem 1 from Chapter II in Ref.~\cite{Wong} to write down a $\lambda\rightarrow \infty$ asymptotic expansion
\begin{eqnarray} \label{as-exp}
    I_{\gamma_0}(a,\lambda+j,\lambda x) = e^{-\lambda h(0)}\sum_{k=0}^\infty \Gamma(k+a)\frac{c_k}{\lambda^{k+a}},
\end{eqnarray}
where the coefficients $c_k$ can be related to the coefficients $a_k$ and $b_k$ arising in the expansion of the functions $h$ and $f$ at $\phi=0$,
\begin{eqnarray}\label{eq:hfexp}
    h(\phi)=h(0)+\sum_{k=0}^\infty a_k \phi^{k+1}, \qquad f(\phi)=\sum_{k=0}^\infty b_k \phi^{k+a-1}.
\end{eqnarray}
The leading order term in Eq.~(\ref{as-exp}) originates from $k=0$ for which $c_0=b_0/a_0^a$. Given that
\begin{eqnarray}\label{eq:ab0}
    a_0=x+\frac{1}{x}, \qquad b_0=\left(i+\frac{1}{x}\right)^a,
\end{eqnarray}
we eventually derive the asymptotic expansion of $I_{\gamma_0}(a,\lambda+j,\lambda x)$ in the form
\begin{eqnarray}\label{eq:resIphi}
I_{\gamma_0}(a,\lambda+j,\lambda x) = \frac{\Gamma(a)}{(1- i x)^a \lambda^a}+\mathcal{O}(\lambda^{-a-1})
\end{eqnarray}
as $\lambda\rightarrow\infty$. This essentially finishes the proof after noticing that the pre-factor in Eq.~(\ref{eq:1F1int}) exhibits an asymptotic
behavior
\begin{equation} \label{G-as}
\frac{\Gamma(\lambda+j)}{\Gamma(\lambda+j-a)}=\lambda^a+\mathcal{O}(\lambda^{a-1})
\end{equation}
as $\lambda\rightarrow\infty$ while $a$ and $j$ are kept fixed.

Finally, since for any $x\in\RR$ the saddle point $\phi_*$ of $h_0$, occuring at $\phi_*=1+i/x$, stays away from the origin $\phi=0$ at which the global minimum of $h_0$ is achieved along the path $\gamma$, we conclude that the asymptotic expansion holds uniformly in $x$. Furthermore, since the integrand is regular in the vicinity of the origin $\phi=0$ for any $j$ and $a>0$, the uniformity in $j$ and $a$ is also secured.

\end{proof}

\begin{proposition}\label{1F1-as}
Let $0<b<1$, $x\in\mathbb{R}$ and $j\in{\mathbb R}$. Then, as $\lambda\rightarrow\infty$, the asymptotic expansion   	
	\begin{eqnarray}\label{eq:fixedT1} \fl
    \qquad {}_1 F_1 &(b \lambda+j; \lambda+j; i x) =  e^{i b x}
	\left(
	1 - \frac{x(1-b)(bx-2ij)}{2\lambda} + {\mathcal O}(\lambda^{-2})
	\right)
	\end{eqnarray}
holds uniformly for any bounded $x$ and $j$.
\end{proposition}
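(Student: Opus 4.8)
The plan is to read the left--hand side as a characteristic function and carry out a moment expansion. By the Euler representation Eqs.~(\ref{eq:1F1int})--(\ref{eq:Iabz}), applied with $a=b\lambda+j$ and second parameter $\lambda+j$ (so that the ``$b-a$'' slot equals $(1-b)\lambda$, which is positive once $\lambda$ is large, since $0<b<1$ and $j$ is bounded), one has
\[
    {}_1F_1(b\lambda+j;\lambda+j;ix)=\big\langle e^{ixY_\lambda}\big\rangle,\qquad Y_\lambda\sim{\rm Beta}_1(b\lambda+j,(1-b)\lambda),
\]
exactly as the characteristic--function identity already used in the proof of Theorem~\ref{KNL-theorem}. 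Writing $Y_\lambda=\mu_\lambda+(Y_\lambda-\mu_\lambda)$ with $\mu_\lambda=\langle Y_\lambda\rangle=(b\lambda+j)/(\lambda+j)$ and expanding the exponential gives
\[
    {}_1F_1(b\lambda+j;\lambda+j;ix)=e^{ix\mu_\lambda}\Big(1-\frac{1}{2}x^2\,{\rm var}[Y_\lambda]+\sum_{k\ge3}\frac{(ix)^k}{k!}\,m_k(\lambda)\Big),
\]
where $m_k(\lambda)=\langle(Y_\lambda-\mu_\lambda)^k\rangle$ and $m_1=0$.

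I would then proceed in three steps. (i) From the standard beta formulas, $\mu_\lambda=b+\frac{j(1-b)}{\lambda}+{\mathcal O}(\lambda^{-2})$ and ${\rm var}[Y_\lambda]=\frac{b(1-b)}{\lambda}+{\mathcal O}(\lambda^{-2})$, both uniformly over bounded $j$. (ii) Control the tail: since $|Y_\lambda-\mu_\lambda|\le1$ one has $|m_k(\lambda)|\le m_4(\lambda)$ for every $k\ge4$, while $m_3(\lambda)={\mathcal O}(\lambda^{-2})$ and $m_4(\lambda)={\mathcal O}(\lambda^{-2})$ --- obtained either from the closed--form central moments of the beta law or from a sub--Gaussian concentration bound $\mathbb{P}(|Y_\lambda-\mu_\lambda|>t)\le 2e^{-c\lambda t^2}$ --- so that $\sum_{k\ge3}\frac{(ix)^k}{k!}m_k(\lambda)={\mathcal O}(\lambda^{-2})$ uniformly for bounded $x$. (iii) Expand $e^{ix\mu_\lambda}=e^{ibx}\big(1+\frac{ixj(1-b)}{\lambda}+{\mathcal O}(\lambda^{-2})\big)$ and multiply out, collecting the two $1/\lambda$ contributions $\frac{ixj(1-b)}{\lambda}-\frac{x^2b(1-b)}{2\lambda}=-\frac{x(1-b)(bx-2ij)}{2\lambda}$. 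This is exactly Eq.~(\ref{eq:fixedT1}), and uniformity in bounded $(x,j)$ is automatic because every coefficient and every error bound invoked above is.

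The only genuinely technical point is step (ii) --- producing the $\lambda^{-2}$ bounds on $m_3$ and $m_4$ uniformly in $j$, which forces one either to quote the explicit beta central--moment formulas or a concentration estimate; everything else is elementary algebra. As an alternative route, closer in spirit to Appendix~\ref{A-1} and to the proof of Proposition~\ref{prop1new2}, one may instead apply the Laplace method directly to $\int_0^1 e^{ixt}t^{b\lambda+j-1}(1-t)^{(1-b)\lambda-1}dt$: its integrand is $e^{ixt}t^{j-1}(1-t)^{-1}e^{\lambda\psi(t)}$ with $\psi(t)=b\log t+(1-b)\log(1-t)$, which has a unique interior maximum at $t_\star=b$ with $\psi''(b)=-1/[b(1-b)]$ (the endpoints $t=0,1$ contributing only exponentially small terms), and then match the two--term Laplace expansion against the two--term Stirling expansion of the prefactor $\Gamma(\lambda+j)/[\Gamma(b\lambda+j)\Gamma((1-b)\lambda)]$, the factors $e^{\pm\lambda\psi(b)}$ cancelling; there the main chore is the subleading--coefficient bookkeeping rather than the tail estimate.
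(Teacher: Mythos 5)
Your main argument is correct, and it takes a genuinely different route from the paper. The paper proves Eq.~(\ref{eq:fixedT1}) by writing the Euler integral as $\int_0^1 g(x;t)e^{-\lambda h_b(t)}dt$ with $h_b(t)=-b\log t-(1-b)\log(1-t)$, splitting at the interior minimum $t_*=b$, applying the two-term Laplace expansion (Wong's Theorem~1) with explicit coefficients $\alpha_k,\beta_k(x),\gamma_{2k}(x)$, and cancelling against the two-term Stirling expansion of the prefactor $\Gamma(\lambda+j)/[\Gamma(b\lambda+j)\Gamma(\lambda(1-b))]$ --- i.e.\ exactly the ``alternative route'' you sketch in your last sentences. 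Your primary route instead exploits the fact that ${}_1F_1(b\lambda+j;\lambda+j;ix)$ is the characteristic function of $Y_\lambda\sim{\rm Beta}_1(b\lambda+j,(1-b)\lambda)$ (the same identification the paper uses en route to Eq.~(\ref{beta-major})) and does a central-moment expansion: the mean and variance asymptotics, the termwise expansion (justified since $|Y_\lambda-\mu_\lambda|\le1$ makes the series absolutely convergent), and the collection of the two $1/\lambda$ terms into $-x(1-b)(bx-2ij)/(2\lambda)$ all check out, and uniformity in bounded $(x,j)$ is indeed immediate because all moment formulas are rational in $\lambda$ and $j$. What your approach buys is the avoidance of the subleading-coefficient bookkeeping of the Laplace method and a very transparent uniformity statement; what the paper's approach buys is methodological uniformity with the rest of Appendix~\ref{A-1}, where the argument of ${}_1F_1$ scales with $\lambda$ and a genuine (complex) saddle-point analysis cannot be avoided. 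One small caveat: to reach the stated ${\mathcal O}(\lambda^{-2})$ error you genuinely need the \emph{signed} third central moment $m_3(\lambda)={\mathcal O}(\lambda^{-2})$, which follows from the closed-form beta moment (the cancellation in $\beta-\alpha$ is not the issue here; the point is the extra factor $(\alpha+\beta+2)^{-1}$), but it does \emph{not} follow from the sub-Gaussian concentration bound you mention as an alternative --- that only controls $\langle|Y_\lambda-\mu_\lambda|^3\rangle={\mathcal O}(\lambda^{-3/2})$ and would degrade the remainder to ${\mathcal O}(\lambda^{-3/2})$. So keep the explicit moment formulas for $m_3$ (concentration is fine for the $k\ge4$ tail), and the proof is complete.
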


\begin{proof}
With Eq.~(\ref{eq:1F1int}) in mind, we shall focus on the asymptotic analysis of $I(b\lambda+j,\lambda+j,\lambda x)$, Eq.~(\ref{eq:Iabz}), as $\lambda \rightarrow \infty$. To proceed, we start with the integral representation
\begin{equation}\label{eq:int1}
I(b \lambda+j,\lambda+j,x)=\int_0^1 g(x;t) e^{-\lambda h_b(t)} dt,
\end{equation}
where
\begin{eqnarray}\label{eq:hb}
 h_b(t) &=&  -b \log t -(1-b)\log(1-t)
\end{eqnarray}
and
\begin{eqnarray}
    \label{eq:g}
    g(x;t)&=& \frac{t^{j-1}}{1-t} e^{i x t}.
\end{eqnarray}
Real-valuedness of the exponent $h_b(t)$ allows us to apply the Laplace method, see Theorem 1 from Chapter II in Ref.~\cite{Wong}. A saddle point $t_*$ of
the large-$\lambda$ integral Eq.~(\ref{eq:int1}) is seen to be located at $t_*=b$ at which $h_b(t)$ reaches its minimum for $t\in [0,1]$. In view of Theorem 1 of Ref.~\cite{Wong}, expansions
\begin{eqnarray}\label{eq:hfexp}
h_b(t)=h_b(b)+\sum_{k=0}^\infty \alpha_k (t-b)^{k+2}
\end{eqnarray}
and
\begin{eqnarray}\label{eq:hfexp-2}
g(x;t)=\sum_{k=0}^\infty \beta_k(x) (t-b)^{k}
\end{eqnarray}
written in a vicinity of $t_*=b$ are of particular (technical) interest. There,
\begin{eqnarray} \fl
\qquad \alpha_0=\frac{1}{2b(1-b)},\quad \alpha_1=\frac{2b-1}{3b^2(1-b)^2},\quad \alpha_2=\frac{3b^2-3b+1}{4b^3(1-b)^3},
\end{eqnarray}
and
\begin{eqnarray} \fl
\qquad \beta_0(x)=e^{ibx} \frac{b^{j-1}}{1-b},\quad \beta_1(x)=e^{i b x}\frac{b^{j-2}}{(1-b)^2} \left( 2b-1 +(1-b)(ibx+j)\right),\nonumber\\
\fl \qquad
\beta_2(x)= e^{ibx} \frac{b^{j-3}}{2(b-1)^3}\bigg\{b^2x^2(1-b)^2 + 2i bx\left[b^2(2-j)-b(3-j) -j+1\right]\nonumber\\
\fl \qquad\qquad\qquad\quad -2b^2(2-j)(3-j)+2b(3-j)(1-j)-(2-j)(1-j) \bigg\}.
\end{eqnarray}
To meet requirements of the aforementioned Theorem 1, the point $t_*$ at which the exponent $h_b(t)$ reaches its minimum should coincide with a lower limit of the integral Eq.~(\ref{eq:int1}). To achieve this, one has to split the integral at the saddle point $t_*$ into two integrals over domains $(0, t_*)$ and $(t_*,1)$. It appears that their asymptotic expansions, as $\lambda\rightarrow\infty$, partially cancel each other furnishing a series
\begin{eqnarray}\label{eq:int2}
I(b\lambda+j,\lambda+j,x)=2 e^{-\lambda h_b(b)}\sum_{k=0}^\infty \Gamma\left(\frac{2k+1}{2}\right)\frac{\gamma_{2k}(x)}{\lambda^{(2k+1)/2}}
\end{eqnarray}
containing functions $\gamma_{2k}(x)$ labeled with even indices. Here,
\begin{eqnarray}\fl \quad
 \gamma_0(x) =\frac{1}{2\alpha_0^{1/2}}\,\beta_0(x)=e^{i bx}\frac{b^j}{\sqrt{2b(1-b)}},\nonumber\\
 \fl \quad \gamma_2(x)=\frac{1}{2\alpha_0^{3/2}}\left( \beta_2(x)-\frac{3\alpha_1}{2 \alpha_0} \beta_1(x) + \frac{3 (5\alpha_1^2-4\alpha_0 \alpha_2)}{8 \alpha_0^2}\,
 \beta_0(x)\right)\nonumber\\
       \fl \quad  \qquad =e^{i b x}\frac{b^{j-3/2}}{6\sqrt{2} (1-b)^{3/2}} \, \left[ (1-b)^2(-6b^2x^2+12ibjx+6j^2-6j+1) + b\right].
\end{eqnarray}
Keeping the $k=0$ and $k=1$ contributions in Eq.~(\ref{eq:int2}), we derive
\begin{eqnarray}\label{eq:Iresult}\fl
I(b \lambda+j,\lambda+j,x)= e^{i bx} \frac{\sqrt{2\pi} b^{\lambda b+j-1/2} (1-b)^{\lambda(1-b)-1/2}}{\sqrt{\lambda}}\nonumber\\
\fl
\qquad\qquad\times \left[1+\frac{(1-b)^2 \left(-6b^2x^2+12ibjx+6j^2-6j+1 \right)+b}{12b(1-b) \lambda} + \mathcal{O}(\lambda^{-2})\right].
\end{eqnarray}
Combined with the large-$\lambda$ expansion of the pre-factor in Eq.~(\ref{eq:1F1int}),
\begin{eqnarray}\label{eq:prefactor1}\fl
\frac{\Gamma(\lambda+j)}{\Gamma(b\lambda+j) \Gamma(\lambda(1-b))} =\frac{\sqrt{\lambda}}{\sqrt{2\pi}  b^{\lambda b+j-1/2} (1-b)^{\lambda(1-b)-1/2}}\nonumber\\
\qquad\times\left[1-\frac{(1-b)^2\left( 6j^2-6j+1 \right)+b}{12b(1-b)\lambda} +\mathcal{O}(\lambda^{-2})\right].
\end{eqnarray}
this yields the sought Eq.~(\ref{eq:fixedT1}).

Finally, we notice that since the minimum of $h_b(t)$ at $t=b$ is independent of $x$ and $j$ and the integrand does not posses singularities in the vicinity of $t=b$, the asymptotic expansion Eq.~(\ref{eq:fixedT1}) holds uniformly for any bounded $x$ and $j$.
\end{proof}

\begin{lemma} \label{L-sp}
Let
\begin{eqnarray} \label{hbxz}
h_b(x;z) = - i x z - b \log z - (1-b) \log (1-z)
\end{eqnarray}
be a function of $z \in {\mathbb C}$ with $0<b<1$ and $x \ge 0$ possessing two saddle points
\begin{eqnarray}
    z_{\pm} = \frac{1}{2} + \frac{i}{2x} \left(
        1 \pm \sqrt{1-2 i x (1-2b) - x^2}
    \right),
\end{eqnarray}
and let $\gamma_b$ be a path in the complex plane such that ${\rm Im} h_b(x;z) = {\rm Im} h_b(x;z_-)$ along $\gamma_b$. Then, for $x$ small enough, such a path connects the points $z=0$ and $z=1$ passing through the saddle point $z_-$ but not through $z_+$.
\end{lemma}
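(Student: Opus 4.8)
The plan is to treat Lemma~\ref{L-sp} by a steepest-descent topology argument that perturbs off the exactly solvable case $x=0$, in the same spirit as the proof of Proposition~\ref{prop1new2}.

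First I would verify the saddle points and their small-$x$ behaviour. Setting $\partial_z h_b(x;z)=0$ and multiplying by $z(1-z)$ reduces the critical-point equation to $\ii x z^2+(1-\ii x)z-b=0$, whose roots are exactly $z_\pm$ with discriminant $1-2\ii x(1-2b)-x^2$. Expanding the square root gives $z_-=b+\ii x\,b(1-b)+\mathcal{O}(x^2)$ and $z_+=(1-b)+\ii/x+\mathcal{O}(x)$, so as $x\to 0^+$ one has $z_-\to b\in(0,1)$ and $|z_+|\to\infty$. I would also record $\partial_z^2 h_b(x;z_-)\to \partial_z^2 h_b(0;b)=1/(b(1-b))>0$, so for $x$ small the saddle $z_-$ is nondegenerate; consequently, near $z_-$ the level set $\{\,\Im h_b(x;\cdot)=\Im h_b(x;z_-)\,\}$ is a transversal crossing of two analytic arcs, along one opposite pair of directions of which (the \emph{ascent rays}) $\Re h_b$ strictly increases.

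Next I would install the $x=0$ reference picture. For $x=0$ the function $h_b(0;z)=-b\log z-(1-b)\log(1-z)$ is real and strictly convex on $(0,1)$, with minimum at $z=b$ and $h_b(0;z)\to+\infty$ as $z\to 0^+$ or $z\to 1^-$; hence the real segment $[0,1]$ is precisely the constant-phase (steepest-ascent) contour through the saddle $z=b$, joining $0$ to $1$. For small $x>0$ I would then invoke structural stability of steepest-descent contours under analytic perturbation of the phase (along the lines of Chapter~II of Ref.~\cite{Wong}, using the implicit-function-theorem dependence $x\mapsto z_-(x)$ and the local normal form at the nondegenerate saddle): fix a tubular neighbourhood $U$ of $[0,1]$ avoiding $z=0$, $z=1$ and, for $x$ small, $z_+$; on $\overline U$ the map $h_b(x;\cdot)$ has no critical point besides $z_-$, and $\Re h_b(x;\cdot)$ has no interior local maximum, so continuity in $x$ forces the two ascent rays issuing from $z_-(x)$ to remain in $U$ and to terminate at the only points of $\overline U$ where $\Re h_b(x;\cdot)=+\infty$, namely $z=0$ and $z=1$. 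Their union is the desired path $\gamma_b$, through $z_-$ and joining $0$ to $1$; since $\gamma_b\subset U$ while $|z_+|\to\infty$, we get $z_+\notin\gamma_b$ for $x$ small enough. (Alternatively, one checks $\Im h_b(x;z_-)\to 0$ while $\Im h_b(x;z_+)\to\tfrac{\pi}{2}(1-2b)$ using $z_+\sim\ii/x$, so the two saddles lie on different level sets whenever $b\neq\tfrac12$; the size argument covers all $b\in(0,1)$ uniformly and is cleaner.)

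The main obstacle is the global control in the previous paragraph: a priori an ascent ray leaving $z_-(x)$ could drift away from $[0,1]$, self-intersect, or escape to infinity — genuine escape being conceivable since $\Re h_b(x;z)\to+\infty$ also as $z\to\ii\infty$, in the very region that harbours $z_+$. The fix is to trap the whole contour in a fixed neighbourhood of the segment by continuity with the $x=0$ picture; turning ``structural stability of steepest-descent paths'' into a rigorous statement — via a compactness argument on $\overline U$, where $h_b(x;\cdot)$ has no critical point other than $z_-$ and $\Re h_b(x;\cdot)$ has no interior maximum — is the only substantive work. This geometry is exactly the input required for the small-$x$ expansion of Proposition~\ref{prop:1F1new}.
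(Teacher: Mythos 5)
Your preliminary work is sound and your overall strategy — perturbing off the exactly solvable $x=0$ picture, where the constant-phase path through $z_-\to b$ is the segment $(0,1)$, and disposing of $z_+$ via $|z_+|\to\infty$ — is the same one the paper uses. The saddle-point computation, the small-$x$ expansions of $z_\pm$, the nondegeneracy $\partial_z^2h_b(0;b)=1/(b(1-b))$, and the side remark that $\Im h_b(x;z_+)\to\tfrac{\pi}{2}(1-2b)$ while $\Im h_b(x;z_-)\to 0$ are all correct. However, the crux of the lemma — that for small $x>0$ the constant-phase ascent path issuing from $z_-$ actually terminates at $z=0$ and $z=1$ rather than leaving a neighbourhood of the segment — is precisely the step you do not prove: you invoke ``structural stability of steepest-descent contours'' and then concede that turning it into a rigorous statement ``is the only substantive work.'' As written, the trapping argument is not forced by what you state: the level set $\{\Im h_b(x;\cdot)=\Im h_b(x;z_-)\}$ also contains the transversal (descent) branches, which do cross the lateral boundary of any thin tube around $[0,1]$, so ``no interior critical point, no interior maximum of $\Re h_b$'' does not by itself prevent an ascent ray from exiting through $\partial U$ at finite $\Re h_b$; continuity in $x$ must be applied to the ascent branches specifically, and that is the missing argument. (There is also an internal inconsistency in the set-up: a tubular neighbourhood of $[0,1]$ cannot ``avoid $z=0$, $z=1$'' and simultaneously have these as the points of $\overline U$ where $\Re h_b=+\infty$.)

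The paper closes exactly this gap by making the continuity argument concrete: it parameterizes the level curve through $z_-$ as a graph $z_b(\phi)=\phi+i\chi_b(\phi)$ over the real coordinate, with $\chi_b$ defined by the explicit equation $-x\phi-b\arctan(\chi_b/\phi)-(1-b)\arctan\bigl(\chi_b/(\phi-1)\bigr)=\Im h_b(x;z_-)$, notes that $\chi_b$ depends continuously on $x$ and degenerates to $\chi_b\equiv 0$ (the segment) at $x=0$, and then shows that the endpoints are attained by contradiction: if $\chi_b(\phi)\to\chi_*\neq 0$ as $\phi\to 0$, the limiting equation forces $\chi_*=\tan\bigl[\tfrac{1}{1-b}\bigl(\tfrac{\pi b}{2}+\Im h_b(x;z_-)\bigr)\bigr]$, which at $x=0$ is nonzero and contradicts the degeneration to the real segment; the same reasoning handles $z=1$, and only then does $|z_+|\to\infty$ legitimately imply $z_+\notin\gamma_b$. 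So your proposal would be completed either by carrying out this explicit graph parameterization (the paper's route) or by genuinely proving your compactness/stability claim for the ascent branches; as it stands, the central assertion of the lemma is assumed rather than demonstrated.
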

\begin{proof}
    Due to a forthcoming application of this lemma to an asymptotic evaluation of the integral Eq.~(\ref{pst-03}), we shall call $\gamma_b$ a path of steepest descent throughout the proof. Let us parameterize the path $\gamma_b$ passing through the saddle point $z_-$ as
\begin{eqnarray}
    z_b(\phi) = \phi + i \chi_b(\phi),
\end{eqnarray}
where $\chi_b(\phi) \in {\mathbb R}$ is a solution to the equation ${\rm Im}\, h_b(x;z_b(\phi)) = {\rm Im}\, h_b(x;z_-)$ whose explicit form reads
\begin{eqnarray} \fl \label{sd-b}
     - x\phi - b \arctan\left( \frac{\chi_b(\phi)}{\phi}\right)
     - (1-b) \arctan\left( \frac{\chi_b(\phi)}{\phi-1}\right) = {\rm Im}\, h_b(x;z_-).
\end{eqnarray}
Since ${\rm Im}\, h_b(x;z_-)$ is a continuous function of $x$ in a vicinity of $x=0$, then so is its solution $\chi_b(x;\phi)$. As $x\rightarrow 0$, the path $\gamma_b$ approaches the real line, becoming a segment $(0,1)$ at $x=0$. This is consistent with the behavior of $z_-$ as a function of $x$ as $x\rightarrow 0$ such that $z_{-}= b$ at $x=0$.

To prove that $z=0$ belongs to the path of steepest descent for sufficiently small $x$, we must show that $\chi(\phi) \rightarrow 0$ as $\phi \rightarrow 0$. Let us assume that this is not the case, that is  $\chi(\phi) \rightarrow \chi_* \neq 0$. Then the $\phi \rightarrow 0$ limit of Eq.~(\ref{sd-b}) yields
\begin{eqnarray}
    \chi_* = \tan \left[
        \frac{1}{1-b} \left(
            \frac{\pi b}{2} + {\rm Im}\, h_b(x;z_-)
        \right)
    \right]
\end{eqnarray}
which should hold in a vicinity of $x=0$. In particular, at $x=0$, it brings $\chi_* \neq 0$ which contradicts our observation that $\gamma_b$ becomes a segment $(0,1)$ along the real line at $x=0$.  (Essentially, this is secured by continuity of $\chi_b(x;\phi)$.) Hence, $z=0$ does belong to $\gamma_b$. A similar reasoning can be applied to show that the point $z=1$ belongs to $\gamma_b$ as well.

Finally, spotting that $|z_+| \rightarrow \infty$ as $x \rightarrow 0$, we conclude that $z_+$ stays away from $\gamma_b$. The path of steepest descent $\gamma_b$ is illustrated in the left panel of Fig.~\ref{fig:SPPP}.
\end{proof}

\begin{remark}\label{x0-3-cases}
    The topology of the steepest descent path depends on the value of $x$. For $x$ small enough, this path was discussed in Lemma~\ref{L-sp}. As $x$ grows, one approaches a critical situation in which a path of steepest descent goes through both saddle points $z_-$ and $z_+$ as illustrated in the middle panel of Fig.~\ref{fig:SPPP}. A critical value of $x$ is determined by the equation ${\rm Im\,} h_b(x;z_-) = {\rm Im\,} h_b(x;z_+)$. Further increase of $x$ drives a steepest descent path connecting the points $z=0$ and $z_-$ to infinity (see red curve in the right panel of Fig.~\ref{fig:SPPP}). To reach the endpoint $z=1$, one has to choose another steepest descent path originating at infinity and approaching $z=1$ through $z_+$ as depicted by a blue curve in the right panel of Fig.~\ref{fig:SPPP}.
\end{remark}
\begin{figure}[b]
\begin{center}
\includegraphics[width=0.33\textwidth]{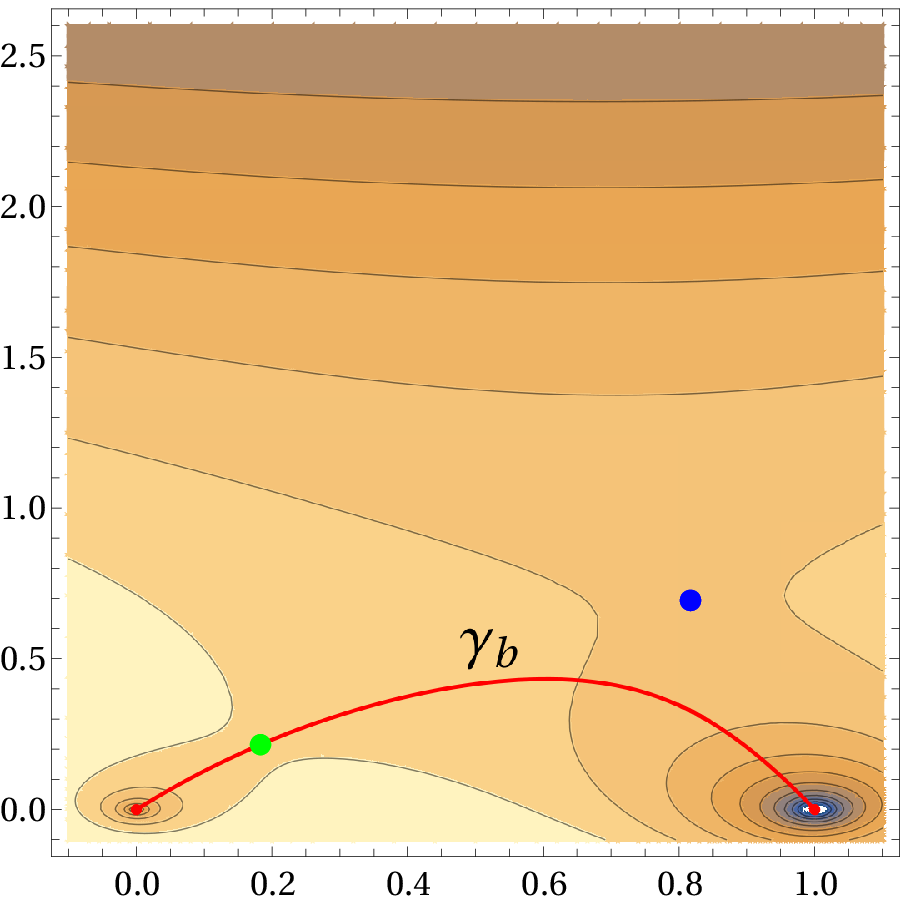}\hfill\includegraphics[width=0.33\textwidth]{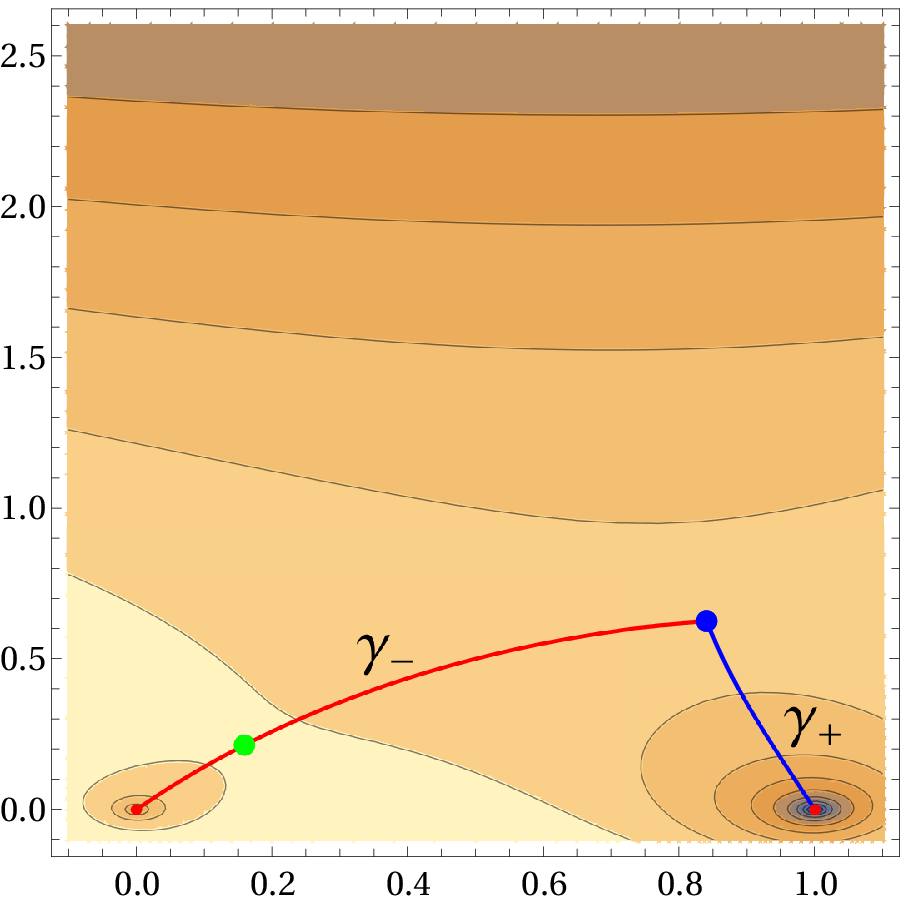}\hfill\includegraphics[width=0.33\textwidth]{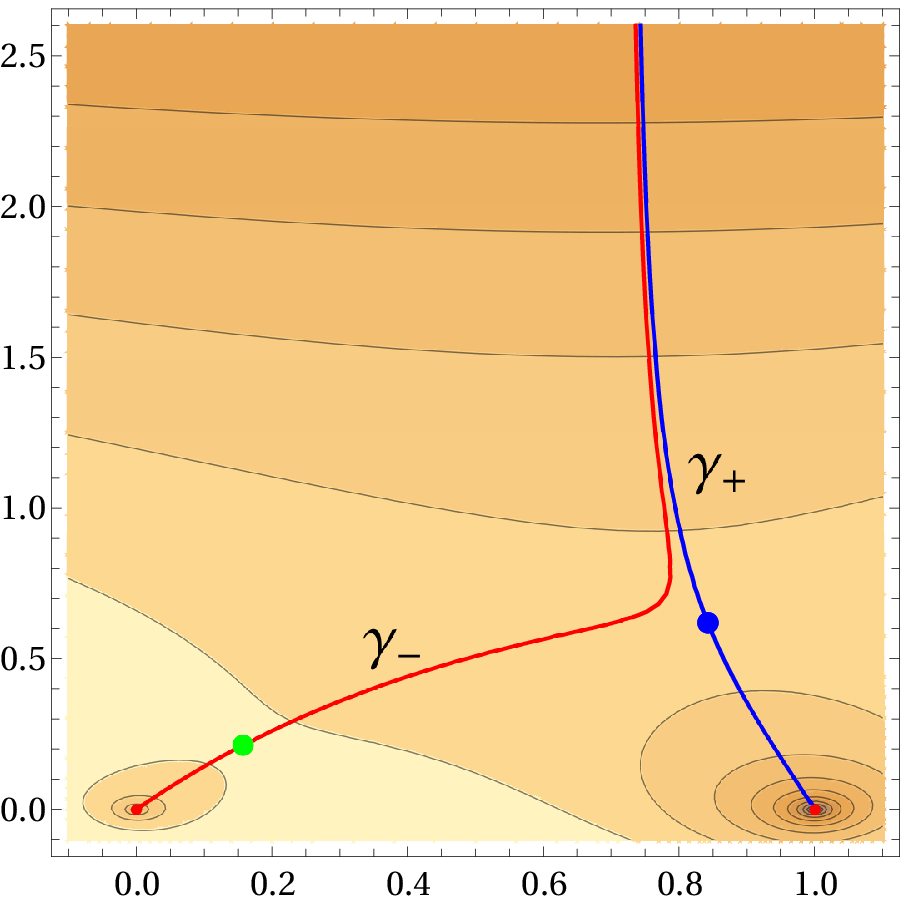}
\caption{Three topologies of a steepest descent path for $h_b(x;z)$ defined by Eq.~(\ref{hbxz}). Left panel: a path $\gamma_b$ described in Lemma~\ref{L-sp} ($x$ is small enough). Middle and right panels illustrate steepest descent paths discussed in Remark~\ref{x0-3-cases}. Colored background is a contour plot of ${\rm Re\,} h_b(x;z)$ as a function of $z\in\CC$. Green and blue dots denote the saddle points $z_-$ and $z_+$, respectively.} \label{fig:SPPP}
\end{center}
\end{figure}

\begin{proposition}\label{prop:1F1new}
Let $0<b<1$, $x\in\mathbb{R}$, $j\in{\mathbb R}$ and $x_0$ be sufficiently small. Then, as $\lambda\rightarrow\infty$, the asymptotic expansion   	
\begin{eqnarray} \fl \label{pst-prop-main}
\qquad {}_1F_1(b \lambda+j;\lambda+j; i x \lambda)= e^{i x z_- \lambda} e^{-\lambda \left( h_b(0;z_-) -h_b(0; b)  \right)} \nonumber\\
    \qquad \qquad \times
\left(
    \left( \frac{z_-}{b} \right)^j \frac{\sqrt{b(1-b)}}{\sqrt{z_-^2 -2 b z_- +b}} + {\mathcal O}\left( \frac{1}{\lambda} \right)
\right),
\end{eqnarray}
where
\begin{eqnarray}\label{z0}
    z_- =  \frac{1}{2} + \frac{i}{2x} \left(
        1 - \sqrt{1-2 i x (1-2b) - x^2}
    \right)
\end{eqnarray}
and
\begin{eqnarray}
    h_b(0;z) = - b \log z - (1-b) \log (1-z),
\end{eqnarray}
holds uniformly for all $x \in [0,x_0]$.
\end{proposition}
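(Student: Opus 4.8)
The plan is to prove Eq.~(\ref{pst-prop-main}) by a steepest-descent (Laplace) analysis of the integral representation Eq.~(\ref{eq:1F1int}), leaning on the saddle-point geometry already established in Lemma~\ref{L-sp}. Specialising Eqs.~(\ref{eq:1F1int})--(\ref{eq:Iabz}) to first parameter $b\lambda+j$, second parameter $\lambda+j$, and argument $ix\lambda$, and separating the $\lambda$-dependent exponential from the slowly varying factor, one gets
\[
{}_1F_1(b\lambda+j;\lambda+j;ix\lambda)=\frac{\Gamma(\lambda+j)}{\Gamma(b\lambda+j)\,\Gamma(\lambda(1-b))}\int_0^1\frac{t^{j-1}}{1-t}\,e^{-\lambda h_b(x;t)}\,dt,
\]
with $h_b(x;\cdot)$ as in Eq.~(\ref{hbxz}). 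This exponent has the two saddle points $z_\pm$ of Lemma~\ref{L-sp}; as $x\to 0$ one has $z_-\to b\in(0,1)$ while $|z_+|\to\infty$, so for $x$ small the relevant saddle is $z_-$.

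Next I would deform the segment $(0,1)$ onto the steepest-descent contour $\gamma_b$ of Lemma~\ref{L-sp}. The integrand $z^{j-1}(1-z)^{-1}e^{-\lambda h_b(x;z)}$ is single-valued and holomorphic in the plane cut along $(-\infty,0]$ and $[1,\infty)$, and both $0$ and $1$ are common endpoints of $(0,1)$ and of $\gamma_b$, which (for $x\le x_0$ small, so that $\gamma_b$ is in the non-critical topology of Lemma~\ref{L-sp} and Remark~\ref{x0-3-cases}) stays in the strip $0<\mathrm{Re}\,z<1$; hence Cauchy's theorem moves the integral onto $\gamma_b$ with no extra contributions. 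On $\gamma_b$ the quantity $\mathrm{Im}\,h_b(x;\cdot)$ is constant and $\mathrm{Re}\,h_b(x;\cdot)$ attains its minimum at the interior saddle $z_-$ (this is what makes $\gamma_b$ a genuine steepest-descent arc; for small $x$ it follows by continuity from $x=0$, where $h_b(0;t)=-b\log t-(1-b)\log(1-t)$ is convex on $(0,1)$ with minimum at $t=b$). Since $z_-$ is a simple, interior saddle and the slowly varying factor $z^{j-1}(1-z)^{-1}$ is analytic near it, the standard saddle-point lemma (Theorem~1, Chapter~II of Ref.~\cite{Wong}) gives
\[
\int_{\gamma_b}\frac{z^{j-1}}{1-z}\,e^{-\lambda h_b(x;z)}\,dz=\frac{z_-^{j-1}}{1-z_-}\,e^{-\lambda h_b(x;z_-)}\sqrt{\frac{2\pi}{\lambda\,h_b''(x;z_-)}}\,\Bigl(1+{\mathcal O}(\lambda^{-1})\Bigr),
\]
with the branch of the square root fixed by continuity from $x=0$ (there the contour runs along the real axis and $h_b''(0;b)=1/(b(1-b))>0$).

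Then I would simplify. From $h_b'(x;z)=-ix-b/z+(1-b)/(1-z)$ and $h_b''(x;z)=b/z^2+(1-b)/(1-z)^2$ one obtains, with no use of the saddle equation,
\[
h_b''(x;z_-)=\frac{b(1-z_-)^2+(1-b)z_-^2}{z_-^2(1-z_-)^2}=\frac{z_-^2-2bz_-+b}{z_-^2(1-z_-)^2},
\]
so the steepest-descent integral equals $\sqrt{2\pi/\lambda}\;z_-^{j}\,(z_-^2-2bz_-+b)^{-1/2}\,e^{-\lambda h_b(x;z_-)}\bigl(1+{\mathcal O}(\lambda^{-1})\bigr)$. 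Multiplying by the Stirling expansion of the prefactor, available as Eq.~(\ref{eq:prefactor1}), and using $h_b(x;z_-)=-ixz_-+h_b(0;z_-)$ together with $b^{\lambda b}(1-b)^{\lambda(1-b)}=e^{-\lambda h_b(0;b)}$, the $\sqrt\lambda$ factors cancel, the exponential $e^{\lambda h_b(0;b)}e^{-\lambda h_b(x;z_-)}$ turns into $e^{ixz_-\lambda}e^{-\lambda(h_b(0;z_-)-h_b(0;b))}$, and the algebraic prefactors collapse to $(z_-/b)^j\sqrt{b(1-b)}$, yielding exactly Eq.~(\ref{pst-prop-main}). As a check, at $x=0$ one has $z_-=b$, the prefactor-exponential is $1$, the bracket is $\sqrt{b(1-b)}/\sqrt{b(1-b)}=1$, and ${}_1F_1(b\lambda+j;\lambda+j;0)=1$; the leading coefficient is nonzero throughout $[0,x_0]$ since $h_b''(x;z_-)\neq0$ there, so the remainder statement is meaningful.

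Uniformity on $[0,x_0]$ then follows because, for $x_0$ small, Lemma~\ref{L-sp} fixes the contour topology, $z_-$ stays in a fixed compact subset of the strip $0<\mathrm{Re}\,z<1$ bounded away from $0$, $1$ and $z_+$, and $h_b''(x;z_-)$ is bounded away from $0$; the error terms in Theorem~1 of Ref.~\cite{Wong} and in Eq.~(\ref{eq:prefactor1}) are therefore uniform. I expect the only genuinely delicate points — and the places where "$x_0$ sufficiently small" is actually used — to be (a) tracking the correct branch of $\sqrt{h_b''(x;z_-)}$ by analytic continuation from $x=0$, and (b) confirming, via Lemma~\ref{L-sp} and Remark~\ref{x0-3-cases}, that $\gamma_b$ remains a bona fide steepest-descent arc through $z_-$ (and not through $z_+$, nor escaping to infinity) for every $x\in[0,x_0]$; everything else is routine bookkeeping.
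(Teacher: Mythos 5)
Your proposal is correct and follows essentially the same route as the paper: the integral representation Eq.~(\ref{eq:1F1int}), deformation onto the steepest-descent contour $\gamma_b$ through $z_-$ guaranteed by Lemma~\ref{L-sp} for $x\in[0,x_0]$, the standard interior saddle-point expansion (Wong), and the Stirling asymptotics of the Gamma prefactor, with your Eq.~(\ref{eq:prefactor1}) being equivalent to the paper's Eq.~(\ref{G-3-exp}) written via $e^{\lambda h_b(0;b)}$. The algebraic simplification of $g_j(z_-)/\sqrt{h_b''(x;z_-)}$ and the uniformity/branch discussion also match the paper's argument, so nothing is missing.
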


\begin{proof}
    Owing to Eq.~(\ref{eq:1F1int}), we have
    \begin{eqnarray}\fl \label{pst-01}
      \qquad  {}_1 F_1 (b \lambda+j;\lambda+j; i x \lambda)= \frac{\Gamma(\lambda+j)}{\Gamma(b(\lambda+j)) \Gamma((1-b)\lambda)}
\, I(b \lambda+j;\lambda+j; x \lambda),
    \end{eqnarray}
where
\begin{eqnarray}\label{pst-02}
        I(b \lambda+j;\lambda+j; x \lambda) = \int_{0}^{1} g_j(t) \, e^{-\lambda h_b(x;t)} dt,
\end{eqnarray}
\begin{eqnarray}
        g_j(t) = \frac{t^{j-1}}{1-t},
\end{eqnarray}
and $h_b(x;t)$ is defined in Eq.~(\ref{hbxz}). Since the asymptotic behavior of gamma functions in Eq.~(\ref{pst-01}), as $\lambda \rightarrow \infty$, is well known,
\begin{eqnarray} \label{G-3-exp} \fl
    \qquad \frac{\Gamma(\lambda+j)}{\Gamma(b(\lambda+j)) \Gamma((1-b)\lambda)} = \sqrt{{\frac{b(1-b)}{2\pi}}} \frac{\sqrt{\lambda}}{b^j}
     \, e^{\lambda h_b(0;b)} \left( 1 + {\mathcal O}\left( \frac{1}{\lambda} \right) \right),
\end{eqnarray}
we turn to the asymptotic evaluation of the integral Eq.~(\ref{pst-02}).

To proceed, we deform the integration path therein to a path of steepest descent, $\gamma_b$, lying in the complex plane and connecting the endpoints $z=0$ and $z=1$ of the integral through the saddle point $z_-$, see Eq.~(\ref{z0}),
\begin{eqnarray}\label{pst-03}
        I(b \lambda+j;\lambda+j; x \lambda) = \int_{\gamma_b} g_j(z) \, e^{-\lambda h_b(x;z)} dz.
\end{eqnarray}
Existence of such a path for $x$ small enough was proven in Lemma~\ref{L-sp}. As $\lambda\rightarrow \infty$, the integral Eq.~(\ref{pst-03}) is dominated by a vicinity of the saddle point $z_-$:
\begin{eqnarray}\label{pst-04}\fl
\qquad        I(b \lambda+j;\lambda+j; x \lambda) = \int_{\gamma_b} g_j(z) \, e^{-\lambda h_b(x;z)} dz \nonumber\\
        = \sqrt{\frac{2\pi}{\lambda}} g_j(z_-) e^{-\lambda h_b(x;z_-)} \frac{1}{\sqrt{h_b^{\prime\prime}(x;z_-)}}
         \left( 1 + {\mathcal O}\left( \frac{1}{\lambda} \right) \right).
\end{eqnarray}
Combining Eqs.~(\ref{pst-01}), (\ref{G-3-exp}) and (\ref{pst-04}), we end up with Eq.~(\ref{pst-prop-main}).
\end{proof}

\begin{remark}
  There are strong indications that the asymptotic result Eq.~(\ref{pst-prop-main}) holds true irrespective of the smallness of $x$ provided that $b \neq 1/2$. According to Remark~\ref{x0-3-cases}, an asymptotic evaluation of the integral Eq.~(\ref{pst-03}) for larger values of $x$ necessitates a deformation of the integration contour which has to pass through {\it both} saddle points $z_-$ and $z_+$, see the middle and right panels in Fig.~\ref{fig:SPPP}. Interestingly, even though both saddle points belong to the integration path, numerics provides an evidence that the integral Eq.~(\ref{pst-03}) is still dominated by the contribution of $z_-$. This is not the case for $b=1/2$ when both saddle points become equally important.
\end{remark}

\newpage

\section*{References}
\fancyhead{} \fancyhead[RE,LO]{\textsl{References}}
\fancyhead[LE,RO]{\thepage}

\end{document}